\definecolor{DarkGreen}{rgb}{0.1,0.5,0.1}
\definecolor{DarkRed}{rgb}{0.5,0.1,0.1}
\definecolor{DarkBlue}{rgb}{0.1,0.1,0.5}
\newtheorem*{namedtheorem}{\theoremname}
\newcommand{\theoremname}{testing}
\newtheorem{lem}[theorem]{Lemma}
\newtheorem{prop}[theorem]{Proposition}
\newtheorem*{question*}{Question}
\newtheorem{Alg}{Algorithm}
\definecolor{DarkGreen}{rgb}{0.1,0.5,0.1}
\definecolor{DarkRed}{rgb}{0.5,0.1,0.1}
\definecolor{DarkBlue}{rgb}{0.1,0.1,0.5}
\newcommand{\ignore}[1]{}
\renewcommand{\Pr}{\mathop{\bf Pr\/}}                    
\newcommand{\E}{\mathop{\bf E\/}}
\newcommand{\Var}{\mathop{\bf Var\/}}
\newcommand{\KL}{\mathop{\bf KL\/}}
\newcommand{\SKL}{\mathop{\bf SKL\/}}
\newcommand{\TV}{\mathop{\bf TV\/}}
\newcommand{\Hel}{\mathop{\bf H\/}}
\newcommand{\Ber}{\mathop{\bf Ber\/}}
\newcommand{\Proj}{\operatorname{Proj}}
\DeclareMathOperator{\sgn}{sgn}
\newcommand{\R}{\mathbb R}
\newcommand{\eps}{\varepsilon}
\newcommand{\fnote}[1]{\textcolor{orange}{[Fred: #1]}}
\newcommand{\vnote}[1]{\textcolor{purple}{[Vishesh: #1]}}
\renewcommand{\floatc@ruled}[2]{\vspace{2pt}{\@fs@cfont \#1.\:} \#2 \par
 \vspace{1pt}}
\begin{document}
\title{Accuracy-Memory Tradeoffs and Phase Transitions in~ \\ Belief Propagation}
\coltauthor{%
\Name{Vishesh Jain}
\Email{visheshj@mit.edu}\\
\Name{Frederic Koehler} \Email{fkoehler@mit.edu}\\
\addr Massachusetts Institute of Technology. Department of Mathematics.
\AND 
\Name{Jingbo Liu} \Email{jingbo@mit.edu} \\
\addr Massachusetts Institute of Technology. IDSS.
\AND
\Name{Elchanan Mossel} \Email{elmos@mit.edu} \\
\addr Massachusetts Institute of Technology. Department of Mathematics and IDSS.
}

\makeatletter
\def\blfootnote{\xdef\@thefnmark{}\@footnotetext}
\makeatother

\maketitle
\begin{abstract}
\blfootnote{Authors are sorted alphabetically.}
    The analysis of Belief Propagation and other algorithms for the {\em reconstruction problem} plays a key role in the analysis of community detection in inference on graphs, phylogenetic reconstruction in bioinformatics, and the cavity method in statistical physics. 
    We prove a conjecture of Evans, Kenyon, Peres, and Schulman (2000) which states that any bounded memory message passing algorithm is statistically much weaker than Belief Propagation for the reconstruction problem. More formally, any recursive algorithm with bounded memory for the  reconstruction problem on the trees with the binary symmetric channel has a phase transition strictly below the Belief Propagation threshold, also known as the Kesten-Stigum bound. The proof combines in novel fashion tools from recursive reconstruction, information theory, and optimal transport, and also establishes an asymptotic normality result for BP and other message-passing algorithms near the critical threshold.

\end{abstract}

\section{Introduction}
Belief Propagation is one of the most popular algorithms in graphical models. The main result of this paper (Theorem~\ref{thm:main}) shows that bounded memory variants of Belief Propagation have no asymptotic statistical power in regimes where Belief Propagation does. This proves a long-standing conjecture (Conjecture~\ref{conj:peres})  \citep{evans2000broadcasting}.

    \vspace{.1cm}
\textbf{Belief Propagation: }Belief Propagation (BP) is one of most popular algorithms in machine learning and probabilistic inference \citep{Pearl88}. 
It is also a key algorithm and a key analytic tool in statistical physics with applications to inference problems and coding 
where it is an important ingredient of replica analysis (e.g.~\cite{mezard-montanari}),
and in probability theory, where it is studied under the names of ``broadcasting on trees'' and the ``reconstruction problem on trees'' (e.g.~\cite{mossel2004survey}). 
The analysis of BP on trees plays a crucial role in many inference problems arising in different fields. In the problem of phylogenetic reconstruction arising in biology, both belief propagation and bounded memory algorithms like recursive majority have been extensively studied in theory and practice, and they have also played an important role in works on learning phylogenies (the underlying tree structure): see  e.g.~\cite{mossel2004phase,daskalakis2006optimal}.
In the analysis of community detection in block models, BP on trees and related message-passing algorithms (e.g. linearizations of BP) play a fundamental role in predicting and rigorously analyzing the recoverability of community structure, as the sparse SBM is locally tree-like: see e.g. \cite{decelle2011inference,krzakala2013spectral,mossel2015reconstruction,mossel2014belief}.
Finally, there is a long history of BP being studied and used in the theory of error correcting codes (e.g.~\cite{RichardsonUrbanke:01,montanari2005tight}).

One of the reasons for the popularity of BP is the fact that its time complexity is linear in the number of nodes in the factor graph  (assuming real-number operations count as one operation),
while a brute force algorithm generally has exponential complexity. 
Given that the algorithm is a simple recursive algorithm that is easy to implement and runs in linear time, it is natural to ask how fragile it is. 
In particular, are there bounded (bit) memory variants of the algorithm that are as statistically efficient as the algorithm itself? Is the algorithm robust to a small amount of noise during it execution? These natural problems, which were open for almost two decades, intimately relate to a recent impressive body of work in machine learning which tries to understand the statistical implications of computationally limited algorithms.

    \vspace{.1cm}
\textbf{Statistical efficiency of computationally efficient algorithms: }Understanding the power of computationally limited algorithms for inference tasks is a major (or perhaps the) task of computational learning theory. A recent trend in this area deals with reductions between inference problems on graphs that are known to be informational theoretically solvable but are assumed to be unsolvable in polynomial time. Thus a recent line of work including \cite{berthet2013optimal,ma2015computational,brennan2018reducibility}
aims to prove that for certain inference problems a \emph{computational-statistical gap} exists: more data is needed to infer in polynomial time than is needed information theoretically. For many other problems, no reductions are known though it is believed that similar phenomena occur. 
Some notable recent examples include the multi-community stochastic block model \citep{decelle2011inference,zdeborova2016statistical} and sparse linear regression (see e.g. \cite{zhang2017optimal}). 
Interestingly, in most of the problems discussed above, BP and other message passing algorithms are either known or conjectured to be the optimal algorithms among all computationally efficient algorithms. 

Can statements proving computational-statistical gaps be proved unconditionally (i.e. not by reduction)? Very impressive results were recently proven by \cite{raz2018fast} and follow up work \citep{raz2017time,kol2017time,garg2018extractor,moshkovitz2017mixing} for learning tasks with bounded memory, where it is shown that unless the memory is quadratic in the instance size, the running time of the algorithm has to be exponential.

    \vspace{.1cm}
{\bf Communication complexity of distributed estimation: }In many problems concerning the trade-offs between the communication complexity and the statistical risk (or other system performance measures),
a useful tool for establishing lower bounds is the \emph{strong data processing inequality},
which dictates how fast the mutual information must decay along a Markov chain (\cite{ahlswede1986}; \cite{zhang2013}; 
\cite{liu2014key};
\cite{lcv2015};
\cite{braverman2016communication};
\cite{liu2017secret};
\cite{xu2017information}; \cite{hlps19}).
Yet, sometimes, strictly better lower bounds may be obtained by replacing the strong data processing argument with a careful analysis of the contraction of the Fisher information or $\chi^2$-information due to compression (\cite{han2018geometric}; \cite{barnes2018geometric}; \cite{acharya2018inference}).
For example, \cite{barnes2018geometric}  proved the contraction of the Fisher information in the Gaussian location model via a geometric analysis of the quantization of the score function (which is a high dimensional Gaussian vector).
The $\chi^2$-contraction idea is relevant to the BP with bounded memory problem considered in the current paper,
but a key difficulty (which we resolved using novel optimal transportation methods) is to show a Gaussian approximation result for ``good'' reconstruction algorithms.

    \vspace{.1cm}
\textbf{Our results: }Our results show that any message-passing algorithm using finite memory is much weaker than BP in the following sense: 
there is a range of parameters of the model for which BP is a good estimator while any bounded message-passing algorithm has no statistical power. This has immediate implications for applications of BP in phylogeny, for the block model and for population dynamics, as this imply that in these applications too, the algorithms used (BP or others) cannot be replaced by bounded memory message-passing algorithms. 
We proceed with definitions and formal statements of the main results. 

\subsection{Broadcasting on trees}

On a rooted tree $T$ with root $\rho$, the \emph{broadcast process} with error probability $\varepsilon \in (0,1/2)$ is defined as follows, generating labels $X_v \in \{\pm 1\}$ for every vertex $v \in T$. The label $X_{\rho}$ of the root $\rho$ is assigned either $+1$ or $-1$ with equal probability, and then for each edge $e=(v_1,v_2)$, the probability that the labels assigned to $v_1$ and $v_2$ are different is $\varepsilon$, independent of all other edges. This model has several interpretations. In communication theory, one may consider each of the edges as an independent binary symmetric channel. In biology, one may say that there is some binary property each child inherits from its parent independently for all children. This model is also an example of an Ising model on $T$ with constant interaction strength \cite{lyons1989ising}. 
We refer the reader to \cite{evans2000broadcasting} and \cite{ mossel2004survey} for a detailed account of the history of this model, as well as classical and modern results. 

One of the most fundamental questions about this process is the following: for a given set of vertices $V$ (e.g. the leaves of a finite tree), can we typically infer the original label correctly from the labels at $V$? More precisely, let $p(T, V,\varepsilon)$ denote the probability of reconstructing the label at the root given the labels at $V$. Since random guessing succeeds in reconstructing the original label with probability $1/2$, it is natural to write $p(T,V,\varepsilon):= (1/2) + s(T,V,\varepsilon)$, so that $s(T,V,\varepsilon)$ represents the advantage over random guessing gained by having access to the labels at $V$. 
For an infinite tree $T$, a natural question to ask is whether there is a \emph{uniform} advantage over random guessing provided by knowing the labels at the vertices at \emph{any} level of the tree. Formally, we let $V_n$ denote the set of all vertices at distance $n$ from the root $\rho$ 
and ask whether $\lim_{n\to \infty}s(T,V_n,\varepsilon) = \inf_{n\geq 1}s(T,V_n,\varepsilon) > 0$ (the first equality follows by the data-processing inequality); 
if so, we say that the \emph{reconstruction problem} for $T$ at $\varepsilon$ is \emph{solvable}. The solvability threshold is determined by the branching number of $T$:
\begin{theorem}[\cite{evans2000broadcasting}]
\label{thm:EKPS}
Consider the broadcasting process with parameter $\varepsilon$ on an infinite tree $T$ with branching number $\text{br}(T)$. Then, with $\varepsilon_{c} := (1-\text{br}{T}^{-1/2})/2$,
\begin{equation}
\label{eqn:threshold}
\lim_{n\to \infty}s(T,V_n,\varepsilon) 
\begin{cases}
>0, \quad \text{if } \varepsilon < \varepsilon_{c} \\
=0, \quad \text{if } \varepsilon > \varepsilon_{c}. 
\end{cases}
\end{equation}
(Note that for $d$-ary trees, $\text{br}(T) = d$. See Definition~\ref{defn:branching-number} for the general definition.)
\end{theorem}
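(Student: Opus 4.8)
The plan is to control the root posterior mean $Y_n := \E[X_\rho \mid X_{V_n}]$. Since the Bayes-optimal reconstructor of $X_\rho$ from $X_{V_n}$ is $\sgn(Y_n)$, one has $s(T,V_n,\varepsilon)=\tfrac12\,\E|Y_n|$, and because $|Y_n|\le 1$ this gives the sandwich $\tfrac12\E[Y_n^2]\le s(T,V_n,\varepsilon)\le \tfrac12\sqrt{\E[Y_n^2]}$. Writing $\theta:=1-2\varepsilon$, the two cases of the theorem are $\mathrm{br}(T)\theta^2>1$ (then $\varepsilon<\varepsilon_c$) and $\mathrm{br}(T)\theta^2<1$ (then $\varepsilon>\varepsilon_c$), so it suffices to show $\inf_n\E[Y_n^2]>0$ in the first case and $\E[Y_n^2]\to 0$ in the second. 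Throughout I use that $\E[Y_n^2]$ is nonincreasing in $n$: the broadcast process is a tree Markov random field, so $X_\rho\to X_{V_n}\to X_{V_{n+1}}$ is a Markov chain, hence $Y_{n+1}=\E[Y_n\mid X_{V_{n+1}}]$ and $\E[Y_{n+1}^2]\le\E[Y_n^2]$; the same point gives $\lim_n s=\inf_n s$.

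For the reconstruction regime $\mathrm{br}(T)\theta^2>1$ I would run a weighted second-moment estimate. Fix $\lambda$ with $\theta^{-2}<\lambda<\mathrm{br}(T)$; by the flow/cutset characterization of the branching number (Definition~\ref{defn:branching-number}) there is a flow of positive strength from $\rho$ to infinity obeying $f(e)\le\lambda^{-|e|}$ on every edge, which after normalization gives, for each $n$, a probability vector $(w_v)_{v\in V_n}$ such that the weight $W_a:=\sum_{u\in V_n,\,u\succeq a}w_u$ through any $a$ satisfies $W_a\le C\lambda^{-|a|}$, with $C$ independent of $n$. Put $g:=\theta^{-n}\sum_{v\in V_n}w_v X_v$. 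Using $\E[X_\rho X_v]=\theta^{|v|}$ and $\E[X_uX_v]=\theta^{|u|+|v|-2|u\wedge v|}$ one gets $\E[X_\rho g]=1$ and
\[
\E[g^2]=\sum_{u,v\in V_n}w_uw_v\,\theta^{-2|u\wedge v|}=1+(1-\theta^2)\sum_{a\neq\rho}\theta^{-2|a|}W_a^2 ,
\]
and $W_a^2\le C\lambda^{-|a|}W_a$ together with $\sum_{|a|=k}W_a=1$ turns the last sum into $C\sum_{k\ge1}(\theta^{-2}\lambda^{-1})^k$, a convergent geometric series (finite since $\lambda>\theta^{-2}$) bounded uniformly in $n$. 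Cauchy--Schwarz then yields $\E[Y_n^2]\ge \E[X_\rho g]^2/\E[g^2]=1/\E[g^2]$, bounded below uniformly in $n$.

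For the non-reconstruction regime $\mathrm{br}(T)\theta^2<1$ I would set up a recursive contraction. For a vertex $v$ let $Y_v:=\E[X_v\mid X_{V_n\cap T_v}]$ be the posterior mean at $v$ from the level-$n$ labels in its subtree $T_v$, and $\chi_v:=\E[Y_v^2]$; thus $\chi_v=1$ for $v\in V_n$ and $\chi_\rho=\E[Y_n^2]$. The core claim is the one-step inequality
\[
\chi_v\ \le\ \theta^2\sum_{c\ \text{child of}\ v}\chi_c .
\]
Granting it, a backward induction over the finite subtree below $v$ of depth $\le n$ shows that for any edge-cutset $\Pi$ separating $\rho$ from $V_n$, $\chi_\rho\le\sum_{(w,v)\in\Pi}\theta^{2|v|}\chi_v\le\sum_{e\in\Pi}\theta^{2|e|}$ (using $\chi_v\le1$). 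Since $\mathrm{br}(T)\theta^2<1$ means $\theta^{-2}>\mathrm{br}(T)$, the cutset characterization of the branching number provides finite cutsets $\Pi$ with $\sum_{e\in\Pi}\theta^{2|e|}$ as small as we like; taking $n$ larger than the depth of such a $\Pi$ and invoking monotonicity of $\E[Y_n^2]$ gives $\E[Y_n^2]\to0$.

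The one-step inequality is the crux and the main obstacle. Conditioning on $X_v$ and using the flip symmetry of the broadcast process, $X_v\to X_{V_n\cap T_v}$ is a binary symmetric channel, the children's posterior means $Y_c$ form a sufficient statistic with $\E[X_v\mid Y_{c_1},\dots,Y_{c_d}]=Y_v$ and $\E[X_v\mid Y_c]=\theta Y_c$ (the $\mathrm{BSC}(\varepsilon)$ above $v$ supplies the factor $\theta^2$), so the claim is a tensorization statement for the $\theta Y_c$ that reduces by induction to two children. Using that any broadcast posterior law can be written $\sigma^\pm=(1\pm u)\,m$ for a symmetric probability measure $m$ on $[-1,1]$ (equivalently the identity $\E[Y_c\mid X_c{=}1]=\E[Y_c^2]$), the two-child case becomes: for symmetric probability measures $m_1,m_2$ on $[-1,1]$,
\[
\int\!\!\int\frac{(u_1+u_2)^2}{1+u_1u_2}\,m_1(du_1)\,m_2(du_2)\ \le\ \int u_1^2\,m_1(du_1)+\int u_2^2\,m_2(du_2)\ \big(=\theta^2(\chi_{c_1}+\chi_{c_2})\big).
\]
Here one symmetrizes the left side via $u_1\mapsto-u_1$ (legitimate since $m_1$ is symmetric), turning the integrand into $\tfrac12\big(\tfrac{(u_1+u_2)^2}{1+u_1u_2}+\tfrac{(u_1-u_2)^2}{1-u_1u_2}\big)=\tfrac{(u_1^2+u_2^2)-2u_1^2u_2^2}{1-u_1^2u_2^2}$, and the inequality then follows from the pointwise bound $\tfrac{(u_1^2+u_2^2)-2u_1^2u_2^2}{1-u_1^2u_2^2}-(u_1^2+u_2^2)=\tfrac{u_1^2u_2^2(u_1^2+u_2^2-2)}{1-u_1^2u_2^2}\le0$, valid because $u_1^2,u_2^2\in[0,1]$. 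I expect the cutset bookkeeping and the reduction of the $d$-child case to two children to be routine; the genuinely delicate ingredient is this $\chi^2$-type contraction with the sharp constant $\theta^2$, which is exactly what makes the Kesten--Stigum threshold $\mathrm{br}(T)\theta^2=1$ come out tight rather than the weaker bound given by a naive coupling.
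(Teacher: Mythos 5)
Your proposal is correct, but it takes a different route from the paper's featured argument, and it also proves the direction the paper only cites. For non-reconstruction, the paper works with the full conditional laws $T_n^{\pm}$ of the leaf configuration and drives the recursion with an information measure that both tensorizes over the children and contracts under symmetric mixing: the inequality $\SKL(P(\delta),Q(\delta))\le 4\delta^2\,\SKL(P,Q)$ of Lemma~\ref{lem:mixing-inequality}, upgraded to squared Hellinger for the cutset argument on general trees in Theorem~\ref{thm:KS-general-trees}. You instead track only the scalar $\chi^2$-statistic $\chi_v=\E[Y_v^2]$ of the posterior mean and prove the one-step contraction $\chi_v\le\theta^2\sum_c\chi_c$ from the BP recursion; your symmetrization identity and the pointwise bound $\frac{(u_1^2+u_2^2)-2u_1^2u_2^2}{1-u_1^2u_2^2}\le u_1^2+u_2^2$ are exactly the computation in the paper's Lemma~\ref{lem_approx1} (see \eqref{e_ss}--\eqref{e22}), which the paper uses in Proposition~\ref{prop_approx1} to give its second proof of the KS bound on the regular tree (there attributed to \cite{borgs2006kesten}). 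So your negative direction is, in substance, that secondary $\chi^2$ proof extended to arbitrary branching numbers via the same cutset bookkeeping as Theorem~\ref{thm:KS-general-trees}; a minor advantage of your version is that the weak inequality $\E[\hat S^2]\le\E[\bar S^2]$ needs only symmetry of the laws and not the equal-variance hypothesis of Lemma~\ref{lem_approx1}, which matters on irregular trees, and your sequential reduction from $d$ children to two is indeed routine since each partial combination is again a posterior mean of the required form. For the reconstruction direction the paper gives no proof (it cites Kesten--Stigum), whereas you supply the classical weighted second-moment/flow argument; your covariance identity and the geometric-series bound via $W_a\le C\lambda^{-|a|}$ check out. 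What the paper's SKL/Hellinger route buys is control of the full statistical distance $\TV(T_n^+,T_n^-)$ in a form that transfers directly to the noisy-message setting of Theorem~\ref{thm:KS-noisy-general}; what your route buys is a self-contained two-sided proof organized around the single quantity $\E[Y_n^2]$, which is also the quantity the paper's multibit analysis ultimately studies.
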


Remarkably, the exact same threshold also dictates the limit of \emph{weak recovery} for the 2-community stochastic block model \citep{mossel2015reconstruction,Massoulie:13,mossel2018proof}. Intuitively, this is because locally around a typical node, the SBM graph looks like a Galton-Watson tree, and the community assignment of nodes can be coupled to the aforementioned broadcast process on the tree.
The same threshold also dictates the phase transition for the phylogenetic reconstruction problem, where the goal is to reconstruct the underlying tree \citep{mossel2004phase, daskalakis2006optimal, Steel:01}. The intuition here is that information about the deep part of the structure of the tree is transmitted via the broadcast channel. 

 The proof that the above limit is positive for $\varepsilon < \varepsilon_c$ first appears in \cite{KestenStigum:66} 
 The proof that the limit is $0$ when $\varepsilon > \varepsilon_c$ is harder, and only appeared around two decades later, first for regular trees (in which case $\text{br}(T)$ coincides with the arity of $T$) in \cite{BlRuZa:95}, and then for general trees in \cite{evans2000broadcasting}. Subsequently, different proofs appeared in \cite{Ioffe;96b} and \cite{berger2005glauber}. 

\subsection{Message-passing algorithms for reconstruction}
For simplicity of notation, we focus on the setting where $V$, the set of revealed nodes, are the leaves of some finite-depth tree.
Then a \emph{message-passing} algorithm for reconstructing the label at the root, given the labels $X_V$ at the set of leaves $V$ of the tree
is specified by the following data: (i) a \emph{message space} $\Sigma$ (possibly infinite) to which messages belong; 
(ii) initial messages $(Y_v)_{v \in V}$ in $\Sigma^{V}$, where $Y_v$ is a (possibly random) function of $X_v$ for all ${v \in V}$; and (iii) for each vertex $u \in T$, a fixed \emph{reconstruction function} $f_{u}:\Sigma^{C(u)}\to \Sigma$ 
, where $C(u)$ denotes the children of $u$ and $f_u$ is allowed to be a randomized function (i.e. a channel;  
The randomness of this channel is independent of the randomness of the tree broadcast process).
Reconstruction proceeds recursively -- the message output by node $u$ (to its parent) is
\begin{equation} Y_u = f_u(Y_{C(u)}). \label{eqn:y_u}\end{equation}
To visualize, we can think of the $X$ as living on a ``broadcasting tree'' and the $Y$ as living on a mirrored ``reconstruction tree'' (see Figure~\ref{fig_1}). Letting $Y_{\pm}$ denote the random variables corresponding to the output $Y_{\rho}$ of $f_\rho$ under $T_V^{\pm}$ (the distribution of labels at the leaves $V$, conditioned on the label at $\rho$ being $\pm$), the probability of correct reconstruction of the root given $Y_{\rho}$ is $0.5(1+\TV(Y_+,Y_-))$. Thus, a natural measure of the power of the message passing algorithm is 
$\TV(Y_+,Y_-)$.
The size of the message space plays a crucial role in this paper;
we call $\log_2 |\Sigma|$ the number of bits of memory used by the algorithm.

In this context BP is just the usual recursive scheme for computing exactly the marginal distribution of the label at $\rho$, given the labels at $V$. Explicitly, $\Sigma = [-1,1]$, 
$Y_v = X_v$ for leaf nodes $v$, and the reconstruction function at every node $u$ is (by Bayes rule, with $\theta = 1-2\epsilon$)
\[ f_{u}^{(BP)}\left(y_1,\ldots,y_{|C(u)|}\right) := \frac{\prod_i(1 + \theta y_i) - \prod_i(1 - \theta y_i)}{\prod_i(1 + \theta y_i) + \prod_i(1 - \theta y_i)} \]
Note that, by definition, outputting the more likely label under the marginal is the Bayes optimal classifier in this setting i.e. it achieves the maximum probability of reconstruction among \emph{all} algorithms. In particular, the advantage of belief propagation over random guessing enjoys the limiting behavior in \eqref{eqn:threshold}. 

\subsection{Limitations of bounded memory algorithms}
Another natural algorithm for the reconstruction problem is to output the label present at a majority of the vertices in $V$. While this does not achieve the same probability of success as belief propagation, it is known that it \emph{does} achieve the limiting behavior in \eqref{eqn:threshold} \citep{KestenStigum:67}. 
Note that this is a message-passing algorithm with $\Sigma = \mathbb{Z}$ and $f$ which sums its inputs.

What if $\Sigma$ is a bounded size alphabet? In the case $|\Sigma| = 2$, the natural message-passing variant of this algorithm is to estimate the label at $\rho$ via \emph{recursive majority} i.e. $\Sigma=\{\pm 1\}$, and $f_{u}\colon \Sigma^{C(u)}\to \Sigma$ is the majority function (we assume for convenience that $|C(u)|$ is odd for each $u$). Note that recursive majority is easier to implement than belief propagation, not requiring access to $\varepsilon$; for this reason among others, it is quite popular in practice, in particular in biological applications. 

The following striking conjecture states that reconstruction down to the KS threshold requires unbounded memory, i.e. there is no bounded-memory analogue of BP:  
\begin{conjecture}[\cite{evans2000broadcasting}]\label{conj:peres}
For any fixed $L > 0$, no message-passing algorithm on an alphabet of size $L$ can achieve the guarantee of \eqref{eqn:threshold}. In other words, there exists a fixed noise level $\varepsilon(L)$ such that reconstruction is information-theoretically possible but no such message-passing algorithm is asymptotically better than a random guess.
\end{conjecture}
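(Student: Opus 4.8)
The plan is to propagate the broadcast process from the leaves upward, keeping track at each level $n$ of the pair of conditional message laws $\mu_n^{+},\mu_n^{-}$ on $\Sigma$ — the law of the root message $Y_\rho$ on the depth-$n$ tree given $X_\rho=\pm 1$ — together with the scalar $D_n:=\SKL(\mu_n^{+},\mu_n^{-})$. With $\theta=1-2\varepsilon$, one level of the recursion is the composition of three channels: passing each of the $d$ children's messages through the broadcast channel, which sends $(\mu_n^{+},\mu_n^{-})$ to $(\bar\mu_n^{+},\bar\mu_n^{-})$ with $\bar\mu_n^{\pm}=\tfrac{1\pm\theta}{2}\mu_n^{+}+\tfrac{1\mp\theta}{2}\mu_n^{-}$; tensorising to $(\bar\mu_n^{\pm})^{\otimes d}$; and applying the reconstruction channel $f_{n+1}\colon\Sigma^{d}\to\Sigma$. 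Three elementary properties of $\SKL$ then give, for \emph{every} choice of reconstruction channels, $D_{n+1}\le d\theta^{2}D_n$: the binary symmetric channel contracts symmetrised $\KL$ by exactly the factor $\theta^{2}$ (its strong data-processing constant); $\SKL$ is additive under tensorisation, so $\SKL((\bar\mu_n^{+})^{\otimes d},(\bar\mu_n^{-})^{\otimes d})=d\,\SKL(\bar\mu_n^{+},\bar\mu_n^{-})$; and $\SKL$ cannot increase under $f_{n+1}$. Since $D_1<\infty$ and hence $D_n\le(d\theta^{2})^{n-1}D_1$, this already re-proves the Kesten--Stigum upper bound in Theorem~\ref{thm:EKPS}; note that, passing through a broadcast channel on every edge, $D_n$ stays bounded by $O_d(1)$ for all $n$, so there is no separate ``large $D_n$'' regime.

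The theorem reduces to a quantitative sharpening of the data-processing step: there exist $k=k(L)$ and $c=c(L)>0$ such that, \emph{regardless} of the channels $f_{n+1},\dots,f_{n+k}$, one has $D_{n+k}\le(1-c)(d\theta^{2})^{k}D_n$ for all $n\ge 1$. Granting this, iterating over blocks of length $k$ yields $D_n\to 0$ geometrically whenever $(1-c)(d\theta^{2})^{k}<1$, i.e.\ whenever $1<d\theta^{2}<(1-c)^{-1/k}$; equivalently, whenever $\varepsilon$ lies in a nonempty interval $(\varepsilon_c-c'(L),\varepsilon_c)$. On this interval $D_n\to 0$ forces $\TV(Y_+,Y_-)\to 0$, so every $L$-memory message-passing algorithm is asymptotically no better than a random guess, while $d\theta^{2}>1$ means by Theorem~\ref{thm:EKPS} that reconstruction — hence Belief Propagation — still succeeds. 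This is exactly Conjecture~\ref{conj:peres}. (We carry out the argument on the $d$-regular tree; the extension to an arbitrary tree with $\mathrm{br}(T)=d$ is via the cutset reductions used in the proof of Theorem~\ref{thm:EKPS}.)

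The heart of the proof, and the main obstacle, is the block bound. Over a block of $k$ levels there are two cases. If some channel $f_j$ in the block discards more than a $c$-fraction of the incoming divergence — $\SKL(\mu_j^{+},\mu_j^{-})\le(1-c)\,d\theta^{2}\,\SKL(\mu_{j-1}^{+},\mu_{j-1}^{-})$ — then combining this with $D_{m+1}\le d\theta^{2}D_m$ at the other levels gives the claim. Otherwise every $f_j$ in the block retains at least a $(1-c)$-fraction, and we must show this is impossible for $c$ small and $k$ large (depending only on $L$). Here is where the announced \emph{asymptotic normality} of message-passing near criticality enters: near $\varepsilon_c$ the reconstruction problem is weakly informative, the recursion linearises, and one shows that the root log-likelihood ratio of \emph{any} near-optimal scheme on a depth-$k$ subtree is close, in normalised Wasserstein distance, to that of Belief Propagation, which is a weighted sum of $\asymp d^{k}$ (positively correlated, uniformly small) signs and hence converges to a non-degenerate Gaussian as $k\to\infty$. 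But the root message of an $L$-memory algorithm is supported on $L$ atoms, and an $L$-atomic law is at normalised Wasserstein distance at least $c_L>0$ from every non-degenerate Gaussian — the unavoidable error of quantising a Gaussian to $L$ levels — a contradiction once $c<c(L)$ and $k>k(L)$. The difficulty, and where optimal transport is essential, is to make this rigidity quantitative: the Gaussian approximation must be proved for the recursively-defined messages themselves and transported up the tree, controlling at each step how it degrades when the $d$ near-Gaussian children's messages are mixed through the broadcast channel and then \emph{quantised} to $L$ values — Wasserstein-stability estimates for the message laws here playing the role that the $\chi^{2}$/Fisher-information contraction estimates play in the distributed-estimation literature, where the score is Gaussian by fiat. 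Combined with the matching ``quantisation loses a definite fraction of Fisher information'' bound and the $\SKL$ recursion above, this yields the block bound, and with it the theorem.
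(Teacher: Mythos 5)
Your outer framework (track $D_n=\SKL(\mu_n^{+},\mu_n^{-})$, prove the one-step bound $D_{n+1}\le d\theta^2 D_n$, and reduce everything to a block-level improvement of the data-processing inequality) is sound and matches the paper's one-bit analysis, and you have correctly identified the paper's central idea for the multibit case: near-optimal schemes near criticality must produce a near-Gaussian score, while an $L$-atomic law is at Wasserstein distance $\gtrsim 1/L$ from any Gaussian (the paper's Lemma~\ref{lem_quant}). But the block bound's Case 2 --- the claim that ``every $f_j$ retains a $(1-c)$-fraction of SKL for $k$ levels'' forces near-Gaussianity --- is precisely the theorem's entire technical content, and your proposal asserts it rather than proves it. Moreover, the mechanism you invoke is misstated: BP's root message does \emph{not} converge to a non-degenerate Gaussian as $k\to\infty$ for fixed $\varepsilon<\varepsilon_c$ (see the remark after Lemma~\ref{lem:bp-fixpoint-exists}: whenever reconstruction is possible the limit law is non-Gaussian). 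Gaussianity holds only approximately, with an error that tends to $0$ as $\varepsilon\uparrow\varepsilon_c$ (Corollary~\ref{corr:bp-fixpoint}), which is why the final statement must couple $L$ to $\varepsilon_c-\varepsilon$ rather than produce a contradiction ``once $k>k(L)$.'' So the contradiction in Case 2 cannot come from an exact limit; it must come from a quantitative three-way comparison between the quantization floor $1/2L$, the distance-to-Gaussian bound in terms of $\varepsilon_c-\varepsilon$, and the per-level information loss.

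Concretely, three ingredients are missing and none is routine. First, the bridge from ``SKL retained per level'' to ``Wasserstein-close to an i.i.d.\ sum'' is an $L^2$ identity: since $S_n=\E[\hat S_n\mid Y_a]$ is a conditional expectation, $W_2^2(S_n,\hat S_n)\le\hat\sigma_n^2-\sigma_n^2$, so the information lost to quantization is measured by the drop in the \emph{second moment of the score} ($\chi^2$-information), not by SKL; the paper switches potential functions for exactly this reason (the multibit SKL dynamics on the simplex are intractable, cf.\ Example~\ref{example:cycling-dynamics}), and your SKL-retention hypothesis does not directly feed the transport estimates. Second, $\hat S_n$ is not an i.i.d.\ sum of the children's scores --- it is the nonlinear BP combination --- and the decoupling/linearization (Lemma~\ref{lem_approx}) costs a Wasserstein error that must be summed over the block and shown to be $o(\sqrt{\xi})$ near criticality. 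Third, the Wasserstein CLT you need (Rio's theorem) requires a kurtosis bound on the message at the bottom of the CLT window; the paper devotes Lemma~\ref{lem_fourth} to a recursive fourth-moment estimate and a variance--kurtosis tradeoff, and nothing in your per-block hypothesis supplies this (a scheme can satisfy your Case 2 condition on one block while entering it with badly behaved fourth moments). Until these are supplied, the ``heart of the proof'' remains a statement of intent rather than an argument.
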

The conjecture is also discussed in~\citep{mossel2004survey}. 
\cite{mossel1998recursive} verified this conjecture on periodic trees in the special case when $\Sigma = \{\pm 1\}$, the reconstruction function $f_u = f$ is the same for all nodes $u$, and $f(-x) = -f(x)$ for all inputs $x$; in particular, this includes the important case of recursive majority. 

\emph{The main result of this paper is to verify Conjecture~\ref{conj:peres} on the $d$-ary tree\footnote{We will also assume for convenience that $f_u$ is constant at each level of the tree.} for all $L > 0$.} The proof will rely upon a careful analysis of the distributional recursion induced by combining the recursive broadcast and reconstruction (message-passing) steps. In fact, we present two approaches along these lines: a relatively elementary approach which proves the result for one bit memory  ($L = 2$) and illustrates many of the main difficulties in this problem, and a higher-powered method (which crucially builds upon Wasserstein estimates from optimal transport theory) which proves the result for all $L$ and even pins down the correct quantitative dependence on the distance to the threshold:
\begin{theorem}
\label{thm:main}
For any integer $d\ge 2$, there exist positive real numbers $c_d$ and $C_d$ such that the following holds:
for any fixed $L$, 
the maximum error probability $\varepsilon(L)$ for which there exists a message-passing algorithm with alphabet size $L$ guaranteeing asymptotic reconstruction on the infinite $d$-ary tree satisfies
\begin{align}
    L^{-C_d} \le \varepsilon_c - \varepsilon(L) \le L^{-c_d}.
\end{align}
\end{theorem}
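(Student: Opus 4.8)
Both bounds come from analyzing the joint ``broadcast $+$ reconstruction'' recursion near the Kesten--Stigum point. Write $\theta = 1-2\varepsilon$ and $\delta = d\theta^2 - 1$, so reconstruction is solvable exactly when $\delta > 0$, and near the threshold $\delta \asymp_d \varepsilon_c - \varepsilon$. For a fixed $L$-state message-passing algorithm let $\mu_n,\nu_n$ be the law of the depth-$n$ message conditioned on the root label being $+1$ or $-1$, let $g_n(y) = 2\Pr[X_\rho = +1 \mid Y_n = y] - 1 \in [-1,1]$ be the posterior magnetization carried by a message, and let $Q_n = \E[g_n(Y_n)^2]$; the algorithm achieves asymptotic reconstruction iff $Q_n \not\to 0$ (equivalently $\TV(\mu_n,\nu_n)\not\to 0$). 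One step of the recursion broadcasts through $d$ independent binary symmetric channels with crossover $\varepsilon$, tensorizes over the $d$ children, and applies the deterministic $L$-valued reconstruction function $f$. The strong-data-processing inequality for the binary symmetric channel (contraction coefficient $\theta^2$), tensorization over the $d$ children, and the data-processing inequality for $f$ give the clean one-step bound $\SKL(\mu_{n+1},\nu_{n+1}) \le (1+\delta)\,\SKL(\mu_n,\nu_n)$ (and the analogous bound for $\chi^2$). Moreover, since $Y_{n+1}$ is a function of the $d$ children's messages, $g_{n+1}(Y_{n+1}) = \E[\tilde g \mid Y_{n+1}]$, where $\tilde g$ is the Bayes-optimal (BP) combination of the $d$ edge-attenuated child magnetizations; linearizing this combination in the regime where $Q_n$ and $\|g_n\|_\infty$ are small yields $\E[\tilde g^2] = (1+\delta)\,Q_n\,(1 + o(1))$, hence $Q_{n+1} = (1+\delta)(1+o(1))\,Q_n - \E[\Var(\tilde g \mid Y_{n+1})]$. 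Everything hinges on the competition between the $(1+\delta)$ growth and the quantization loss $\E[\Var(\tilde g \mid Y_{n+1})]$.

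\noindent\textbf{Upper bound (achievability).} I would take quantized belief propagation: run BP, but at each node replace the real-valued message by the posterior magnetization given a well-chosen $L$-cell (companding) quantization of it. Near criticality the rescaled BP message $g_n/\sqrt{Q_n}$ is approximately a standard Gaussian --- this is the asymptotic-normality statement --- so the optimal $L$-cell quantizer has distortion $\E[\Var(\tilde g \mid Y_{n+1})] \asymp Q_n/L^2$, and the recursion becomes $Q_{n+1} \approx (1 + \delta - \Theta_d(L^{-2}))\,Q_n - O(Q_n^2)$. Since the un-quantized BP fixed point is itself of order $\delta$ near the threshold, this recursion has a strictly positive fixed point as soon as $\delta \gtrsim_d L^{-2}$, giving $\varepsilon_c - \varepsilon(L) \le C'_d\,L^{-2} \le L^{-c_d}$. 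The labor here is making the error terms uniform in $n$ and justifying the Gaussian approximation to precision $o(\delta)$.

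\noindent\textbf{Lower bound (impossibility), the main part.} Fix an arbitrary $L$-state algorithm; by a standard reduction we may assume it is symmetric. Because $X_\rho \to X_{V_{n-1}} \to X_{V_n}$ is a Markov chain, $Q_n \le Q_n^{\mathrm{BP}}$, and $Q_n^{\mathrm{BP}}\downarrow Q_\infty^{\mathrm{BP}}$, which is small (of order $\delta$) near the threshold; so for $n$ large we are in the small-$Q_n$ regime, where $Q_{n+1} = (1+\delta)(1+o(1))\,\rho_n\,Q_n$ with retention factor $\rho_n = 1 - \E[\Var(\tilde g \mid Y_{n+1})]/\E[\tilde g^2] \le 1$. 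The heart of the proof is to show $\prod_{j=n}^{n+T-1}\rho_j \le 1 - \Omega_d(L^{-C})$ over a window of $T = \Theta_d(\log L)$ steps, for some $C = C(d)$. A single step may be lossless: if $g_n$ has few atoms then $\tilde g$ is exactly representable with $\le L$ values. But the support of $g_n$ cannot shrink without loss, and iterating ``sum $d$ independent copies, apply the mild BP nonlinearity'' grows the number of atoms super-exponentially, so within $T$ steps the pre-quantization message is a high-fold convolution --- equivalently, it is close, in Wasserstein distance and locally, to the Gaussian fixed point of the recursion --- hence anti-concentrated on the scale $\sqrt{Q_n}$, and no such distribution can be captured by $L$ cells with relative distortion $o(L^{-C})$. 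This is precisely where the optimal-transport estimates and the asymptotic-normality result enter. Over the window the growth is only $(1+\delta)^T = 1 + \Theta_d(\delta\log L)$, so once $\varepsilon_c - \varepsilon \le L^{-C_d}$ (equivalently $\delta \le c\,L^{-C_d}$ with $C_d = C+1$) the product gives $Q_{n+T} \le (1 - \Omega_d(L^{-C}))\,Q_n$, forcing $Q_n \to 0$.

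\noindent\textbf{Main obstacle.} The delicate point is the lower bound's quantization-loss estimate: proving that, robustly against an adversarially chosen reconstruction function and an adversarially shaped ($\le L$-atom, non-Gaussian) message, a bounded window of recursion steps must shed a $1/\mathrm{poly}(L)$ fraction of the $\chi^2$-information. This requires (i) establishing the Wasserstein / local-CLT approximation of the near-critical message recursion by its Gaussian fixed point --- itself a substantial ingredient, proved by controlling the recursion on low-order moments together with the contraction-toward-Gaussian of the combine-and-quantize map --- (ii) converting ``close to a Gaussian, with bounded local density'' into ``cannot be $L$-quantized with relative distortion $o(1/\mathrm{poly}(L))$'', an optimal-transport / quantization estimate, and (iii) amortizing over $\Theta_d(\log L)$ steps to neutralize the possibility that an individual step is nearly lossless, all while keeping the $(1+\delta)$ expansion under control over the window. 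The elementary $L=2$ argument alluded to in the paper is a hands-on version of this loop in which the quantization is so severe (collapsing $d+1$ atoms to $2$) that a single step already loses a constant fraction.
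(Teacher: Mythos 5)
Your proposal follows essentially the same route as the paper: for the lower bound you track the $\chi^2$-information of the score $\E[X\mid Y]$, linearize the combine step, control moments, prove Gaussian attraction in $W_2$ via a Wasserstein CLT over a logarithmically long window, and derive a contradiction with the $L$-atom support of the message law (the paper's version of your quantization-loss step is an entropy-based bound $\mathcal{E}(Z)\ge 1/(2\exp(H(Z)))$); for the upper bound you use the same quantized-BP construction with a fixed-point analysis of the variance recursion. The only cosmetic differences are your ``retention factor over a window'' framing versus the paper's direct comparison of non-Gaussianness against $1/(2L)$, and your appeal to a symmetrization reduction, which the paper's lower-bound argument does not need.
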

As a by product, we remark that the (renormalized) BP message distribution at the root of the infinite $d$-ary tree approaches a Gaussian as $\varepsilon$ approaches criticality. 
More precisely:
\begin{corollary}\label{corr:bp-fixpoint} 
Fix $d \ge 2$ and broadcasting parameter $\varepsilon \in (0,\varepsilon_c)$. Let $\rho$ denote the root of a $d$-ary tree of depth $n$, $V_n$ denote the set of leaves, and let $Y_n := \E[X_{\rho} | X_{V_n}]$ under the broadcast process. Then there exists a limit r.v. $Y$ such that $Y_n \to Y$ in distribution and
\[ W_2\left(\frac{Y}{\sqrt{\Var(Y)}}, \,G\right) \le (\varepsilon_c - \varepsilon)^{C_d}, \]
for some $C_d > 0$ independent of $\varepsilon$,
where $G \sim N(0,1)$ and $W_2$ denotes the $2$-Wasserstein distance (see e.g.\ the definition in \cite{villani2003topics}).
\end{corollary}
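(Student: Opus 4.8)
The plan is to extract both claims from the distributional fixed-point analysis that underlies Theorem~\ref{thm:main}. Existence of the limit $Y$ is not special to criticality: by the Markov property of the broadcast process on the tree, conditioning on the depth-$n$ layer $X_{V_n}$ separates $\rho$ from all deeper vertices, so $Y_n = \E[X_\rho \mid X_{V_n}] = \E[X_\rho \mid \mathcal{T}_{\ge n}]$ where $\mathcal{T}_{\ge n}$ is generated by all labels at depth $\ge n$; since $\mathcal{T}_{\ge n}$ is decreasing, $\{Y_n\}$ is a reverse martingale and $Y_n \to Y$ a.s.\ and in $L^2$. Writing $\mu_n^{\pm}$ for the law of $Y_n$ given $X_\rho = \pm 1$, the tree recursion and Bayes' rule identify $\mu_n^+$ with the law of $f^{(BP)}(Z_1,\dots,Z_d)$, where $Z_1,\dots,Z_d$ are i.i.d.\ from the broadcast mixture $\nu_{n-1}^+ := (1-\varepsilon)\mu_{n-1}^+ + \varepsilon\mu_{n-1}^-$, and $\mu_n^-$ is the reflection of $\mu_n^+$; in particular the limiting law $\tfrac12\mu^+ + \tfrac12\mu^-$ of $Y$ is symmetric, so $\E Y = 0$. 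Tracking the scalar $m_n := \E[Y_n\mid X_\rho = +] = \E[X_\rho Y_n] = \Var(Y_n)$ through the identity $\operatorname{arctanh}(f^{(BP)}(z)) = \sum_i \operatorname{arctanh}(\theta z_i)$ and Taylor expanding ($\theta = 1-2\varepsilon$), one finds $m_n = d\theta^2 m_{n-1} + (\text{a negative correction of order } m_{n-1}^2)$, the skew contributions of order $m_{n-1}^{3/2}$ cancelling; since $d\theta^2 - 1 \asymp \varepsilon_c-\varepsilon$, the stable fixed point satisfies $\Var(Y) = m_\infty \asymp \varepsilon_c-\varepsilon$. This pins down the normalization and gives $\sqrt{\Var(Y)} \asymp \sqrt{\varepsilon_c-\varepsilon}$.

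The Gaussian approximation rests on the observation that near criticality one step of this recursion is a CLT step plus a small perturbation. From the same expansion, $f^{(BP)}(Z_1,\dots,Z_d) = \theta\sum_i Z_i + R$ with $R$ cubic-and-higher in the $Z_i$; since typical messages have magnitude $\asymp\sqrt{\Var Y}$, one expects $\|R\|_2 \lesssim (\Var Y)^{3/2}$, so after dividing by $\sqrt{\Var Y}$ the remainder contributes $O(\Var Y) = O(\varepsilon_c-\varepsilon)$ in $W_2$, i.e.\ $O(\sqrt{\varepsilon_c-\varepsilon})$ in the normalized variable; likewise the broadcast mixing $\mu^+\mapsto\nu^+$ displaces mass by only $2m_\infty \asymp \varepsilon_c-\varepsilon$, again $O(\sqrt{\varepsilon_c-\varepsilon})$ after normalization. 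Thus, writing $\tilde Y := Y/\sqrt{\Var Y}$ and $\eta := W_2(\tilde Y, N(0,1))$, the fixed-point equation says that $\tilde Y$ is the normalized average of $d$ independent copies of a law within $W_2$-distance $\eta + O(\sqrt{\varepsilon_c-\varepsilon})$ of $N(0,1)$, up to an additional $O(\sqrt{\varepsilon_c-\varepsilon})$ error. Plugging in a quantitative $W_2$ central limit theorem — that $\mu \mapsto \mathrm{law}((\xi_1+\dots+\xi_d)/\sqrt{d})$, with $\xi_i$ i.i.d.\ of mean $0$ and variance $1$, contracts $W_2$-distance to $N(0,1)$ by a factor $\lambda_d < 1$ — yields $\eta \le \lambda_d\big(\eta + O(\sqrt{\varepsilon_c-\varepsilon})\big) + O(\sqrt{\varepsilon_c-\varepsilon})$, hence $\eta \le (\varepsilon_c-\varepsilon)^{C_d}$. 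Since the $\mu_n^{\pm}$ are laws on $[-1,1]$, $W_2$-convergence is automatic, transferring the bound to $Y_n$ as stated.

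The main obstacle — and where genuine optimal-transport input enters — is the quantitative $W_2$ central limit theorem with a \emph{one-step} contraction factor bounded away from $1$. The naive coupling (transport each summand optimally to a Gaussian, then average) only gives non-expansion; the usual $W_2$ CLT gives an $O(n^{-1/2})$ aggregate rate, not a per-step geometric gain; and such a contraction genuinely requires some regularity of $\mu$ (a Poincaré constant or a moment bound). The route I would take is to pass to relative Fisher information, where Blachman--Stam monotonicity, strengthened to a quantitative gap by entropy-jump / Poincaré-type inequalities, contracts it under normalized convolution by a factor $1-c_d<1$, and then to transfer back to $W_2$ via log-Sobolev / Talagrand / HWI inequalities (accepting a worse exponent $C_d$). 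The price is twofold: the $O(\sqrt{\varepsilon_c-\varepsilon})$ perturbations ($R$ and the broadcast mixing) must be controlled in a topology strong enough to carry the Fisher-information estimate — $W_2$-smallness alone does not suffice — and one must propagate, inductively along the recursion, a priori regularity of the messages (sub-Gaussian tails for $\tilde Y_n$, bounded Fisher information of their densities), exploiting that each step is a convolution (which regularizes) followed by a smooth map. Establishing and propagating these regularity estimates through the perturbed recursion, and handling the rare events where a message is not $O(\sqrt{\Var Y})$ (so that $|Z_i|\le 1$ must replace $|R|\lesssim(\Var Y)^{3/2}$), is where the bulk of the technical work lies.
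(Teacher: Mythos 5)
Your existence argument (reverse martingale with respect to the decreasing filtration generated by the layers at depth $\ge n$) is correct and is essentially the same as the paper's Lemma~\ref{lem:bp-fixpoint-exists}, which proves $Y_n=\E[Y_{n-1}\mid X_{V_n}]$ and deduces convergence from monotonicity of the m.g.f.'s. The decomposition of one BP step as ``linear CLT step plus higher-order remainder plus broadcast mixing'' also matches the paper's setup ($\bar S_n$ versus $\hat S_n$ and Lemma~\ref{lem_approx}). The problem is that your entire quantitative conclusion hinges on a key lemma you do not prove: a \emph{one-step} $W_2$ contraction toward $N(0,1)$ under the map $\mu\mapsto\mathrm{law}\bigl((\xi_1+\cdots+\xi_d)/\sqrt{d}\bigr)$ with a factor $\lambda_d<1$ uniform over the relevant class of laws. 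As you note, no such uniform contraction holds without regularity hypotheses, and the route you propose to supply it --- relative Fisher information, entropy jumps under a Poincar\'e inequality, then HWI/Talagrand --- cannot be applied to the objects at hand: for every finite depth $n$ the law of $Y_n=\E[X_\rho\mid X_{V_n}]$ is purely atomic (it is a function of finitely many $\pm1$ labels), so it has no density, infinite Fisher information, and no Poincar\'e constant; ``convolution regularizes'' is false for discrete laws, whose convolutions remain discrete. So the fixed-point inequality $\eta\le\lambda_d(\eta+O(\sqrt{\varepsilon_c-\varepsilon}))+O(\sqrt{\varepsilon_c-\varepsilon})$ is never actually available, and the proof does not close.

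The paper avoids this entirely by never using a per-level contraction. In the proof of Lemma~\ref{lem_clt} it unrolls $t\asymp\log\frac{1}{\varepsilon_c-\varepsilon}$ levels at once, compares $S_h$ to a genuine i.i.d.\ sum of $d^{t}$ copies of $S_{h-t}$ via the triangle inequality and subadditivity of $W_2$ under independent summation, and then applies Rio's $W_2$ CLT, which gives an \emph{aggregate} rate $O(d^{-t/2})$ times the normalized fourth moment and is valid for arbitrary (in particular atomic) laws. The regularity input you are missing is replaced there by the fourth-moment bound of Lemma~\ref{lem_fourth}, itself proved by a recursion, and the per-level linearization and mixing errors are summed over the $t$ levels using Lemma~\ref{lem_approx} and Proposition~\ref{prop_approx1} (which gives $\xi\lesssim\varepsilon_c-\varepsilon$, the fact you use to make the perturbations small). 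If you want to salvage your route, you would either have to mollify the message laws and control the resulting errors in a topology strong enough for Fisher information (delicate and not obviously cheaper), or switch, as the paper does, to an aggregate multilevel CLT that needs only moment bounds.
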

Let us emphasize that the lower bound in  Theorem~\ref{thm:main} applies to general reconstruction schemes (not necessarily discretized BP),
even though Corollary~\ref{corr:bp-fixpoint} only concerns the specific BP algorithm.
We also note that Gaussian approximation of BP is widely used in {\em density evolution} analysis. This is a different setup where the number of iterations is bounded but the degree goes to infinity, see e.g. \cite{bayati2011dynamics}.
Note in particular, that in the density evolution setting, normal approximation is used both above and below the reconstruction threshold. 

A subsequent work to ours, ~\cite{moitra2019circuit}, studies the complexity of Belief Propagation from the point of view of circuit complexity. 
Most relevant to us is the result of ~\cite{moitra2019circuit} showing that BP can be computed in $\mathbf{NC}^1$. Thus, 
there exist a circuit of depth $O(n)$ with binary AND and OR gates and NOT gates that computes Belief Propagation in the following sense. The input to gates are the leaves values (repeated many times). The circuit returns a bit that agrees with the more likely posterior according to BP, whenever the BP posterior has a bias of more than $1/d^n$. 
These results hold independently of broadcast parameter $\eps$. The results of ~\cite{moitra2019circuit} do not contradict the results of the current paper as the circuit constructed does not confirm to a tree topology with each input bits appearing only once. Rather, in the circuit constructed each input bit is repeated $d^{O(n)}$ times. In other results, ~\cite{moitra2019circuit} show that bounded depth circuits with AND and OR gates (the class $\mathbf{AC^0}$) cannot compute a nontrivial approximation to BP even in an average sense above the KS bound.
\textbf{Organization:} As described above, we first sketch a more elementary proof of the lower bound (impossibility result) in the $L = 2$ case of Theorem~\ref{thm:main} in Section~\ref{sec:1-bit-sketch}, then prove it completely (along with Corollary~\ref{corr:bp-fixpoint}) with a more powerful approach in Section~\ref{sec:multibit}. 
Missing proofs for the converse part are given in Appendix~\ref{sec:1-bit} and \ref{sec:appendix-multibit}.
The matching upper bound via a quantization of BP is given in Appendix~\ref{sec:achievability}.

\section{Impossibility of 1-bit reconstruction near criticality}
\label{sec:1-bit-sketch}
In this section, we will sketch the main ideas behind a relatively simple and self-contained information theoretic proof of the impossibility of 1-bit message-passing algorithms solving the reconstruction problem all the way to the Kesten-Stigum (KS) threshold. 
Our analysis of this case will also serve to illustrate the challenges encountered towards resolving Conjecture~\ref{conj:peres}, and shed additional light on its ultimate resolution in the next section. Complete statements and proofs for this section are provided in Appendix~\ref{sec:1-bit}. 


Throughout this section, we will adopt the convenient reparameterization $\varepsilon = 1/2 - \nu$. Note that with this reparameterization, the KS threshold corresponds to $4d\nu^{2} = 1$ i.e. the reconstruction problem is solvable if $4d\nu^{2} > 1$ and unsolvable if $4d\nu^{2} < 1$. 

\subsection{A direct proof of the Kesten-Stigum bound}
\label{subsec:KS-direct-proof-sketch}
Here, for simplicity, we will only discuss the KS bound in the case of the
infinite $d$-ary tree, deferring the general case to Section~\ref{sec:appendix-direct-KS-bound}. 
Denoting by $T_{n}^{\pm}$ the distributions
on the labels $(X_v)_{v \in V}$ for the leaves $V$ of the depth $n$ tree, 
conditioned on the root being $\pm$. Recall that 
$2s(T,V_{n},\varepsilon)=\TV(T_{n}^{+},T_{n}^{-})$.
Note also that by considering the labels at the vertices one level
below the root, it is easily seen that
\[
T_{n}^{\pm}\sim\left(\left(\frac{1}{2}+\nu\right)T_{n-1}^{\pm}+\left(\frac{1}{2}-\nu\right)T_{n-1}^{\mp}\right)^{\otimes d}.
\]
Owing to this recursive structure of the problem, it is much more
convenient to switch to an information measure which tensorizes well
(and which is also `stronger' than TV). Here, we make the choice of
working with the symmetrized version of KL-divergence (also known
as Jeffrey's divergence), defined by $\SKL(P,Q):=\KL(P,Q)+\KL(Q,P)$;
by Pinsker's inequality, $\SKL(T^+_{n},T^-_{n})\to0$ shows that $\TV(T^+_{n},T^-_{n})\to0$ as well.

Of key importance to us is the fact that SKL behaves very well under
`symmetric mixtures'; in Lemma~\ref{lem:mixing-inequality} we show using a short direct
computation that 
\[
\SKL\left(\left(\frac{1}{2}+\nu\right)P+\left(\frac{1}{2}-\nu\right)Q,\left(\frac{1}{2}-\nu\right)P+\left(\frac{1}{2}+\nu\right)Q\right)\leq4\nu^{2}\SKL(P,Q).
\]
Given this `mixing inequality', the proof of the KS bound
is now immediate. Indeed, 
\begin{align*}
\SKL(T_{n}^{+},T_{n}^{-}) & =d\SKL\left(\left(\frac{1}{2}+\nu\right)T_{n-1}^{+}+\left(\frac{1}{2}-\nu\right)T_{n-1}^{-},\left(\frac{1}{2}-\nu\right)T_{n-1}^{+}+\left(\frac{1}{2}+\nu\right)T_{n-1}^{-}\right)\\
 & \leq4\nu^{2}d\SKL(T_{n-1}^{+},T_{n-1}^{-})\leq(4\nu^{2}d)^{n-1}\SKL(T_{1}^{+},T_{1}^{-})=O\left((4\nu^{2}d)^{n-1}\right),
\end{align*}
so we see that $\lim_{n\to\infty}\SKL(T_{n}^{+},T_{n}^{-})=0$ if
$4\nu^{2}d<1$.

\subsection{Interlude: noisy message-passing algorithms fail near criticality}
\label{subsec:noisy-message-sketch}

Showing that `noisy' message-passing algorithms fail near criticality requires only a slight extension of the above discussion,
and is a natural segue into our discussion of 1-bit message-passing
algorithms. Here, by a noisy message-passing algorithm, we mean 
that for every node $u$ in the tree, messages from the
children of $u$ to $u$ are processed through independent copies
of a noisy channel $P_{Y|X}:\Sigma\to\Sigma$. In this setting, instead
of $T_{n}^{\pm}$, the natural choice of distributions to look at
are $P_{n}^{\pm}$, where $P_{n}^{\pm}$ denotes the distribution
on $\Sigma$ (which we interpret as the final message from
the root) obtained by broadcasting $n$ levels down, conditioned on
the root being $\pm$, and reconstructing using our (noisy) message-passing
algorithm. Once again, by considering the labels at the vertices one
level below the root, it is immediate that 
\begin{equation}\label{eqn:body-p_n-recursion}
P_{n}^{\pm}=f_{*}\left(\left(P_{Y|X}\circ\left(\left(\frac{1}{2}+\nu\right)P_{n-1}^{\pm}+\left(\frac{1}{2}-\nu\right)P_{n-1}^{\mp}\right)\right)^{\otimes d}\right),
\end{equation}
where $f:\Sigma^{d}\to\Sigma$ denotes the reconstruction function
at the root, and $f_{*}(\mu)$ denotes the pushforward of the measure
$\mu$ by the function $f$. In particular, it follows that if (we
show in Examples~\ref{example:mixture-noise}, \ref{example:bp-additive-noise} that this is indeed the case for many channels of interest)
the channel $P_{Y|X}$ satisfies a strong data-processing inequality
(SDPI) i.e. there exists some constant $\eta\in[0,1)$ such that for
all distributions $P,Q$ on $\Sigma$, 
$\SKL(P_{Y|X}\circ P,P_{Y|X}\circ Q)\leq\eta\SKL(P,Q)$
then it follows by a similar computation as
above that 
\[
\SKL(P_{n}^{+},P_{n}^{-})\leq4\nu^{2}\eta d\SKL\left(P_{n-1}^{+},P_{n-1}^{-}\right)=O\left((4\nu^{2}\eta d)^{n-1}\right),
\]
so we see that such algorithms can solve the reconstruction problem
only if $4\nu^{2}d\geq\eta^{-1}$ i.e. they do not work all the
way to the KS threshold (Theorem~\ref{thm:KS-noisy-general}).

\subsection{1-bit message-passing algorithms fail near criticality}
\label{subsec:1-bit-sketch}
Motivated by the observation that for a fixed finite alphabet $\Sigma$,
any function $f:\Sigma^{d}\to\Sigma$ cannot be injective (for $d$
large enough) on the support of any non-trivial product distribution
on $\Sigma^{d}$, and therefore, must `lose' information, it is tempting
to think that the above strategy for noisy message-passing algorithms
can be adapted directly to show that finite-bit message-passing algorithms fail near criticality as well. However, it is not
the case that a general function $f:\Sigma^{d}\to\Sigma$ satisfies
an SDPI, even when $\Sigma=\{0,1\}$:
\begin{example}[No SDPI for general $f$]\label{example:or-function}
For $P=\Ber(p),Q=\Ber(q)$, and $f:\{0,1\}^{d}\to\{0,1\}$ equal to
the OR-function, 
\[
\lim_{p,q\to0}\frac{\SKL(f_{*}(P^{\otimes d}),f_{*}(Q^{\otimes d}))}{\SKL(P^{\otimes d},Q^{\otimes d})}=1.
\]
This is because in the limit we can disregard all events
as negligible except that either all inputs are 0, or that a single
input is 1, and the OR function memorizes which event occurred.
\end{example}
On the other hand, it turns out that our original intuition is `mostly correct':
more precisely, we will show (Theorem~\ref{thm:restricted-sdpi-multibit}) that such functions do indeed satisfy
an SDPI provided
the input distributions under consideration have `robust full support'
i.e. they assume each symbol of the alphabet $\Sigma$ with
probability at least some uniform positive constant. In particular,
this shows that any potential finite-bit message-passing algorithm
which succeeds near criticality must get close to the boundary of
the probability simplex in $\R^{\Sigma}$ infinitely often. 

In the case when $\Sigma=\{0,1\}$, we can say even more, and prove
an inverse theorem (Theorem~\ref{thm:inverse-theorem}) for the non-contraction of SKL: not only can SKL
non-contraction only occur near the boundary of the probability simplex
(in this case, 
identified with $[0,1]$), but also, the functions achieving such
non-contraction are only those (Definition~\ref{defn:OR-like}) with behavior similar to the OR-function
(for $p_{n}^{+},p_{n}^{-}$ close to $0$), in that they are able to distinguish the
all 0s input from inputs with a single 1, or symmetrically for AND (with $p_{n}^{+},p_{n}^{-}$ close to $1$).
From this we see (Theorem~\ref{thm:no-reconstruction-fixed-function}) that 1-bit algorithms with the same reconstruction function at every node fail near criticality (eliminating the symmetry assumption from \cite{mossel1998recursive}), because the only mechanism which prevents contraction in SKL (behaving like
OR or AND near appropriate boundaries) also `repels' the distributional
iterates away from the boundary. 

In Theorem~\ref{thm:no-1bit-reconstruction}, we consider the 1-bit reconstruction problem when the reconstruction
functions at different levels are allowed to be different. 
This larger class includes natural reconstruction schemes such as the so-called `TRIBES' function from Boolean analysis, which uses either the AND-function
or the OR-function depending on the level: the potential
problem is that such an algorithm could alternate `losing' steps in
which the distributional dynamics go towards the boundary of the probability
simplex $[0,1]$ with `gaining' steps, where a function like AND or
OR (depending on the boundary) is applied to gain in SKL 
(note that we are assuming that $4d\nu^{2}>1$). 

To overcome
this obstacle, we introduce a Lyapunov function for our discrete time
dynamical system; more precisely, we define a function $\phi$ such that
$\phi \to-\infty$ implies that $\SKL(P^{+}_{n},P^{-}_{n})\to0$, and for which
we can show that $\phi$ decreases at every step. 
The essential idea is to
define $\phi(P,Q)$ to be $\log\SKL(P,Q)$ plus some `negative log-barrier' term,
which penalizes $\phi(P,Q)$ for moving away from
the boundary --- by carefully balancing these terms,
we can ensure $\phi$ indeed goes down at every step.
Finally, since the log-barrier term is bounded from
below, it follows that $\phi(P^{+}_{n},P^{-}_{n})\to-\infty$ implies that $\SKL(P^{+}_{n},P^{-}_{n})\to0$. 

\section{Impossibility of multibit reconstruction near criticality}
\label{sec:multibit}
In the previous section, we saw (Example~\ref{example:or-function}) that contrary to what may be the natural intuition, even restricting the messages to a single bit does not imply that a significant amount of information is destroyed at a particular level of reconstruction. In the 1-bit case, we overcame this obstacle using a multilevel analysis (of the iteration $(P^+_t,P^-_t) \mapsto (P^+_{t + 1},P^-_{t + 1})$) that treated the boundary of the 1-dimensional simplex specially. In the multibit case, the dynamics live in a higher-dimensional simplex and the boundary behavior appears very complicated to analyze (see Example~\ref{example:cycling-dynamics} in Appendix~\ref{apx:complicated}).

In this section, we give a new lower bound argument which completely overcomes this difficulty, proving the lower bound in Theorem~\ref{thm:main}. This argument requires several significant innovations, which we briefly summarize:

    \vspace{.1cm}
     \textbf{{Tracking only the law of the ``score'' $\E[X | Y]$:}} Let $X = X_{\rho}$ be the label of the root (of a depth $n$ tree) and $Y = Y_{\rho}$ be the reconstructed data at the root (i.e. the message that would be passed to an imaginary parent of the root node). The previous analysis tracked the complete distribution of $Y | X$. The recursion from the law for a depth $n$ tree to depth $n + 1$ tree is very easy to describe, but the resulting dynamics may be very complex in the multibit case (where the distributional recursion lives in a high-dimensional probability simplex). The new analysis considers only the induced law of $\E[X | Y]$ (i.e. the distribution of a natural real-valued random variable) and studies a BP-style recursion to relate the law at depths $n$ and $n + 1$. We remark that such an approach has been successfully used in many of the previous works around the reconstruction problem (see, e.g., \cite{BlRuZa:95,borgs2006kesten, pemantle2010critical})   
     The analysis of this nonlinear recursion is tamed by an approximate linearization and decoupling argument (Lemma~\ref{lem_approx}).
    
    \vspace{.1cm}
     \textbf{{Identifying attraction towards Gaussianity (in the natural Wasserstein metric):}} Recall that in the 1-bit case we were able to prove (Theorem~\ref{thm:inverse-theorem}) that ``good'' reconstruction functions (those that do not destroy much information) must push the law of $Y | X$ towards the middle of the 1-dimensional probability simplex. Analogously, we ultimately show (Lemma~\ref{lem_clt}) that good reconstruction functions push the law of $\E[X | Y]$ towards Gaussianity (measured in $W_2$ distance, see equation~\eqref{e_ng}).
     
    \vspace{.1cm}
    \textbf{{Multilevel analysis of kurtosis, variance, and Gaussianity:}} The proof of the key Gaussian attraction result (Lemma~\ref{lem_clt}) would be much simpler if, for $X_1,X_2$ i.i.d., the sum $\frac{1}{\sqrt{2}}(X_1 + X_2)$ were significantly closer to Gaussian (in $W_2$) than $X_1$ itself. Unfortunately, this is only true in the case of bounded kurtosis. Instead, we make a more complex argument with two main steps: (1) we first argue (Lemma~\ref{lem_fourth}) that the fourth moment is reasonably bounded after a sufficient number of reconstruction steps, and (2) give a multilevel tradeoff analysis showing that the kurtosis becomes large only when the variance of $\E[X | Y]$ shrinks significantly (i.e. information is destroyed).
Combining these ideas, we are able to show that any reconstruction algorithm on an alphabet of size $L$ which reconstructs all the way to the critical threshold would have to induce a distribution on $\E[X | Y]$ which is arbitrarily close to Gaussian --- but this is impossible for a distribution supported on $L$ atoms.\\\\
\textbf{Preliminaries: }Given an equiprobable $X\in \{\pm 1\}$ and an arbitrary random variable $Y$, denote 
the posterior mean by
\begin{align*}
 S_{X}(Y):=\E[X|Y]=\Pr(X = 1|Y)-\Pr(X = -1|Y).
\end{align*}
We remark that $S_{X}(Y)$ can be viewed as a discrete analogue of the score function in the estimation literature. Note that the probability of correctly reconstructing $X$ based on $Y$ equals $\frac{1}{2}\E\left[|S_X(Y)|\right]+\frac{1}{2}$.
The $\chi^2$-mutual information between $X$ and $Y$ equals 
$I_2(X;Y):=\E[S_X^2(Y)]$.
Since $S_X(Y)$ is bounded in $[-1,1]$, we have
$
\E[S_X^2(Y)]\le 
\E[|S_X(Y)|]\le \E^{1/2}[S_X^2(Y)]
$, so that as in \cite{BlRuZa:95, evans2000broadcasting},
the problem of solvability is reduced to bounding the $\chi^2$-mutual information.

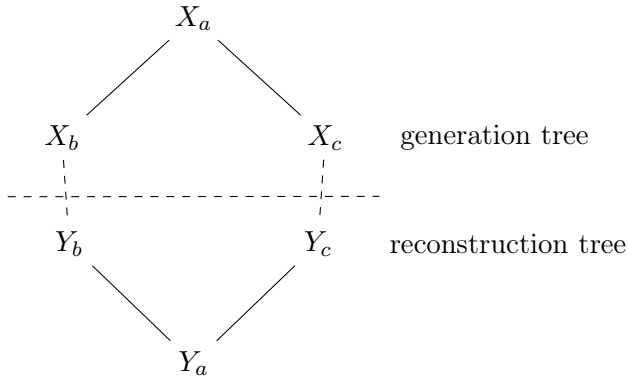
\begin{figure}[h!]
  \centering
\begin{tikzpicture}
[scale=1.75,
      dot/.style={draw,fill=black,circle,minimum size=0.7mm,inner sep=0pt},arw/.style={->,>=stealth}]
   \node[rectangle] (X) {$X_a$};
  \node[rectangle] (A) [below left= of X] {$X_b$};
  \node[rectangle] (B) [below right= of X] {$X_c$}; 
  \node[rectangle] (Y) [below=4cm of X] {$Y_a$};
  \node[rectangle] (U) [above left= of Y] {$Y_b$};
  \node[rectangle] (V) [above right= of Y] {$Y_c$};
    \node[rectangle] (C) [below left=0.5cm of A] {};
    \node[rectangle] (D) [below right=0.5cm of B] {};
    \node[rectangle] (E) [ right=0.5cm of B] {generation tree};
    \node[rectangle] (F) [ right=0.5cm of V] {reconstruction tree};
\draw (X) to (A);
\draw (X) to (B);
\draw [dashed](A) to (U);
\draw [dashed](B) to (V);
\draw (U) to (Y);
\draw (V) to (Y);
\draw [dashed](C) to (D);
\end{tikzpicture}
\caption{In this example, $S_{n-1}=S_{X_b}(Y_b)$, $S_n=S_{X_a}(Y_a)$, $\hat{S}_n=S_{X_a}(Y_b,Y_c)$, and $\bar{S}_n$ is an approximation of $\hat{S}_n$.}
\label{fig_1}
\end{figure}
Next, we define several key quantities in the proof of Theorem~\ref{thm:main}. 
For notational simplicity, 
we assume the tree is 2-ary ($d=2$) throughout the proofs,
noting that the result for general $d$ follows from the same argument.
Figure~\ref{fig_1} depicts parts of the generation tree and the mirroring reconstruction tree for node $a$ with children $b$ and $c$.
Note that $X_a$, $X_b$, $X_c$ denote binary labels and $Y_a$, $Y_b$, $Y_c$ denote reconstruction messages with values in $\Sigma$. 
Within any height $h$ tree (that is, from the root to a leaf there are $h$ edges) and for any $n\le h$, consider the following random variables:
\begin{itemize}
 \item $S_n$: defined as $S_{X_a}(Y_a)$ where $X_a$ is the height $n$ node in the generation tree and $Y_a$ is the mirroring height $n$ node in the reconstruction tree (recall \eqref{eqn:y_u}).
 \item $\hat{S}_n$: defined as $S_{X_a}(Y_b,Y_c)$ where $Y_b$ and $Y_c$ are the children of $Y_a$ in the reconstruction tree. 
 Note that $S_n$ is a conditional expectation of $\hat{S}_n$ (induced by applying $f_{a}$):  
\[
S_{X_a}(Y_a)=\E[X_a|Y_a] 
=\E[\E[X_a|Y_{a,b,c}]|Y_a]
=\E[\E[X|Y_{b,c}]|Y_a]
=\E[S_X(Y_b,Y_c)|Y_a].
\]
 \item $\bar{S}_n$: defined to be equal in distribution to $(1-2\varepsilon)(S_{n-1}+S'_{n-1})$,
where $S'_{n-1}$ is an independent copy of $S_{n-1}$. Since $S_{X_a}(Y_b) = (1-\varepsilon)S_{X_b}(Y_b) - \varepsilon S_{X_b}(Y_b) = (1-2\varepsilon)S_{X_b}(Y_b)$, one may view $\bar{S}_n$ as an idealized, decoupled version of $\hat{S}_n$. As will be shown in Lemma~\ref{lem_approx}, $\hat{S}_n$ and $\bar{S}_n$ are close in Wasserstein distance, which will allow us to carry out our analysis with the simpler quantity $\bar{S}_n$. 
\end{itemize}

Note that the subscript $n$ in all the random variables above denotes the height of the node in the generative tree (not the reconstruction tree).
Moreover, let $\sigma_n^2$, $\hat{\sigma}_n^2$, $\bar{\sigma}_n^2$, $\mu_n$, $\hat{\mu}_n$, $\bar{\mu}_n$ be the second and the fourth moments of these random variables.
The key of the proof is to study the evolution of the sequence $S_0\to \bar{S}_1\to \hat{S}_1\to S_1\to\dots$.

For any generative tree with height $h$ 
and any integer\footnote{We will also consider $L = \infty$, by which we mean there is no constraint on the alphabet size.} $L$, 
define 
\begin{align}\label{e_xi}
 \xi_h:=\sup\E[I_2(X;Y)],
\end{align}
where $X = X_{\rho}$ is the binary label on the root, $Y = Y_{\rho}$ is the final reconstruction, and the supremum is over all (possibly randomized) recursive reconstruction algorithms with memory $\log L$.
Here, we assume that the reconstruction functions at the same level are the same, but they are allowed to vary across levels. Similarly, for randomized algorithms, we assume that the distribution of the random reconstruction function at a node depends only on its level.

Since any randomized reconstruction algorithm for an $h+1$-level tree can be simulated by one for an $h$-level tree with the same memory, it follows
that $\xi_h$ monotonically decreases in $h$. 
Let $\xi:=\xi(\varepsilon,L)$ be the limit (which exists by monotonicity). 
Let $\varepsilon_c\in(0,1/2)$ be the supremum $\varepsilon$ for which
$\xi(\varepsilon,\infty)>0$.
As mentioned before, $d(1-2\varepsilon_c)^2=1$ is the KS bound. We now proceed to detail the steps in the proof of our main result:

\vspace{.1cm}
\textbf{Wasserstein approximation lemmas: }Recall that the idea of the converse proof is to show that if $\varepsilon$ is close to $\varepsilon_c$, then $S_n$ must converge to Gaussian for good algorithms.
This requires showing that $S_n$ is close to an i.i.d. sum.
While $\bar{S}_n$ is a bona fide i.i.d. sum (of $d$ independent copies of $S_{n-1}$, suitably scaled),
the distribution of $\hat{S}_n$ in relation to $S_{n-1}$ is more complicated.
However, it is possible to show that $\bar{S}_n$ and $\hat{S}_n$ are close near the critical threshold.
The idea is that the Wasserstein distance between $\bar{S}$ and $\hat{S}$ is small compared to their moments. A more general version of the following lemma is stated and proved in Section~\ref{sec:appendix-proof-linearization}.
\begin{lemma} 
\label{lem_approx}
With notation as above, we have that for any $n\in \{1,\dots,h-1\}$,
$$W_{2}^{2}(\hat{S}_{n+1}, \bar{S}_{n+1}) \leq \sigma_{n}^{2}\alpha_{2}(\sigma_{n}^{2}),$$
where $\alpha_{2}(\cdot)$ is a function satisfying $\lim_{x \to 0^{+}} \alpha_{2}(x) = 0$. A similar statement holds for $W^{4}_{4}$ with a suitable function $\alpha_{4}$, and with $\mu_{n}$ in place of $\sigma_{n}^{2}$. 
\end{lemma}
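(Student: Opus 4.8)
The plan is to first put $\hat{S}_{n+1}$ in closed form via Bayes' rule. Writing $m_b := S_{X_a}(Y_b) = (1-2\varepsilon)S_{X_b}(Y_b)$ and $m_c := S_{X_a}(Y_c) = (1-2\varepsilon)S_{X_c}(Y_c)$, the tree Markov structure --- conditionally on $X_a$ the two subtrees are independent, and $Y_b$ depends on its subtree only through $X_b$ --- yields
\[
\hat{S}_{n+1} \;=\; \E[X_a\mid Y_b,Y_c] \;=\; \frac{m_b+m_c}{1+m_bm_c}.
\]
Two facts will drive everything: (a) $m_b$ and $m_c$ are \emph{conditionally i.i.d.\ given $X_a$}, each with the marginal law of $(1-2\varepsilon)S_n$; and (b) $\bar{S}_{n+1}\stackrel{d}{=}m_b+m_b'$ where $m_b'$ is an independent copy of $m_b$. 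Setting $v:=\E[m_b^2]=(1-2\varepsilon)^2\sigma_n^2$, it then suffices to control two sources of error: (I) the nonlinearity $\tfrac{x+y}{1+xy}$ versus $x+y$, and (II) replacing the weakly dependent pair $(m_b,m_c)$ by an independent one.

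For (I), since $\bigl|\tfrac{x+y}{1+xy}-(x+y)\bigr|\le\frac{|x+y|\,|xy|}{1-(1-2\varepsilon)^2}$ on $x,y\in[-(1-2\varepsilon),1-2\varepsilon]$, on the common probability space one gets
\[
W_2^2\bigl(\hat{S}_{n+1},\,\mathrm{Law}(m_b+m_c)\bigr)\;\le\;\E\bigl[(\hat{S}_{n+1}-(m_b+m_c))^2\bigr]\;\le\;C_\varepsilon\bigl(\E[m_b^4m_c^2]+\E[m_b^2m_c^4]\bigr).
\]
Setting $g(X_a):=\E[m_b^2\mid X_a]=\E[m_c^2\mid X_a]$ (equal by the symmetry of the children), conditional independence gives $\E[m_b^4m_c^2]\le\E[g(X_a)^2]$; since $g\ge0$ and $\E g=v$, the two values $g(\pm1)$ lie in $[0,2v]$, so $\E[g^2]\le2v^2$ and the error from (I) is $O_\varepsilon(v^2)$.

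For (II), set $\mu_\pm:=\mathrm{Law}(m_b\mid X_a=\pm1)$, so that $\mathrm{Law}(m_b)=\mu:=\tfrac12\mu_++\tfrac12\mu_-$; since $\E[m_b^2\mid X_a=\pm1]\le2v$, both $\mu_\pm$ lie within $W_2$-distance $\sqrt{2v}$ of $\delta_0$, hence $W_2^2(\mu_+,\mu_-)\le8v$. I would couple $\mathrm{Law}(m_b+m_c)$ to $\mathrm{Law}(\tilde m_b+\tilde m_c)$ (i.i.d.\ copies of $\mu$, whose law is exactly that of $\bar{S}_{n+1}$) by setting $\tilde m_b=m_b$ and, conditionally on $m_b$, coupling $\mathrm{Law}(m_c\mid m_b)$ optimally to $\mu$. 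Since $m_c\perp m_b$ given $X_a$, $\mathrm{Law}(m_c\mid m_b)=\alpha(m_b)\mu_++(1-\alpha(m_b))\mu_-$ with $\alpha(x):=\Pr(X_a=1\mid m_b=x)$, and transporting the mismatched $\tfrac12|2\alpha-1|$ fraction of mass between $\mu_+$ and $\mu_-$ bounds the conditional $W_2^2$-cost by $\tfrac12|2\alpha(m_b)-1|\,W_2^2(\mu_+,\mu_-)$. The crucial point is that $2\alpha(m_b)-1=\E[X_a\mid m_b]=m_b$, because $m_b=\E[X_a\mid Y_b]$ is $\sigma(m_b)$-measurable, so averaging gives a total cost $\le\tfrac12 W_2^2(\mu_+,\mu_-)\,\E|m_b|\le\tfrac12\cdot8v\cdot\sqrt v=O(v^{3/2})$. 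Combining (I) and (II) through the triangle inequality for $W_2$ then gives $W_2^2(\hat{S}_{n+1},\bar{S}_{n+1})\le C_\varepsilon(v^2+v^{3/2})\le C'_\varepsilon\,\sigma_n^3=\sigma_n^2\cdot O_\varepsilon(\sigma_n)$, so one may take $\alpha_2(x)=O_\varepsilon(\sqrt x)$; the constants are uniform for $\varepsilon$ near $\varepsilon_c$, where $1-(1-2\varepsilon)^2\to1-1/d>0$ (the case $d\ge3$ is identical, using the $d$-input BP combine in place of the two-input one). The $W_4^4$ bound follows the same two-part scheme: one bounds $\E[(m_b+m_c)^4(m_bm_c)^4]$ by fourth-moment analogues of the conditional-independence estimates above (now with $\E g=(1-2\varepsilon)^4\mu_n$), and repeats the decoupling step with $W_4$ in place of $W_2$.

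The hard part is the decoupling estimate (II): a naive coupling that merely matches marginals costs $\Theta(W_2^2(\mu_+,\mu_-))=\Theta(v)=\Theta(\sigma_n^2)$, which is \emph{not} $o(\sigma_n^2)$ and hence too weak for the Gaussian-attraction step that uses this lemma. The gain must come from recognizing that the dependence of $m_c$ on $m_b$ is ``switched on'' only to the extent that $m_b$ is itself informative about the hidden label $X_a$, and the total such informativeness is exactly $\E|\E[X_a\mid m_b]|=\E|m_b|\le\sqrt v$; extracting this extra factor of $\sqrt v$ is what pushes the bound past $\sigma_n^2$. The analogous but more delicate accounting needed in the $W_4^4$ case --- where this informativeness factor must be weighed against $\mu_n$ rather than $\sigma_n^2$ --- is where I expect most of the technical work to lie.
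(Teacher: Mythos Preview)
Your proof is correct and takes a genuinely different route from the paper's.

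The paper proves this via a truncation and density-ratio argument (its Lemma~\ref{lem_approx_detailed}): it observes that the joint law of $(m_b,m_c)$ has density $1+m_bm_c$ with respect to the product law, restricts to the event $\{|m_bm_c|\le\delta\}$, and on that event invokes a general lemma (its Proposition~\ref{prop_w}) bounding $W_p$ between two laws whose Radon--Nikodym derivative deviates from $1$ by at most $\tau$. The complement of the truncation set is handled by Markov's inequality, and optimizing $\delta$ yields the explicit $\alpha_p(\mu)=O(\mu^{1/(p+1)})$, so $\alpha_2(x)=O(x^{1/3})$ and $\alpha_4(x)=O(x^{1/5})$. This argument is uniform in $p$, does not depend on $\varepsilon$, and never uses the binary nature of $X_a$.

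Your two-step decomposition --- linearization error on the actual (dependent) pair, then decoupling --- exploits the structure more directly. The decoupling step is where the approaches really diverge: instead of measuring the global density tilt $1+m_bm_c$, you recognize that the conditional law of $m_c$ given $m_b$ is a mixture $\alpha(m_b)\mu_++(1-\alpha(m_b))\mu_-$, and that the mixture weight satisfies $2\alpha(m_b)-1=\E[X_a\mid m_b]=m_b$; transporting only the mismatched mass then costs an extra factor of $\E|m_b|\le\sqrt v$. The payoff is a sharper rate, $\alpha_2(x)=O_\varepsilon(\sqrt x)$ versus the paper's $O(x^{1/3})$ (and likewise $\alpha_4(x)=O_\varepsilon(x^{1/4})$ versus $O(x^{1/5})$). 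The cost is an $\varepsilon$-dependent constant through $1/(1-(1-2\varepsilon)^2)$ and the reliance on $X_a\in\{\pm1\}$ for the step $\E[g(X_a)^2]\le 2v^2$; both are harmless here, since the paper only cares about $\varepsilon$ near $\varepsilon_c$ (where $1-(1-2\varepsilon)^2\to 1-1/d>0$) and about binary broadcast, but the paper's route would generalize more readily to other channels.
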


To use Lemma~\ref{lem_approx}, we need to upper-bound the moments of the score. For the second moment, we begin with the useful observation that in a tree of height $h$,
$\sigma_n^2\le \xi_n$ for any $n\in\{1,2,\dots,h\}$.  
The following result shows that, luckily,
there is a simple upper bound on $\xi(\varepsilon,L)$ which does not depend on $L$,
which in particular shows that the moments vanish when the noise is near the critical threshold.
\begin{prop}\label{prop_approx1}
Recall that we assumed that the degree $d=2$. 
For any $\varepsilon\in [0,1]$,
and $L\in\mathbb{N}\cup\{\infty\}$, either $\xi(\varepsilon,L) = 0$ or
$\xi(\varepsilon,L)\le \omega(\epsilon)$,
where we defined 
\begin{align}\label{e_omega}
\omega(\varepsilon)
:=4-\frac{2}{(1-2\varepsilon)^2}.
\end{align}
\end{prop}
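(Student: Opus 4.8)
The plan is to reduce to the unconstrained case $L=\infty$ (belief propagation, BP) and then to extract a quadratic recursion for the BP $\chi^2$-information that forces the claimed dichotomy in the limit. Set $\theta:=1-2\varepsilon$; we may assume $\varepsilon\in(0,1/2)$, since $\varepsilon\in\{0,1/2\}$ are degenerate (then $\xi(\varepsilon,L)=1\le\omega(\varepsilon)$, resp.\ $\xi(\varepsilon,L)=0$), and $\varepsilon>1/2$ reduces to $1-\varepsilon$ because both $I_2$ and $\omega$ depend on $\varepsilon$ only through $\theta^2$. For every height $h$, any memory-$\log L$ algorithm is also an unbounded-memory algorithm, so $\xi_h(\varepsilon,L)\le\xi_h(\varepsilon,\infty)$ and thus $\xi(\varepsilon,L)\le\xi(\varepsilon,\infty)=:\xi_*$; it therefore suffices to bound $\xi_*$. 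Writing $x_h:=I_2(X_\rho;X_{V_h})=\E[\E[X_\rho\mid X_{V_h}]^2]$ for the leaf set $V_h$ of the height-$h$ tree, the BP message at the root equals $\E[X_\rho\mid X_{V_h}]$, a sufficient statistic, so its $\chi^2$-information with $X_\rho$ is $x_h$; any reconstruction $Y_\rho$ forms a Markov chain $X_\rho-X_{V_h}-Y_\rho$, and $I_2$ obeys the data-processing inequality ($S_X(Z)=\E[S_X(Y)\mid Z]$ when $Z$ is a function of $Y$, then apply Jensen). Hence $\xi_h(\varepsilon,\infty)=x_h$ and $\xi_*=\lim_h x_h$, the limit existing by monotonicity of $\xi_h$.

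I would then set up the BP one-step recursion. Take the root $a$ of a height-$n$ tree with children $b,c$; with $Y_b,Y_c$ the BP messages of the two height-$(n-1)$ subtrees, the root message is $\E[X_a\mid Y_b,Y_c]$. Using $\E[X_a\mid Y_b]=\theta Y_b$ and the conditional independence of $Y_b,Y_c$ given $X_a$, multiplying likelihood ratios gives $\E[X_a\mid Y_b,Y_c]=\frac{u+v}{1+uv}$ with $u:=\theta Y_b$, $v:=\theta Y_c$. By the $\mathbb{Z}_2$-symmetry of broadcasting, the unconditional law $\mathcal{L}(u)$ is symmetric about $0$, the conditional law $\mathcal{L}(u\mid X_a=-1)$ is its reflection, and $p:=\mathcal{L}(u\mid X_a=1)$ satisfies $p(\mathrm{d}u)=(1+u)\,\mathcal{L}(u)(\mathrm{d}u)$. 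Writing $\nu_{2j}:=\int u^{2j}\,\mathcal{L}(u)(\mathrm{d}u)$, this gives $\int u^{2j-1}\,p=\int u^{2j}\,p=\nu_{2j}$ for $j\ge1$, with $\nu_2=\theta^2 x_{n-1}=:m$ and $0\le\nu_{2j+2}\le\theta^2\nu_{2j}$. The same symmetry gives $x_n=\E[\E[X_a\mid Y_b,Y_c]^2]=\E[\E[X_a\mid Y_b,Y_c]\mid X_a=1]$, so conditioning on $X_a=1$ (making $u,v$ i.i.d.\ $\sim p$) and expanding $\frac{1}{1+uv}=\sum_k(-uv)^k$ collapses, via these moment identities, to
\[
x_n=\E_{u,v\sim p}\Big[\frac{u+v}{1+uv}\Big]=2\Big(m-m^2+\sum_{j\ge1}\nu_{2j+2}(\nu_{2j}-\nu_{2j+2})\Big).
\]
To bound the tail, $\nu_{2j+2}\le\nu_6$ for $j\ge2$ together with telescoping gives $\sum_{j\ge2}\nu_{2j+2}(\nu_{2j}-\nu_{2j+2})\le\nu_6\nu_4\le\theta^2\nu_4^2$, so the whole sum is at most $\nu_4 m-(1-\theta^2)\nu_4^2$; this is increasing in $\nu_4$ on $[0,\theta^2 m]$ (since $2\theta^2(1-\theta^2)\le1/2$) and $\nu_4\le\theta^2 m$, hence at most $\theta^2 m^2\big(1-\theta^2(1-\theta^2)\big)$. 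Substituting $m=\theta^2 x_{n-1}$ yields
\[
x_n\le 2\theta^2 x_{n-1}-C\,x_{n-1}^2,\qquad C:=2\theta^4(1-\theta^2)(1+\theta^4)>0.
\]

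Letting $n\to\infty$ then gives $\xi_*\le2\theta^2\xi_*-C\xi_*^2$, i.e.\ $C\xi_*^2\le(2\theta^2-1)\xi_*$. If $\theta^2\le1/2$ (at or above the KS threshold) this forces $\xi_*=0$. If $\theta^2>2/3$ then $\omega(\varepsilon)=4-2/\theta^2>1\ge\xi_*$ and there is nothing to prove. In the remaining range $1/2<\theta^2\le2/3$, either $\xi_*=0$ or $\xi_*\le(2\theta^2-1)/C$; writing $C=\tfrac{\theta^2}{2}\,q(\theta^2)$ with $q(t):=4t(1-t)(1+t^2)$ and $\omega(\varepsilon)=2(2\theta^2-1)/\theta^2$, this reads $\xi_*\le\omega(\varepsilon)/q(\theta^2)$, and on $[1/2,2/3]$ one has $t(1-t)\ge2/9$ and $1+t^2\ge5/4$, so $q(\theta^2)\ge10/9>1$ and $\xi_*\le\omega(\varepsilon)$. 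With the reduction of the first paragraph, this gives the statement for all $L$.

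The step I expect to be the main obstacle is the identification in the second paragraph: obtaining the change-of-measure $p=(1+u)\,\mathcal{L}(u)$ together with the moment identities $\int u^{2j-1}\,p=\int u^{2j}\,p=\nu_{2j}$, which are exactly what make the geometric expansion telescope into the clean two-term form, along with the symmetry bookkeeping that lets one replace $\E[\,\cdot\,]$ by $\E[\,\cdot\mid X_a=1]$. Thereafter the only genuinely quantitative point is the tail estimate (using $\nu_6\le\theta^2\nu_4$ and then maximizing over $\nu_4$), which has to be sharp enough to give a constant $C$ with $(2\theta^2-1)/C\le\omega(\varepsilon)$, equivalently $q(\theta^2)\ge1$; a cruder tail bound such as $\sum_{j\ge1}\nu_{2j+2}(\nu_{2j}-\nu_{2j+2})\le\theta^2 m^2$ already gives $\xi_*\le(2\theta^2-1)/\big(2\theta^4(1-\theta^2)\big)$, which suffices to show $\xi(\varepsilon,L)\to0$ as $\varepsilon\to\varepsilon_c$ but does not match $\omega(\varepsilon)$ exactly.
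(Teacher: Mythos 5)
Your proof is correct, and its overall strategy coincides with the paper's: reduce to $L=\infty$ (where BP attains the supremum in the definition of $\xi_h$ by the data-processing inequality for $I_2$), establish a quadratic contraction of the $\chi^2$-information across one level of the BP recursion, and pass to the limit. Where you differ is in how the quadratic term is extracted. The paper isolates this as Lemma~\ref{lem_approx1}: writing $\hat S=\tfrac{a+b}{1+ab}$ under the tilted product measure, it computes $\E[\bar S^2]-\E[\hat S^2]=\E\bigl[\tfrac{(a+b)^2ab}{1+ab}\bigr]$, symmetrizes by replacing $a$ with $-a$, and averages to get $\E[\bar S^2]-\E[\hat S^2]\ge\E[a^2b^2]=\tfrac14\E^2[\bar S^2]$, yielding $\sigma_n^2\le 2\theta^2\sigma_{n-1}^2-\theta^4\sigma_{n-1}^4$. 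You instead condition on $X_a=1$, use the change of measure $p(\mathrm{d}u)=(1+u)\mathcal{L}(u)(\mathrm{d}u)$ and the resulting moment identities to expand $\tfrac{1}{1+uv}$ as a geometric series and telescope, arriving at $x_n\le 2\theta^2x_{n-1}-2\theta^4(1-\theta^2)(1+\theta^4)x_{n-1}^2$. Your constant $C=2\theta^4(1-\theta^2)(1+\theta^4)$ is smaller than the paper's $\theta^4$ when $\theta^2$ is near $2/3$, which is why you need the extra case split at $\theta^2=2/3$ (falling back on $\omega(\varepsilon)>1\ge\xi_*$ there), whereas the paper's bound $\xi\le(2\theta^2-1)/\theta^4\le\omega(\varepsilon)$ needs only $\theta^2\ge1/2$; both routes land on the stated $\omega(\varepsilon)$. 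The trade-off is that the symmetrization argument is shorter and reusable (the paper also needs the matching lower bound of Lemma~\ref{lem_approx1} for the achievability proof), while your expansion is more explicit and makes visible exactly which higher moments are being discarded.
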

In particular, Proposition~\ref{prop_approx1} implies the KS bound for reconstruction (including that the reconstruction problem is not solvable \emph{at} the threshold). The proof   follows from analysis of the standard BP recursion and is given in Section~\ref{subsec:approx1}; we note that a similar proof of the KS bound was previously discovered in \cite{borgs2006kesten}.

\vspace{.1cm}
\textbf{Bounding the fourth moment: }Wasserstein CLT results for i.i.d. sums are known under bounded fourth moment assumptions.
The following bound on the fourth moment of $S_n$ is derived from a recursive analysis given in Appendix~\ref{sec:appendix-fourth-moment-proof}.
\begin{lem}\label{lem_fourth}
There exists $\varepsilon_1\in(0,\varepsilon_c)$ such that for any $\varepsilon\in(\varepsilon_1,\varepsilon_c)$, 
there exist $h_1 = h_1(\varepsilon,L), h_2=h_2(\varepsilon,L)$ for which the following holds. For any
tree of height $h\ge h_2$ and any reconstruction algorithm, we have
either $\xi(\varepsilon,L) = 0$ or $\mu_n\le 13\xi^2(\varepsilon,L)$
at any level $n\in \{h_1,\dots, h\}$.
\end{lem}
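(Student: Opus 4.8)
If $\xi := \xi(\varepsilon,L) = 0$ the statement is vacuous, so assume $\xi > 0$, and write $\theta := 1-2\varepsilon$, so that (for $d=2$) the KS point is $2\theta^{2} = 1$. We will freely use three facts: (a) $|S_n|\le 1$, hence $\mu_n \le \sigma_n^{2}$; (b) the observation $\sigma_n^{2}\le \xi_n$ together with $\xi_n \downarrow \xi$; and (c) Proposition~\ref{prop_approx1}, which gives $\xi \le \omega(\varepsilon) = 4 - 2\theta^{-2}$, a quantity that tends to $0$ as $\varepsilon\to\varepsilon_c^{-}$ (because $\theta^{2}\to\tfrac12$). The plan is to set up a one-step recursion for the fourth moment $\mu_n$ along the chain $S_{n-1}\to\bar S_n\to\hat S_n\to S_n$ and to observe that near criticality this recursion contracts to a fixed point of size $O(\xi^{2})$.

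The recursion has three ingredients. First, $\bar S_n$ has the law of $\theta(S_{n-1}+S'_{n-1})$ with $S'_{n-1}$ an independent copy of the mean-zero variable $S_{n-1}$, so expanding the fourth power and discarding the odd cross-terms gives the exact identity $\bar\mu_n = \theta^{4}\bigl(2\mu_{n-1} + 6\sigma_{n-1}^{4}\bigr)$. Second, by the $W_4$ form of Lemma~\ref{lem_approx}, $W_4^{4}(\hat S_n,\bar S_n)\le \mu_{n-1}\alpha_4(\mu_{n-1})$ with $\alpha_4(x)\to 0$ as $x\to 0^{+}$; transporting along the optimal coupling and using the triangle inequality for the $L^{4}$ norm gives $\|\hat S_n\|_4 \le \|\bar S_n\|_4 + W_4(\hat S_n,\bar S_n)$. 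Third, $S_n = \E[\hat S_n\mid Y_a]$ is a conditional expectation of $\hat S_n$, and conditional expectation is an $L^{4}$-contraction, so $\|S_n\|_4\le\|\hat S_n\|_4$, i.e.\ $\mu_n\le\hat\mu_n$. Chaining these,
\[
\mu_n^{1/4} \;\le\; \bigl(\theta^{4}(2\mu_{n-1} + 6\sigma_{n-1}^{4})\bigr)^{1/4} + \bigl(\mu_{n-1}\alpha_4(\mu_{n-1})\bigr)^{1/4}\qquad (n\le h).
\]
All quantities here belong to a fixed (possibly randomized) algorithm, and the thresholds introduced below depend only on $(\varepsilon,L)$, so the resulting bound will be uniform over algorithms.

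The decisive structural point is that near criticality $2\theta^{4} < 1$ (indeed $2\theta^{4}\to\tfrac12$ as $\varepsilon\to\varepsilon_c^{-}$), so $\mu\mapsto 2\theta^{4}\mu$ strictly contracts, while the inhomogeneous term is genuinely quadratic in $\xi$ because $\sigma_{n-1}^{4}\le\xi_{n-1}^{2}$. Choose $\varepsilon_1<\varepsilon_c$ so close to $\varepsilon_c$ that on $(\varepsilon_1,\varepsilon_c)$ one has $2\theta^{2}\le 1+c$ for a small absolute constant $c$ and also $\omega(\varepsilon)$ — hence $\xi$ — as small as needed. Fix $h_1 = h_1(\varepsilon,L)$ with $\xi_n\le(1+c)\xi$ for all $n\ge h_1$; then for $n>h_1$ one automatically has $\mu_{n-1}\le\sigma_{n-1}^{2}\le\xi_{n-1}\le(1+c)\xi$, so $\alpha_4(\mu_{n-1})\le\alpha_4((1+c)\xi)=:\delta$, which tends to $0$ as $\varepsilon\to\varepsilon_c$. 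Raising the displayed inequality to the fourth power and bounding the four cross-terms (each is a product of a positive power of $\bar\mu_n = O(\mu_{n-1}+\xi^{2})$ and a positive power of $(\mu_{n-1}\delta)^{1/4}$, hence at most $\eta(\delta)(\mu_{n-1}+\xi^{2})$ for some $\eta$ with $\eta(\delta)\to 0$) produces an affine recursion
\[
\mu_n \;\le\; a\,\mu_{n-1} + b,\qquad a := 2\theta^{4}+\eta(\delta) < 1,\qquad b := \bigl(6\theta^{4}(1+c)^{2}+\eta(\delta)\bigr)\xi^{2},
\]
valid for all $n>h_1$ once $\varepsilon_1$ is sufficiently close to $\varepsilon_c$; note $b = O(\xi^{2})$ and, near criticality, $a\to\tfrac12$ and $b/(1-a)\to 3\xi^{2}$.

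It remains to iterate. From the trivial bound $\mu_{h_1}\le(1+c)\xi$ we get $\mu_n \le a^{\,n-h_1}(1+c)\xi + b/(1-a) \le a^{\,n-h_1}\cdot 2\xi + 4\xi^{2}$ for $n\ge h_1$; choosing $h_2 = h_2(\varepsilon,L)\ge h_1$ large enough that $a^{\,h_2-h_1}\cdot 2\xi \le 9\xi^{2}$ — which costs only $h_2-h_1 = O(\log(1/\xi))$ extra levels — yields $\mu_n\le 13\xi^{2}$ for all $n\ge h_2$, and the same estimate shows $[0,13\xi^{2}]$ is forward-invariant under the recursion, so this persists for every $n\in\{h_2,\dots,h\}$, which is the assertion of the lemma (with the two thresholds taken equal if desired). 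The main difficulty is not any individual estimate but getting the three-step recursion to close: it relies on (i) the a priori bound $\mu_n\le\sigma_n^{2}\le\xi_n\to 0$ from Proposition~\ref{prop_approx1}, which is exactly what makes the inhomogeneous term $O(\xi^{2})$ (so that a \emph{bounded kurtosis} conclusion is meaningful) and what makes the transport error $\alpha_4(\mu_{n-1})$ negligible; and (ii) the elementary fact $2\theta^{4}<1$ near $\varepsilon_c$, which converts the fourth-moment update into a genuine contraction. Without either input the cross-terms arising from the $\hat S_n$-versus-$\bar S_n$ discrepancy would dominate and the recursion would not close.
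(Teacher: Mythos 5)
Your proposal is correct and follows essentially the same route as the paper's proof: the same chain $S_{n-1}\to\bar S_n\to\hat S_n\to S_n$, the exact fourth-moment identity for $\bar S_n$, the $W_4$ estimate from Lemma~\ref{lem_approx} combined with the triangle inequality, Jensen's inequality for the conditional-expectation step, and the resulting affine contraction $\mu_n\le a\mu_{n-1}+O(\xi^2)$ iterated over $O(\log(1/\xi))$ levels starting from the a priori bound at level $h_1$. The only differences are cosmetic (you absorb the cross-terms slightly more sharply than the paper's $(x+y)^4\le\tfrac43x^4+20000y^4$ bound, yielding a contraction factor near $\tfrac12$ rather than $\tfrac34$).
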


\textbf{Normality for good algorithms near the threshold: }Given a real-valued random variable $Z$, let us define its Wasserstein non-Gaussianness by
\begin{align}
\mathcal{E}(Z):=\inf_{\sigma>0}W_2(Z,\,\E[Z]+\sigma G)
\label{e_ng}
\end{align}
where $G$ is the standard Gaussian random variable. The following lemma, exploiting the Wasserstein CLT, is proved in Section~\ref{sec:appendix-proof-clt}. 
\begin{lem}\label{lem_clt}
There exists $\varepsilon_2\in(0,\varepsilon_c)$ such that for any $\varepsilon\in(\varepsilon_2,\varepsilon_c)$,
$L\in\{1,2,\dots\}\cup\{\infty\}$,
and $\delta\in(0,1/2)$, 
either $\xi(\varepsilon,L) = 0$ or
\begin{align*}
\lim_{h\to\infty}\sup_{\textrm{algorithms}\colon \sigma_h^2\ge(1-\delta)\xi(\varepsilon,L)}\mathcal{E}(S_h)
\le c_4\sqrt{\xi(\varepsilon,L)}\left(
(\varepsilon_c-\varepsilon)^{1/13}
+\sqrt{\delta\log\frac{1}{\varepsilon_c-\varepsilon}}
\right),
\end{align*}
where $c_4$ is an absolute constant.
\end{lem}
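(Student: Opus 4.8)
The goal is to show that under the constraint $\sigma_h^2 \ge (1-\delta)\xi$, the score $S_h$ is close to Gaussian in $W_2$, with the bound scaling like $\sqrt{\xi}$ times a small factor. The natural plan is a multilevel induction on the height, tracking simultaneously the non-Gaussianness $\mathcal{E}(S_n)$, the variance $\sigma_n^2$, and the fourth moment $\mu_n$. Let me think about what's available and what the obstacles are.

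We have three recursions to chain together at each level: $S_{n-1} \to \bar S_n \to \hat S_n \to S_n$. The step $S_{n-1} \to \bar S_n$ is an i.i.d. sum (of $d$ copies, here $d=2$, scaled by $(1-2\varepsilon)$), so this is where Gaussian attraction should come from via a quantitative Wasserstein CLT (Berry–Esseen type in $W_2$). The step $\bar S_n \to \hat S_n$ is controlled by Lemma~\ref{lem_approx}: $W_2^2(\hat S_n, \bar S_n) \le \sigma_{n-1}^2 \alpha_2(\sigma_{n-1}^2)$, which is small compared to $\sigma_{n-1}^2$ near criticality because $\sigma_{n-1}^2 \le \xi \le \omega(\varepsilon)$ is small (Proposition~\ref{prop_approx1}). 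The step $\hat S_n \to S_n$ is a conditional expectation (a contraction for $W_2$? no — conditioning reduces variance but need not reduce $W_2$ to a Gaussian); this is the subtle one.

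**Key steps.** First, I would record the variance recursion: $\sigma_n^2 = d(1-2\varepsilon)^2 \sigma_{n-1}^2 - (\text{loss from }\hat S_n \to S_n) \pm O(\sigma_{n-1}^2 \alpha_2(\sigma_{n-1}^2))$; near criticality $d(1-2\varepsilon)^2 \approx 1$, so under the hypothesis $\sigma_h^2 \ge (1-\delta)\xi$ the total variance loss over all levels is at most $O(\delta)\xi$ plus the accumulated approximation errors — this forces the per-level variance loss to be summable and hence small at most levels. Second, I would use Lemma~\ref{lem_fourth} to assert $\mu_n \le 13\xi^2$ for $n$ large, so the normalized fourth moment (kurtosis) $\mu_n/\sigma_n^4 \le 13\xi^2/((1-\delta)\xi)^2 = O(1)$ is bounded — this is exactly the bounded-kurtosis regime where the $W_2$ CLT gives a clean rate. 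Third, for the i.i.d. sum step, apply a quantitative $W_2$ CLT: $W_2(\bar S_n/\bar\sigma_n, G) \lesssim$ something like $\mathcal{E}(S_{n-1}/\sigma_{n-1})$ plus a contraction, i.e. averaging two copies shrinks the non-Gaussian part by a constant factor when kurtosis is bounded. Fourth, handle the conditional-expectation step: since $S_n = \E[\hat S_n \mid Y_a]$ and $\mathcal{E}$ is defined as an infimum over Gaussians, I would argue $\mathcal{E}(S_n) \le \mathcal{E}(\hat S_n)$ using a suitable coupling / Jensen-type argument in $W_2$ (conditional expectation is a $W_2$-contraction onto the $\sigma(Y_a)$-measurable functions, and projecting a near-Gaussian coupling stays near-Gaussian — modulo the variance change, which is absorbed by the free scaling parameter $\sigma$ in the definition of $\mathcal{E}$). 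Chaining: $\mathcal{E}(S_n) \le \mathcal{E}(\hat S_n) \le \mathcal{E}(\bar S_n) + W_2(\hat S_n, \bar S_n) \le (\text{contracted }\mathcal{E}(S_{n-1})) + \sqrt{\sigma_{n-1}^2 \alpha_2(\sigma_{n-1}^2)}$.

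**Iterating and collecting error terms.** Iterating the contraction, $\mathcal{E}(S_h)$ after many levels is dominated by the steady-state balance between the geometric contraction rate and the per-step additive error $\sqrt{\sigma_{n-1}^2\alpha_2(\sigma_{n-1}^2)} \lesssim \sqrt{\xi}\sqrt{\alpha_2(\xi)}$; since $\alpha_2(\xi) \to 0$ as $\varepsilon \to \varepsilon_c$ and $\xi \le \omega(\varepsilon) \asymp (\varepsilon_c - \varepsilon)$, one extracts a bound of the form $\sqrt{\xi}\cdot (\varepsilon_c-\varepsilon)^{1/13}$ (the exponent $1/13$ coming from how the fourth-moment control in Lemma~\ref{lem_fourth} degrades, via the $13\xi^2$ constant, into the CLT rate). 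The $\sqrt{\delta \log\frac{1}{\varepsilon_c-\varepsilon}}$ term arises from the levels where the variance loss is not negligible: summing a bounded total loss $O(\delta)\xi$ over roughly $\log\frac{1}{\varepsilon_c-\varepsilon}$ "active" levels (the number of levels needed for the kurtosis to stabilize and for $\sigma_n^2$ to reach near its limit) and paying $\sqrt{\cdot}$ for converting a variance defect into a $W_2$ distance — each level where variance drops by $\eta_n$ contributes $\sqrt{\eta_n}$ to $\mathcal{E}$, and Cauchy–Schwarz over $\asymp \log\frac{1}{\varepsilon_c-\varepsilon}$ levels gives $\sqrt{(\log\frac{1}{\varepsilon_c-\varepsilon})\cdot \delta\xi}$.

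**Main obstacle.** The hardest part, I expect, is the combination of the conditional-expectation step with the CLT in the presence of a genuinely variable (and possibly degenerating) variance. The quantitative $W_2$ CLT for i.i.d. sums with bounded kurtosis is standard, and Lemma~\ref{lem_approx} cleanly handles $\hat S \approx \bar S$; but controlling $\mathcal{E}(S_n)$ in terms of $\mathcal{E}(\hat S_n)$ requires knowing that the conditioning map $\hat S_n \mapsto \E[\hat S_n \mid Y_a]$ does not create non-Gaussianity, and simultaneously that when it does destroy variance it destroys enough of it to pay for the defect — this is the "multilevel tradeoff between kurtosis, variance, and Gaussianity" flagged in the introduction. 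Making this tradeoff quantitative — showing that a reconstruction function which keeps $\sigma_n^2$ close to $\xi$ at \emph{every} level (forced by the hypothesis $\sigma_h^2 \ge (1-\delta)\xi$ and monotonicity of variance loss) cannot also be doing anything that inflates kurtosis or non-Gaussianity — is where the real work lies, and it is why one needs Lemma~\ref{lem_fourth} as an \emph{a priori} input rather than trying to propagate a fourth-moment bound through the same induction.
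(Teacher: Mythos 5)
Your proposal assembles the right ingredients (Lemma~\ref{lem_approx} for $\hat S_n\approx\bar S_n$, Lemma~\ref{lem_fourth} for the fourth moment, the variance-budget accounting that produces the $\sqrt{\delta\log\frac{1}{\varepsilon_c-\varepsilon}}$ term via Cauchy--Schwarz, and the bound $W_2(S_n,\hat S_n)\le\sqrt{\hat\sigma_n^2-\sigma_n^2}$ from the natural coupling), but the engine of your induction --- the claim that a single i.i.d. doubling contracts the non-Gaussianness, i.e. $\mathcal{E}\bigl(\tfrac{X+X'}{\sqrt2}\bigr)\le\kappa\,\mathcal{E}(X)$ with a uniform $\kappa<1$ under bounded kurtosis --- is asserted without proof and is not a standard fact. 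Subadditivity of $W_2$ under independent sums gives only non-expansion ($\kappa=1$), which yields no decay when iterated, and Rio-type $W_2$ Berry--Esseen bounds are absolute estimates for sums of \emph{many} copies in terms of the fourth moment; they are not of the form ``distance to Gaussian after one doubling is a constant factor smaller than before.'' A genuine per-step contraction inequality of this kind (essentially a quantitative entropy-jump/Wasserstein-jump statement) is exactly the difficulty the paper's introduction flags, and the paper's proof is structured to avoid needing it: rather than contracting $\mathcal{E}$ level by level, it unrolls $t$ levels at once, writing $S_h$ as approximately $2^{-t/2}\sum_{i=1}^{2^t}S_{h-t}^{(i)}$ via a telescoping sum in which each replacement of one $S_n$ by $\tfrac{1}{\sqrt2}(S_{n-1}+S_{n-1}')$ inside the normalized block sum costs only the single-copy error $W_2\bigl(S_n,\tfrac{S_{n-1}+S_{n-1}'}{\sqrt2}\bigr)$ (by subadditivity, \cite[Prop.~7.17]{villani2003topics}), and then applies Rio's CLT \emph{once} to the sum of $2^t$ i.i.d. copies of $S_{h-t}$, finally optimizing $t=\log\frac{1}{\varepsilon_c-\varepsilon}$. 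Without either proving your contraction inequality or restructuring along these lines, your iteration does not close.

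Two secondary issues. First, your alternative route through ``$\mathcal{E}(S_n)\le\mathcal{E}(\hat S_n)$ because conditional expectation is a $W_2$-contraction'' is false in general (conditioning a Gaussian on a coarse $\sigma$-algebra produces an atomic, hence highly non-Gaussian, variable); the correct statement is the coupling bound $W_2(S_n,\hat S_n)\le\sqrt{\hat\sigma_n^2-\sigma_n^2}$, whose sum over levels is then paid for out of the variance budget. Second, your kurtosis bound $\mu_n/\sigma_n^4\le 13\xi^2/((1-\delta)\xi)^2$ at \emph{every} level uses $\sigma_n^2\ge(1-\delta)\xi$ for all $n$, whereas the hypothesis only controls $\sigma_h^2$; one must first establish the per-level growth bound $\sigma_n\le\eta\sigma_{n-1}$ and deduce $\sigma_{h-t}^2\ge\eta^{-2t}(1-\delta)\xi$, which is why the fourth-moment/CLT input is only needed at the single level $h-t$ in the paper's argument.
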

We are finally in the position of proving the lower bound:

\begin{proof}[Proof of the lower bound in Theorem~\ref{thm:main}]
Consider any $\varepsilon\in(\varepsilon_2,\varepsilon_c)$ and $L\in\{1,2,\dots\}$, where $\varepsilon_2$ is as defined in Lemma~\ref{lem_clt}.
Choose any $\delta\in(0,1/2)$.
Note that for any $h$, there exists an algorithm such that $\sigma_h^2\ge (1-\delta)\xi$.
Lemma~\ref{lem_quant} (proved in Appendix~\ref{sec:appendix-proof-non-gaussian-bound}) shows that $\mathcal{E}(S_h)\ge \frac{1}{2L}$.
Comparing with the result in Lemma~\ref{lem_quant}
we have
$c_4\left(
(\varepsilon_c-\varepsilon)^{1/13}
+\sqrt{\delta\log\frac{1}{\varepsilon_c-\varepsilon}}
\right)\ge \sqrt{1-\delta}/2L$.
Taking $\delta\downarrow 0$ establishes that
there exists an absolute constant $c$ such that if $\xi(\varepsilon,L)>0$ then $L\ge c(\varepsilon_c-\varepsilon)^{-1/13}$.
\end{proof}
We conclude this section by noting that Corollary~\ref{corr:bp-fixpoint} is proved in Appendix~\ref{sec:bp-fixpoint}, by combining the above with a simple argument to prove distributional convergence of BP.

\newpage
\bibliographystyle{plain}
\bibliography{all,reconstruction}

\newpage
\appendix

\section{Impossibility of 1-Bit Reconstruction near criticality}
\label{sec:1-bit}
This appendix provides complete proofs for various statements sketched in Section~\ref{sec:1-bit-sketch}; the organization mirrors that of Section~\ref{sec:1-bit-sketch}.
\subsection{A Direct Proof of the Kesten-Stigum Bound}
\label{sec:appendix-direct-KS-bound}
As mentioned in the introduction, the KS bound for general trees is stated in terms of the branching number, which was introduced in \cite{Lyons:90}. This notion admits several equivalent definitions; for us, the most convenient is the following.  
\begin{definition} 
\label{defn:branching-number}
The \emph{branching number} of a rooted tree $T$ with root $\rho$ is defined as
$$\text{br}(T) := \sup_{\lambda \geq 1} \left\{\inf_{\Pi}\sum_{v\in \Pi} \lambda^{-|v|} > 0\right\},$$
where the infimum is over all cutsets $\Pi$ (a set
of vertices of $T\setminus\{\rho\}$ is called a cutset if it intersects every infinite path emanating
from $\rho$), and $|v|$ denotes the number of edges on the path from $v$ to $\rho$.  
\end{definition}


The following key `mixing inequality' was discussed in Section~\ref{subsec:KS-direct-proof-sketch}.
\begin{lemma}\label{lem:mixing-inequality}
Suppose $P$ and $Q$ are distributions on the same alphabet $\Sigma$. For $\delta \in [0,1/2]$, define the mixture distributions
$$P(\delta) := \left(\frac{1}{2} + \delta\right)P + \left(\frac{1}{2} - \delta\right)Q$$ and 
$$Q(\delta) := \left(\frac{1}{2} - \delta\right)P + \left(\frac{1}{2} + \delta\right)Q.$$
Then, for any $\delta' \geq  \delta$ with $\delta' \in (0,1/2]$, we have
\[ \SKL(P(\delta),Q(\delta)) \le (\delta/\delta')^2 \SKL(P(\delta'),Q(\delta')). \]
In particular, 
$$\SKL(P(\delta),Q(\delta)) \le 4 \delta^2 \SKL(P,Q).$$
\end{lemma}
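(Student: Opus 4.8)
The plan is to reduce the inequality to a per-symbol (pointwise) estimate, using the closed form
\[ \SKL(P,Q) = \sum_{x \in \Sigma} \bigl(P(x) - Q(x)\bigr)\log\frac{P(x)}{Q(x)}, \]
which expresses $\SKL$ as a sum of \emph{nonnegative} terms. First I would fix $x$, write $a = P(x)$, $b = Q(x)$, and set $s_x := a+b$, $r_x := a-b$, so that $P(\delta)(x) = s_x/2 + \delta r_x$ and $Q(\delta)(x) = s_x/2 - \delta r_x$. The contribution of $x$ to $\SKL(P(\delta),Q(\delta))$ is then
\[ g_x(\delta) := 2\delta r_x \log\frac{s_x + 2\delta r_x}{s_x - 2\delta r_x}, \]
and since this is invariant under $r_x \mapsto -r_x$, I may replace $r_x$ by $t_x := |a-b| \in [0,s_x]$; then $g_x(\delta)\ge 0$. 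It therefore suffices to prove the termwise bound $g_x(\delta) \le (\delta/\delta')^2\, g_x(\delta')$ for each $x$, as summing over $x\in\Sigma$ gives the lemma. (Terms with $s_x = 0$ vanish; and if $\SKL(P(\delta'),Q(\delta')) = \infty$ there is nothing to prove, while otherwise $P(\delta')$ and $Q(\delta')$ have the same support, which forces $2\delta' t_x/s_x < 1$ for every relevant $x$, so all quantities below are finite.)

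Next I would rescale. Introduce
\[ \psi(v) := \frac{1}{v}\log\frac{1+v}{1-v} = 2\sum_{n\ge 0}\frac{v^{2n}}{2n+1} \quad (v\in(0,1)), \qquad \psi(0) := 2, \]
and note the key point: the odd power series of $\log\frac{1+v}{1-v}$ has all nonnegative coefficients, so $\psi$ is nondecreasing on $[0,1)$. Writing $v := 2\delta t_x/s_x \in [0,1)$, a one-line computation gives $g_x(\delta) = \frac{4\delta^2 t_x^2}{s_x}\,\psi\!\bigl(\frac{2\delta t_x}{s_x}\bigr)$, hence
\[ \frac{g_x(\delta)}{4\delta^2} = \frac{t_x^2}{s_x}\,\psi\!\left(\frac{2\delta t_x}{s_x}\right). \]
Since $\delta\mapsto 2\delta t_x/s_x$ is nondecreasing and $\psi$ is nondecreasing, the right-hand side is nondecreasing in $\delta$ on $(0,1/2]$. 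Thus for $\delta\le\delta'$ we get $g_x(\delta)/\delta^2 \le g_x(\delta')/\delta'^2$, i.e. $g_x(\delta)\le (\delta/\delta')^2 g_x(\delta')$. Summing over $x$ proves the first displayed inequality, and taking $\delta' = 1/2$ (so $P(1/2)=P$, $Q(1/2)=Q$) gives $\SKL(P(\delta),Q(\delta)) \le 4\delta^2\,\SKL(P,Q)$.

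I do not anticipate a genuine obstacle here; the proof is essentially a monotonicity argument for a one-variable function. The two points requiring a little care are the sign/absolute-value reduction (justified by the symmetry $r_x\mapsto -r_x$, which leaves $g_x$ unchanged) and the degenerate boundary case $v=1$ — equivalently $\delta = 1/2$ together with $\min(P(x),Q(x))=0$ — where both $g_x(\delta)$ and $\SKL(P,Q)$ are $+\infty$ and the inequality is vacuous. The conceptual content is the single observation that $\log\frac{1+v}{1-v}$ has a nonnegative power series, forcing $\psi$ to be increasing.
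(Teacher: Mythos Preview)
Your proof is correct and essentially identical to the paper's: both use the same change of variables $s_x = P(x)+Q(x)$, $r_x = P(x)-Q(x)$ and reduce the inequality to the positivity of the power-series coefficients of $\log\frac{1+v}{1-v}$. The only cosmetic difference is that the paper compares the resulting series term by term via $\delta^{2k}\le(\delta/\delta')^{2}(\delta')^{2k}$, whereas you package the same fact as the monotonicity of $\psi(v)=\tfrac{1}{v}\log\tfrac{1+v}{1-v}$.
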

\begin{proof}
We may assume that $\SKL(P(\delta'),Q(\delta')) < \infty$, as there is nothing to prove otherwise.
Moreover, by a standard approximation argument, it suffices to prove the claim for discrete alphabets $\Sigma$. In this case, note that 
$$\SKL(P,Q)=\sum_{a\in \Sigma}(p_a - q_a)(\log p_a - \log q_a).$$
Consider the change of variables 
$$p_a = s_a + t_a$$ and 
$$q_a = s_a - t_a,$$ under which 
$$p(\delta)_a = s_a + 2\delta t_a $$ and $$q(\delta)_a = s_a - 2\delta t_a.$$ Note that we must necessarily have $2\delta'|t_a| < s_a$; otherwise, there would exist some $a \in \Sigma$ for which $p(\delta')_a\neq 0$ but $q(\delta')_a = 0$ or vice versa, thereby contradicting the assumed finiteness of $\SKL(P(\delta'),Q(\delta'))$. Therefore, we see that  
\begin{align*}
\SKL(P(\delta),Q(\delta)) 
&= \sum_{a \in \Sigma} 4 \delta t_a (\log (s_a + 2\delta t_a) - \log (s_a - 2\delta t_a)) \\
&= 4\sum_{a \in \Sigma} \delta t_a (\log (1 + 2\delta t_a/s_a) - \log (1 - 2\delta t_a/s_a))\\
&= 8 \sum_{a \in \Sigma} s_a \left(2(\delta t_a/s_a)^{2} + \frac{8}{3}(\delta t_a/s_a)^{4} + \cdots\right),
\end{align*}
where the last equality follows from the power series expansion of $\log(1+x)$ (valid for $|x|<1$) around $x=0$. 
Finally, the result follows from the term-wise observation that for any $k \ge 1$ and $\delta \le \delta'$,
$$\delta^{2k} = \left(\frac{\delta}{\delta'}\right)^{2k} (\delta')^{2k} \le \left(\frac{\delta}{\delta'}\right)^{2} (\delta')^{2k}.$$ 
\end{proof}

\begin{remark}
Observe for comparison that the joint convexity of SKL only gives the bound 
\begin{align*}
\SKL(P(\delta),Q(\delta)) &= 
\SKL\left(\left(1-2\delta\right)\left(\frac{P}{2} + \frac{Q}{2}\right) + 2\delta P, \left(1-2\delta\right)\left(\frac{P}{2} + \frac{Q}{2}\right) + 2\delta Q\right)\\
&\le 2\delta \SKL(P,Q), 
\end{align*}
which is much weaker for $\delta < 1/2$.
\end{remark}
\begin{remark}
The proof shows that the above inequality is asymptotically tight as $\delta,\delta' \to 0$ for any fixed distributions $P$ and $Q$. 
\end{remark}
\begin{remark}
A similar proof can also be used to establish the same inequality for $H^{2}$, the squared Hellinger distance.  
\end{remark}

We are now ready to present a proof of the general KS bound. Since the proof is essentially the same as for the infinite $d$-ary tree, we will only sketch the details. 
\begin{theorem}[Kesten-Stigum bound]
\label{thm:KS-general-trees}
If $4\text{br}(T)\nu^{2} < 1$, then $\TV(T_{n}^{+},T_{n}^{-})\to \infty$ as $n\to \infty$. 
\end{theorem}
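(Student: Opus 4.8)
Since the total variation distance between two probability measures always lies in $[0,1]$, the printed conclusion ``$\to\infty$'' is a typographical slip for $\TV(T_n^+,T_n^-)\to 0$, i.e.\ non-solvability of the reconstruction problem above the critical noise level; that is the statement I will describe a proof of. By Pinsker's inequality it suffices to show $\SKL(T_n^+,T_n^-)\to 0$. Moreover, for $m\le n$ the variables form a Markov chain $X_\rho\to X_{V_m}\to X_{V_n}$ (broadcasting from level $m$ down to level $n$ ignores the root, and the conditional law of $X_{V_n}$ given $X_{V_m}$ does not depend on $X_\rho$), so by the data-processing inequality for $f$-divergences $\SKL(T_n^+,T_n^-)$ is non-increasing in $n$; hence it is enough to exhibit, for every $\eta>0$, some level at which $\SKL<\eta$.

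\textbf{Plan.} The plan is to run the $d$-ary contraction of Section~\ref{subsec:KS-direct-proof-sketch} along an arbitrary cutset. Since $4\,\mathrm{br}(T)\nu^2<1$, choose $\lambda>1$ with $4\lambda\nu^2<1$; by Definition~\ref{defn:branching-number} this forces $\inf_{\Pi}\sum_{v\in\Pi}\lambda^{-|v|}=0$, so fix a cutset $\Pi$ whose weight $\sum_{v\in\Pi}\lambda^{-|v|}$ is as small as we like. For an internal vertex $u$ with children $u_1,\dots,u_k$, write $T_w^{\pm}$ for the leaf-label law of the subtree rooted at $w$ conditioned on $X_w=\pm$; the broadcast recursion gives $T_u^{\pm}=\bigotimes_{i}\big((1/2+\nu)T_{u_i}^{\pm}+(1/2-\nu)T_{u_i}^{\mp}\big)$. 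Tensorization of $\SKL$ followed by the mixing inequality (Lemma~\ref{lem:mixing-inequality}) yields $\SKL(T_u^+,T_u^-)=\sum_i \SKL\big((1/2+\nu)T_{u_i}^++(1/2-\nu)T_{u_i}^-,\ (1/2-\nu)T_{u_i}^++(1/2+\nu)T_{u_i}^-\big)\le 4\nu^2\sum_i \SKL(T_{u_i}^+,T_{u_i}^-)$. Unwinding this from the root down to $\Pi$, and bounding $\SKL(T_w^+,T_w^-)\le c(\nu):=\SKL(\Ber(1/2+\nu),\Ber(1/2-\nu))<\infty$ for each single-vertex distribution sitting at $\Pi$, we obtain $\SKL(T_{\Pi}^+,T_{\Pi}^-)\le c(\nu)\sum_{v\in\Pi}(4\nu^2)^{|v|}\le c(\nu)\sum_{v\in\Pi}\lambda^{-|v|}$, using $4\nu^2<\lambda^{-1}$. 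Finally, for all $n$ beyond the maximal depth encountered in $\Pi$, the labels $X_{V_n}$ are a channel-image of $X_{\Pi}$ together with the labels on the finite branches shorter than $\Pi$ (formally $X_\rho\to X_{\Pi'}\to X_{V_n}$ for a cutset $\Pi'\supseteq\Pi$, where the extra vertices only contribute further contracted terms), so data processing gives $\SKL(T_n^+,T_n^-)\le c(\nu)\sum_{v\in\Pi}\lambda^{-|v|}$ for all large $n$; letting the cutset weight tend to $0$ completes the argument.

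\textbf{Main obstacle.} The genuine analytic content — the per-edge contraction factor $4\nu^2$ from the mixing inequality, and its exact compatibility with the weights $\lambda^{-|v|}$ defining the branching number — is already established in Section~\ref{subsec:KS-direct-proof-sketch}. The only real remaining work is the combinatorial bookkeeping that converts ``a bound along an arbitrary cutset $\Pi$'' into ``a bound at level $n$ as $n\to\infty$'' for a general tree (possibly irregular, possibly with unbounded degrees, possibly with finite branches of varying length). I expect this bookkeeping — in particular correctly accounting for deep leaves whose ancestral path terminates before reaching $\Pi$ — to be the main point requiring care, and I would import it essentially verbatim from \cite{evans2000broadcasting} rather than reprove it, which is why only a sketch is warranted here.
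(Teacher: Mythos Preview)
Your approach is the same as the paper's: peel the recursion down the tree via tensorization plus the mixing inequality, terminate at a cutset $\Pi$ whose $\lambda$-weight can be made arbitrarily small, and invoke data processing to pass from $\Pi$ to the levels $V_n$. Your remark about the typo in the statement is correct.

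There is one slip. When you write ``bounding $\SKL(T_w^+,T_w^-)\le c(\nu)$ for each single-vertex distribution sitting at $\Pi$,'' this is false: by your own definition $T_w^{\pm}=\delta_{\pm}$ for $w\in\Pi$, so $\SKL(T_w^+,T_w^-)=\infty$. The easy fix is to refrain from applying the mixing inequality on the final edge into each $v\in\Pi$; after tensorizing at the parent of $v$, the term corresponding to $v$ is exactly $\SKL\big((\tfrac12+\nu)\delta_++(\tfrac12-\nu)\delta_-,\,(\tfrac12-\nu)\delta_++(\tfrac12+\nu)\delta_-\big)=c(\nu)$, and one obtains $\SKL(T_\Pi^+,T_\Pi^-)\le c(\nu)\sum_{v\in\Pi}(4\nu^2)^{|v|-1}$, which differs from your displayed bound only by a harmless global factor $(4\nu^2)^{-1}$ and still yields the conclusion. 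The paper sidesteps this wrinkle by switching from $\SKL$ to squared Hellinger distance $\Hel^2$ for precisely this reason: since $\Hel^2\le 1$ universally (in particular $\Hel^2(\delta_+,\delta_-)=1$), one may iterate the mixing inequality all the way down to $\Pi$ and read off $\Hel^2(P_\Pi^+,P_\Pi^-)\le\sum_{v\in\Pi}(4\nu^2)^{|v|}$ directly.
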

\begin{proof}
We will find it more convenient to use the Hellinger-squared distance $\Hel^{2}$ instead of $\SKL$ owing to the fact that $\Hel^{2}(P,Q)\leq 1$ for all distributions $P$ and $Q$. Since $\TV(P,Q) \leq \sqrt{2}\Hel(P,Q)$, it clearly suffices to show that if $4\text{br}(T)\nu^{2} < 1$, then $\inf_{\Pi}\Hel(P_\Pi^+, P_\Pi^-) = 0$. For this, let $\Pi$ be a cutset, let $\rho_1,\dots,\rho_{d_1}$ denote the children of $\rho$, and let $\Pi_{1},\dots,\Pi_{d_1}$ denote the intersections of $\Pi$ with the descendants of $\rho_{1},\dots,\rho_{d_1}$. 
Then, since $\Hel^{2}$ satisfies the same `mixing inequality' as $\SKL$, and since
$\Hel^{2}(P^{\otimes d},Q^{\otimes d}) \leq d\Hel^{2}(P,Q),$
the same argument as in the $d$-ary case shows that
$$\Hel(P^{+}_{\Pi}, P^{-}_{\Pi})^{2} \leq 4\nu^{2}\sum_{i=1}^{d_1}\Hel(P^{+}_{\Pi_i}, P^{-}_{\Pi_i})^{2},$$
where we think of $\Pi_i$ as a cutset of the subtree rooted at $\rho_i$. Iterating this process until all the roots under consideration lie in $\Pi$ (this is guaranteed to happen by the definition of a cutset), we find that
$$\Hel(P^{+}_{\Pi}, P^{-}_{\Pi})^{2} \leq \sum_{v \in \Pi}(4\nu^{2})^{|v|}.$$
Finally, taking the infimum over both sides, and using the definition of the branching number, completes the proof.
\end{proof}

\subsection{Interlude: noisy-message passing algorithms fail near criticality}
For this subsection and the next, it will be convenient to formally establish some notation which has already been discussed in Section~\ref{sec:1-bit-sketch}. We restrict ourselves to $d$-ary trees, and label the levels of the $n$-level $d$-ary tree in decreasing order, with the root being level $n$ and the leaves being level $0$. We also restrict ourselves to message-passing algorithms for which the reconstruction function depends only on the level of the tree i.e. $f_u = f_\ell$ for node $u$ at level $\ell$ of the tree. As in Section~\ref{subsec:noisy-message-sketch}, let $P_n^{\pm}$ denote the distribution
on $\Sigma$ (which we interpret as the final message received at
the root) obtained by broadcasting on an $n$ level $d$-ary tree, conditioned on
the root being $\pm$, and then reconstructing using our message-passing
algorithm. 

Then, in the case when each message passes through an independent copy of a noisy channel $P_{Y|X}: \Sigma \to \Sigma$, we have from the description of the broadcast and reconstruction processes that 
\[
P_{n}^{\pm}=(f_{n})_{*}\left(\left(P_{Y|X}\circ\left(\left(\frac{1}{2}+\nu\right)P_{n-1}^{\pm}+\left(\frac{1}{2}-\nu\right)P_{n-1}^{\mp}\right)\right)^{\otimes d}\right),
\]
where $f_{*}(\mu)$ denotes the pushforward of the measure
$\mu$ by the function $f$, and $P_0^{\pm}$ are initial states specified by the message passing scheme.

We can now state and prove our general theorem on the impossibility of reconstruction near criticality by noisy message-passing algorithms, as discussed in Section~\ref{subsec:noisy-message-sketch}. 
\begin{theorem}
\label{thm:KS-noisy-general}
Suppose that $P_{Y | X}$ satisfies an SDPI with constant $\eta < 1$. If $4d \nu^2 \eta < 1$, then reconstruction under the multilevel noise model is impossible for any message passing algorithm. 
\end{theorem}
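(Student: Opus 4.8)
The plan is to repeat, almost verbatim, the argument behind the direct proof of the Kesten--Stigum bound in Section~\ref{subsec:KS-direct-proof-sketch}, now with one extra contraction factor of $\eta$ per level coming from the SDPI. The proof chains together four elementary properties of the symmetrized KL divergence $\SKL$: (i) the data-processing inequality, so that pushing forward through any (possibly randomized) map is a contraction --- in particular $\SKL((f_n)_*\mu,(f_n)_*\mu')\le \SKL(\mu,\mu')$ and $\SKL(P_{Y|X}\circ P,P_{Y|X}\circ Q)\le \SKL(P,Q)$; (ii) tensorization, $\SKL(P^{\otimes d},Q^{\otimes d}) = d\,\SKL(P,Q)$; (iii) the mixing inequality of Lemma~\ref{lem:mixing-inequality}, $\SKL(P(\nu),Q(\nu))\le 4\nu^2\,\SKL(P,Q)$ in the mixture notation of that lemma; and (iv) the hypothesized SDPI, $\SKL(P_{Y|X}\circ P,P_{Y|X}\circ Q)\le \eta\,\SKL(P,Q)$.

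First I would plug the recursion $P_n^{\pm} = (f_n)_*\big((P_{Y|X}\circ P_{n-1}^{\pm}(\nu))^{\otimes d}\big)$, with $P_{n-1}^{\pm}(\nu) := (\tfrac12+\nu)P_{n-1}^{\pm}+(\tfrac12-\nu)P_{n-1}^{\mp}$, into the chain (i)$\to$(ii)$\to$(iv)$\to$(iii):
\begin{align*}
\SKL(P_n^+,P_n^-) &\le \SKL\big((P_{Y|X}\circ P_{n-1}^+(\nu))^{\otimes d},\,(P_{Y|X}\circ P_{n-1}^-(\nu))^{\otimes d}\big)\\
&= d\,\SKL\big(P_{Y|X}\circ P_{n-1}^+(\nu),\,P_{Y|X}\circ P_{n-1}^-(\nu)\big)\\
&\le d\eta\,\SKL\big(P_{n-1}^+(\nu),\,P_{n-1}^-(\nu)\big)\ \le\ 4d\nu^2\eta\,\SKL(P_{n-1}^+,P_{n-1}^-).
\end{align*}
Iterating yields $\SKL(P_n^+,P_n^-)\le (4d\nu^2\eta)^{n-1}\,\SKL(P_1^+,P_1^-)$, which tends to $0$ when $4d\nu^2\eta<1$. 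By Pinsker's inequality $\TV(P_n^+,P_n^-)^2\le \tfrac12\KL(P_n^+,P_n^-)\le\tfrac12\SKL(P_n^+,P_n^-)\to 0$; since the advantage over random guessing is $\tfrac12\TV(P_n^+,P_n^-)$, this shows no message-passing algorithm beats random guessing asymptotically.

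The only point requiring genuine (if minor) care --- and the one I would flag as the main obstacle --- is making sure the base term $\SKL(P_1^+,P_1^-)$ is finite, since the initial messages $P_0^{\pm}$ are arbitrary functions of the root label and could be mutually singular. This is resolved by not applying the mixing inequality at the bottom level: one has $\SKL(P_1^+,P_1^-)\le d\eta\,\SKL(P_0^+(\nu),P_0^-(\nu))$ by (i), (ii), (iv), and $P_0^+(\nu),P_0^-(\nu)$ are mutually absolutely continuous with Radon--Nikodym derivative pointwise in the bounded interval $\big[\tfrac{1/2-\nu}{1/2+\nu},\tfrac{1/2+\nu}{1/2-\nu}\big]$ (using $\nu<1/2$), whence $\SKL(P_0^+(\nu),P_0^-(\nu))\le 2\log\tfrac{1/2+\nu}{1/2-\nu}<\infty$. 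Everything else is identical to the computations already carried out for the noiseless case in Section~\ref{subsec:KS-direct-proof-sketch}, so this essentially completes the proof.
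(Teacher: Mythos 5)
Your proof is correct and follows exactly the same chain of inequalities as the paper's (data processing through $f_n$, tensorization, SDPI, then the mixing inequality of Lemma~\ref{lem:mixing-inequality}), concluding by iteration and Pinsker. The only difference is that you explicitly justify the finiteness of $\SKL(P_1^+,P_1^-)$ via the bounded density ratio of the mixtures $P_0^{\pm}(\nu)$, a point the paper simply asserts (and handles elsewhere via Lemma~\ref{lem:skl-bounded}); your argument for it is valid.
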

\begin{proof}
For distributions $P_n^{\pm}$ and $\nu \in (0,1/2)$, we will use the notation $P_{n}^{\pm}(\nu)$ from Lemma~\ref{lem:mixing-inequality}. Then, similar to the proof of the KS bound for $d$-ary trees, we have
\begin{align*}
    \SKL(P_n^{+}, P_n^{-})
    &\le \SKL\left((P_{Y|X}\circ P_{n-1}^{+}(\nu))^{\otimes d}, (P_{Y|X}\circ P_{n-1}^{-}(\nu))^{\otimes d}\right) \\ 
    & = d \SKL(P_{Y | X} \circ P_{n - 1}^{+}(\nu),  P_{Y | X} \circ P_{n - 1}^{-}(\nu)) \\
    &\le \eta d \SKL(P_{n - 1}^{+}(\nu), P_{n - 1}^{-}(\nu))\\
    &\le \eta 4d\nu^2 \SKL(P_{n - 1}^{+}, P_{n - 1}^{-}).
\end{align*}
Iterating this inequality and using $\SKL(P_1^{+},P_1^{-}) < \infty$, we see that $\SKL(P_n^{+},P_n^{-}) \to 0$ as $n \to \infty$.
\end{proof}
We conclude with a couple of examples of common channels which satisfy an SDPI. 
\begin{example}
\label{example:mixture-noise}
For a fixed (noise) distribution $\mu$, the channel $P_{Y|X}$ given by $P_{Y|X}\circ P = (1-\delta)P+\delta \mu$ obeys an SDPI with $\eta \leq (1-\delta)$, as is seen by the joint convexity of SKL:
$\SKL((1 - \delta) P + \delta \mu, (1 - \delta) Q + \delta \mu) \le (1 - \delta) \SKL(P,Q)$.
\end{example}

\begin{example}
\label{example:bp-additive-noise}
For a real valued random variable $X$, let $X' = X+\delta Z$, where $Z \sim N(0,1)$ is independent of $X$, and let $Y=g(X')$, where $g(x) = -1$ if $x\leq -1$, $g(x)=1$ if $x\geq 1$, and $g(x)=x$ otherwise. The channel $P_{Y|X}$, which corresponds to adding a small Gaussian noise and then thresholding the messages to lie in $[-1,1]$ (as required for belief propagation) also satisfies an SDPI: for any distributions $P,Q$ on $[-1,1]$,
$\SKL(P_{Y|X}\circ P, P_{Y|X}\circ Q) \leq \SKL(P_{X'|X}\circ P, P_{X'|X}\circ Q) \leq (1-2F(1/\delta))\SKL(P,Q)$, where $F(x) = 1-\Phi(x)$ is the standard Gaussian complementary CDF. Here, the first inequality is the usual DPI, and the second inequality follows from \cite{polyanskiy2016dissipation}.
\end{example}
\subsection{1-bit message-passing algorithms fail near criticality}
The initial part of our discussion in this subsection holds for any finite alphabet $\Sigma$. Later on, we will specialise our discussion to the 1-bit setting i.e. when $|\Sigma|=2$. \\
\textbf{Restricted SDPI for discrete functions: }As discussed in Section~\ref{subsec:1-bit-sketch}, while general discrete functions $f\colon \Sigma^{d} \to \Sigma$ need not satisfy an SDPI, we can obtain such an inequality provided we restrict the class of input distributions to those which are `robustly' of full support. More precisely, we have the following.
\begin{theorem}[Restricted SDPI for Discrete Functions]\label{thm:restricted-sdpi-multibit}
Fix $d$ and $\gamma > 0$. 
Suppose that $|\Sigma| \le d$.
Let $\Delta_{\gamma}$ be the subset of the probability
simplex in $\mathbb{R}^{\Sigma}$ given by requiring for $p \in \Delta$ that $p_{a} \ge \gamma$ for all $a \in \Sigma$. 
Then there exists $\eta = \eta(|\Sigma|,\gamma,d) < 1$ such that 
\[ \max_{f : \Sigma^d \to \Sigma} \sup_{P,Q \in \Delta_{\gamma}} \frac{\KL(f_*(P^{\otimes d}), f_*(Q^{\otimes d}))}{\KL(P^{\otimes d},Q^{\otimes d})} \le \eta < 1. \]
\end{theorem}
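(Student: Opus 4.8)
The plan is to treat each of the finitely many functions $f \co \Sigma^d \to \Sigma$ separately, establishing for each one a \emph{strict} data-processing inequality that is uniform over $P,Q \in \Delta_\gamma$ by a compactness argument; maximizing over the (finitely many) $f$ then gives the theorem with some $\eta = \eta(|\Sigma|,\gamma,d) < 1$. Fix $f$. Replacing $\Sigma$ by the range of $f$ in the target (which leaves all the relevant KL quantities unchanged), I may assume $f$ is surjective, so that every coordinate of $f_*(P^{\otimes d})$ is at least $\gamma^d$ for $P \in \Delta_\gamma$. Using $\KL(P^{\otimes d},Q^{\otimes d}) = d\,\KL(P,Q)$, write $R(P,Q) := \KL\bigl(f_*(P^{\otimes d}),f_*(Q^{\otimes d})\bigr)\big/\bigl(d\,\KL(P,Q)\bigr)$ for $P \neq Q$ in $\Delta_\gamma$; the data-processing inequality gives $R \le 1$, and the goal is $\sup R < 1$. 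The combinatorial heart of the whole argument, and the only place the hypothesis $|\Sigma| \le d$ is used, is the elementary fact: \emph{if $h \in \R^{\Sigma}$ is non-constant, then $x \mapsto \sum_{i=1}^d h_{x_i}$ is not constant on the fibers of $f$.} Indeed, taking $a,b$ with $h_a$ maximal and $h_b$ minimal, the $d+1$ ``ladder'' points $x^{(j)} = (a,\dots,a,b,\dots,b)$ with $j$ copies of $a$ ($0 \le j \le d$) receive the $d+1$ distinct values $d h_b + j(h_a-h_b)$, yet $f$ sends them into a set of size $|\Sigma| \le d$, so by pigeonhole two of them share a fiber while having distinct sums. (For $|\Sigma| > d$ this fails, and so does the theorem: one can then find $f$ and fully supported $P \neq Q$ with $R(P,Q)=1$.)

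Suppose toward a contradiction that $\sup R = 1$ and take a maximizing sequence converging, after passing to a subsequence, to $(P_*,Q_*) \in \Delta_\gamma^2$. \emph{Off-diagonal case, $P_* \neq Q_*$.} Since $R$ is continuous where $P \neq Q$ (the map $P\mapsto f_*(P^{\otimes d})$ is polynomial and the second arguments of both KL's are fully supported, while $\KL(P,Q)>0$ here), we get $R(P_*,Q_*)=1$, i.e.\ equality in the data-processing inequality for $f$ at the fully supported inputs $P_*^{\otimes d},Q_*^{\otimes d}$. But such equality is equivalent to the log-likelihood ratio $x \mapsto \sum_i \log(P_{*,x_i}/Q_{*,x_i})$ being constant on the fibers of $f$, and $a \mapsto \log(P_{*,a}/Q_{*,a})$ is non-constant since $P_* \neq Q_*$ --- contradicting the ladder fact.

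\emph{Diagonal case, $P_* = Q_* =: p$.} Here I will show $\limsup_{(P,Q)\to(p,p)} R(P,Q) \le \eta_{\mathrm{loc}}(p)$, where, with $X \sim p^{\otimes d}$ and $\Psi_w(X) := \sum_{i=1}^d w_{X_i}/p_{X_i}$ for a tangent vector $w$ (meaning $\sum_a w_a = 0$),
\[
\eta_{\mathrm{loc}}(p) := \sup_{w \neq 0}\ \frac{\Var\bigl(\E[\Psi_w \mid f(X)]\bigr)}{\Var(\Psi_w)}.
\]
This is a routine second-order Taylor computation: near the diagonal $\KL(P,Q) = \tfrac12\sum_a (P_a-Q_a)^2/Q_a\cdot(1+o(1))$, the first-order variation of $P \mapsto f_*(P^{\otimes d})$ at $p$ in direction $w$ is $b \mapsto \Pr[f(X)=b]\,\E[\Psi_w \mid f(X)=b]$, and $\Var(\Psi_w) = d\sum_a w_a^2/p_a$ is the matching normalization of $d\,\KL(P,Q)$, so that along $P = p+tw,\ Q = p$ the ratio $R$ tends to $\Var(\E[\Psi_w\mid f(X)])/\Var(\Psi_w)$, and a short argument controls general approaches to the diagonal. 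By the law of total variance $\eta_{\mathrm{loc}}(p) \le 1$, with equality only if some non-constant $\Psi_w$ --- note that $w/p$ is non-constant as soon as $w \neq 0$, because $\sum_a w_a = 0$ --- is a function of $f(X)$, i.e.\ constant on the fibers of $f$, which is exactly what the ladder fact forbids. Hence $\eta_{\mathrm{loc}}(p) < 1$ for every $p$; since $p \mapsto \eta_{\mathrm{loc}}(p)$ is continuous on the compact set $\Delta_\gamma$, we get $\eta_1 := \sup_{p \in \Delta_\gamma}\eta_{\mathrm{loc}}(p) < 1$, and in this case $1 = \lim_n R(P_n,Q_n) \le \eta_{\mathrm{loc}}(p) \le \eta_1 < 1$, a contradiction. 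Therefore $\sup_{P,Q\in\Delta_\gamma}R(P,Q) < 1$ for each fixed (surjective) $f$, and the theorem follows.

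The step I expect to be the main obstacle is the diagonal case: since $R$ is literally $0/0$ there, one must descend to the second-order (Fisher-information) level, carry out the Taylor expansion carefully enough to control the error terms when the linearization is nearly degenerate in the relevant direction, and --- the conceptual crux --- recognize that the limiting contraction coefficient $\eta_{\mathrm{loc}}(p)$ is governed, via the law of total variance, by exactly the \emph{same} ``additive functions cannot factor through $f$'' phenomenon that disposes of the off-diagonal case. Channeling both regimes through the single ladder/pigeonhole lemma (which is precisely where $|\Sigma| \le d$ is needed) is what yields a uniform $\eta < 1$.
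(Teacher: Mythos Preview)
Your proof is correct and follows essentially the same route as the paper's: a compactness argument split into an off-diagonal case (strict data-processing inequality) and a diagonal case (second-order Taylor expansion of KL), with the decisive combinatorial input being that an additive function $x\mapsto\sum_i h_{x_i}$ with $h$ non-constant takes at least $d+1$ distinct values while $f$ has at most $|\Sigma|\le d$ fibers. Your explicit isolation of the ``ladder fact'' and its uniform use in both regimes (via the DPI equality condition off-diagonal and via the law of total variance at the diagonal) is a slightly tidier packaging of what the paper does with a Pythagorean-theorem-in-$L^2(Q)$ computation, but the substance is the same.
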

\begin{proof}
As there are only finitely many functions $f:\Sigma^{d}\to \Sigma$, it suffices to show the desired inequality
for a fixed (but otherwise arbitrary) function $f$. Note that any two distributions $P,Q \in \Delta_{\gamma}$ have full support in $\Sigma$, and hence, $\KL(P,Q)<\infty$ (in fact, $\KL(P,Q) \leq \log({1/\gamma})$).  
Moreover, for any $P \ne Q$ such that $\KL(P,Q)<\infty$, it follows from the strict case of the data processing inequality (using the assumption that $|\Sigma| \leq d$) that $$\KL(f_*(P^{\otimes d}), f_*(Q^{\otimes d})) < \KL(P^{\otimes d}, Q^{\otimes d}) = d \KL(P,Q).$$ This suggests using the compactness of $\Delta_{\gamma}$ to obtain the desired inequality; in order to be able to do this, it only remains to show that for any $P\in \Delta_{\gamma}$, $$\limsup_{Q \to P}g(P,Q) < 1,$$ where 
$$g(P,Q) :=  \frac{\KL(f_*(P^{\otimes d}), f_*(Q^{\otimes d}))}{d \KL(P,Q)}.$$  

Accordingly, fix $P\in \Delta_{\gamma}$, and define $Q$ by $q_a = p_a - \delta_a$, where $|\delta|_{\infty} \le \gamma/2$ and $\sum_{a\in \Sigma}\delta_{a} = 0$. 
Then, we have 
\begin{align}
\KL(P,Q) 
&= \sum_{a \in \Sigma} p_a \log \frac{p_a}{q_a} \nonumber \\
&= \sum_{a \in \Sigma} p_a \log\left(1 + \frac{\delta_a}{q_a}\right) \nonumber\\
&= \sum_{a \in \Sigma} (q_a + \delta_a) \left(\frac{\delta_a}{q_a} - \frac{\delta_a^2}{2q_a^2} + O_{\gamma}(|\delta|_{\infty}^3)\right) \nonumber \\
&= \sum_{a \in \Sigma} \left(\delta_a + \frac{\delta^2_a}{2 q_a} +  O_{\gamma}(|\delta|_{\infty}^3)\right) \nonumber \\ 
&= \sum_{a \in \Sigma} \frac{\delta_a^2}{2 q_a} + O_{\gamma,d}(|\delta|_{\infty}^3). \label{eqn:kl-quadratic}
\end{align}
Since $P^{\otimes d}$ and $Q^{\otimes d}$ have full support in $\Sigma^{d}$, it follows that $P':=f_*(P^{\otimes d})$ and $Q':=f_*(Q^{\otimes d})$ share the same support $\Sigma' \subset \Sigma$. 
Moreover, since the image of a compact set under a continuous map is compact, $K := f_*(\{P^{\otimes d} : P \in \Delta_{\gamma}\})$ is also a compact set; since each point in $K$ is separated by a positive distance from the boundaries of the simplex $\Delta'$ of probability distributions supported on $\Sigma'$, it follows by compactness that $K\subset \Delta'_{\gamma'}$ for some $\gamma' > 0$. Now, writing $p'_a = q'_a + \delta'_a$ with $|\delta'|_{\infty}$ sufficiently small, the same calculation as above shows that
\begin{align}
\KL(P',Q') 
&= \sum_{a \in \Sigma'} \frac{(\delta'_a)^2}{2 q'_a} + O_{\gamma,d}(|\delta'|_{\infty}^3) \nonumber \\
&= \sum_{a \in \Sigma'} \frac{(\delta'_a)^2}{2 q'_a} + O_{\gamma,d}(|\delta|_{\infty}^3),
\label{eqn:kl-quadratic-image}
\end{align}
where the last equality uses 
$$|\delta'|_{\infty} \leq \TV(P',Q') \leq \TV(P^{\otimes d},Q^{\otimes d}) \leq d\TV(P,Q) \leq d^{2}|\delta|_{\infty}.$$ 
We now analyze the leading term in \eqref{eqn:kl-quadratic-image}. We will need the following preliminary computation:  
for any function $h : \Sigma^d \to [-1,1]$,
\begin{align*}
\E_{P^{\otimes d}}[h] - \E_{Q^{\otimes d}}[h] 
&= \sum_{x\in \Sigma^{d}} h(x) ((Q + \delta)^{\otimes d}(x) - Q^{\otimes d}(x)) \\
&= \sum_{i=1}^{d} \sum_{x\in \Sigma^{d}} h(x) Q^{\otimes d-1}(x_{\sim i}) \delta_{x_i} + O_{d}(|\delta|_{\infty}^2)\\
&= \sum_{i=1}^{d} \E_{X\sim Q^{\otimes d}}\left[h(X) \frac{\delta_{X_i}}{q_{X_i}}\right] + O_{d}(|\delta|_{\infty}^2).
\end{align*}
Therefore, if $V$ is the subspace of $L^2(Q)$ spanned by $(1_{f = \ell})_{\ell \in \Sigma'}$, we find that for $X \sim Q^{\otimes d}$
\begin{align*}
&\sum_{a\in \Sigma'}\frac{(\delta_a')^{2}}{2q_a'} = \sum_{l\in \Sigma'}\frac{(\E_{P^{\otimes d}}[1_{f = \ell}] - \E_{Q^{\otimes d}}[1_{f = \ell}])^2}{\sqrt{2}\E_{Q^{\otimes d}}[1_{f = \ell}]} \\
&= \sum_{l\in \Sigma'} \left(\E_X
\left[ \frac{1_{f = \ell}}{\sqrt{\Pr_{Q^{\otimes d}}[f = \ell]}} \sum_{i=1}^{d} \frac{\delta_{X_i}}{\sqrt{2}q_{X_i}}\right]\right)^2 + O_{\gamma,d}(|\delta|_{\infty}^3) \\
&= \E_{X}\left[\left(\sum_{i=1}^{d} \frac{\delta_{X_i}}{\sqrt{2}q_{X_i}}\right)^2\right]
- \left\|\Proj_{V^{\perp}} \sum_{i=1}^{d}\frac{\delta_{X_i}}{\sqrt{2}q_{X_i}}\right\|_{L^2(Q)}^2
+ O_{\gamma,d}(|\delta|_{\infty}^3)  \\
&= \E_{X}\left[ \sum_{i=1}^{d} \frac{\delta_{X_i}^2}{2q_{X_i}^2}\right]
 - \left\|\Proj_{V^{\perp}} \sum_{i=1}^{d}\frac{\delta_{X_i}}{\sqrt{2}q_{X_i}}\right\|_{L^2(Q)}^2
 + O_{\gamma,d}(|\delta|_{\infty}^3)&,
\end{align*}
where the second equality is by the Pythagorean theorem in $L^2(Q)$, and the last is by expanding the square and using $\sum_{a\in \Sigma} \delta_a = 0$. 

Since the random variable $\sum_{i=1}^{d}\delta_{X_i}/(\sqrt{2}q_{X_i})$ takes on at least $d + 1$ distinct values, and since any random variable in the span of $(1_{f = \ell})_{\ell \in \Sigma'}$ can take on at most $|\Sigma'| \le d$ distinct values, it follows that $$\left\|\Proj_{V^{\perp}} \sum_{i=1}^{d}\frac{\delta_{X_i}}{\sqrt{2}q_{X_i}}\right\|_{L^2(Q)} > 0,$$
and hence,
$$\frac{\E_{X \sim Q^{\otimes d}}\left[ \sum_{i=1}^{d} \frac{\delta_{X_i}^2}{2q_{X_i}^2}\right]
 - \left\|\Proj_{V^{\perp}} \sum_{i=1}^{d}\frac{\delta_{X_i}}{\sqrt{2}q_{X_i}}\right\|_{L^2(Q)}^2}{\E_{X\sim Q^{\otimes d}}\left[ \sum_{i=1}^{d} \frac{\delta_{X_i}^2}{2q_{X_i}^2}\right]} < 1.$$
Note that the above quantity depends continuously on the (Euclidean) unit vector in the direction of $(\delta_{a})_{a\in \Sigma}$, viewed as a vector in the hyperplane of $\R^{\Sigma}$ given by $\{(x_a)_{a\in \Sigma}\in \R^{\Sigma}: \sum_{a}x_a = 0\}$. Hence, it follows by the compactness of the unit sphere in finite dimensions that
\[ \frac{\E_{X \sim Q^{\otimes d}}\left[ \sum_{i=1}^{d} \frac{\delta_{X_i}^2}{2q_{X_i}^2}\right]
 - \left\|\Proj_{V^{\perp}} \sum_{i=1}^{d}\frac{\delta_{X_i}}{\sqrt{2}q_{X_i}}\right\|_{L^2(Q)}^2}{\E_{X\sim Q^{\otimes d}}\left[ \sum_{i=1}^{d} \frac{\delta_{X_i}^2}{2q_{X_i}^2}\right]} < c,\]
 for $c < 1$ independent of $\delta$. Finally, since 
 $$\E_{X \sim Q^{\otimes d}}\left[ \sum_{i=1}^{d} \frac{\delta_{X_i}^2}{2q_{X_i}^2}\right] = d\cdot \sum_{a\in \Sigma}\frac{\delta_a^{2}}{2q_a} = d\KL(P,Q) + O_{\gamma,d}(|\delta|_{\infty}^{3})$$
by \eqref{eqn:kl-quadratic}, the desired conclusion follows by taking the limit as $|\delta|_{\infty}\to 0$. 
\end{proof}

\textbf{Inverse SKL non-contraction theorem for Boolean functions: }For the remainder of this section, we will restrict to the case when $|\Sigma|=2$. In this case, we are able to go beyond the restricted SDPI, and show that the only situation where we do not have contraction in SKL is essentially that of Example~\ref{example:or-function}. We begin with a preliminary definition.

\begin{definition}
\label{defn:OR-like}
A Boolean function $f : \{0,1\}^d \to \{0,1\}$ is \emph{OR-like} if $f(0,\ldots,0) \ne f(1,0,\ldots, 0)$
and $f(1,0,\ldots,0) = f(0,\ldots,0,1,0,\ldots,0)$ i.e. $f$ is constant on inputs with hamming weight $1$.
A Boolean function $f$ is \emph{AND-like} if $g$ is OR-like where $g(x) = 1 - f(1 - x_1, \ldots, 1 - x_d)$.
\end{definition}
\begin{theorem}[Inverse Theorem for SKL non-contraction]\label{thm:inverse-theorem}
Fix $d > 1$ and $C > 0$. There exist constants $\Delta = \Delta(d,C) > 0$ and $\eta  = \eta(d,C) > 0$ such that for any Boolean function $f : \{0,1\}^d \to \{0,1\}$, and $P=\Ber(p),Q = \Ber(q)$ with $\SKL(P,Q) < C$, if
\[ \frac{\SKL(f(P^{\otimes d}), f(Q^{\otimes d}))}{d \SKL(P,Q)} > 1 - \Delta, \]
then $f$ must be OR-like with $p,q \leq \eta$ or AND-like with $p,q \ge 1 - \eta$.
\end{theorem}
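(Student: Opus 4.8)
The plan is to partition the range of $(p,q)$ into an ``interior'' regime, ruled out by the restricted SDPI, and two ``corners'' near $0$ and $1$, handled by a perturbative computation. Throughout, write $R$ for the ratio in the hypothesis; note $\SKL(P,Q)<C<\infty$ forces $p,q\in(0,1)$ with $p\neq q$, and since the claim is symmetric in $(p,q)$ we may assume $p\le q$.

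First I would record that $R$ cannot be close to $1$ when $p,q$ are both bounded away from $\{0,1\}$: by tensorization $\SKL(P^{\otimes d},Q^{\otimes d})=d\,\SKL(P,Q)$, and writing $\SKL=\KL(\cdot,\cdot)+\KL(\cdot,\cdot)$ and using $\tfrac{a+b}{c+e}\le\max(\tfrac ac,\tfrac be)$, the restricted SDPI (Theorem~\ref{thm:restricted-sdpi-multibit}, applicable as $|\Sigma|=2\le d$) applied to $(P,Q)$ and to $(Q,P)$ gives $R\le\eta(2,\gamma,d)<1$ whenever $p,q\in[\gamma,1-\gamma]$. Next, if the conclusion fails --- i.e.\ (given $p\le q$) $q>\eta$ and $p<1-\eta$ --- then the elementary bounds $\KL(Q,P)\ge q\log(q/p)-1$ and $\KL(P,Q)\ge(1-p)\log\tfrac{1-p}{1-q}-1$ together with $\SKL(P,Q)<C$ and $\KL\ge0$ force $q\log(q/p)<C+1$ and $(1-p)\log\tfrac{1-p}{1-q}<C+1$, hence $p\ge\gamma_0$ and $q\le1-\gamma_0$ with $\gamma_0:=\eta e^{-(C+1)/\eta}>0$. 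So such a configuration lies in $[\gamma_0,1-\gamma_0]^2$ and is excluded once $\Delta<1-\eta(2,\gamma_0,d)$. It remains to treat the corners $p,q\le\eta$ and $p,q\ge1-\eta$.

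The heart of the argument is the corner $p,q\le\eta$. Since $f\mapsto1-f$ flips both pushforward Bernoullis (preserving $R$) and preserves being OR-/AND-like, I may assume $f(0,\dots,0)=0$, so that $r(t):=\Pr_{X\sim\Ber(t)^{\otimes d}}[f(X)=1]=\sum_{w\ge1}m_w\,t^w(1-t)^{d-w}$, where $m_w$ counts weight-$w$ inputs mapped to $1$; put $k:=m_1$. If $k=0$ then $r(t)=O_d(t^2)$ and the integral representation below yields $R=O_d(\eta)$, so $R>1-\Delta$ is impossible for $\eta$ small; thus assume $k\ge1$, in which case $r'(t)=k(1+O_d(\eta))$ (so $r$ is increasing, $r(p)<r(q)$) and $r(t)(1-r(t))=kt(1+O_d(\eta))$ uniformly on $(0,\eta)$. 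Using $\SKL(\Ber(a),\Ber(b))=(a-b)(g(a)-g(b))$ with $g(x)=\log\tfrac{x}{1-x}$ and the fundamental theorem of calculus,
\[ R=\frac{\bigl(\int_p^q r'(t)\,dt\bigr)\bigl(\int_p^q \tfrac{r'(t)}{r(t)(1-r(t))}\,dt\bigr)}{d\,\bigl(\int_p^q dt\bigr)\bigl(\int_p^q \tfrac{dt}{t(1-t)}\bigr)}. \]
On $(0,\eta)$ one has $\tfrac{r'(t)}{r(t)(1-r(t))}=\tfrac1t(1+O_d(\eta))$, $\tfrac1{t(1-t)}=\tfrac1t(1+O_d(\eta))$, and $r'(t)=k(1+O_d(\eta))$, so the four integrals equal $(1+O_d(\eta))$ times $k(q-p)$, $\log(q/p)$, $q-p$, $\log(q/p)$ respectively --- the point being that the errors are \emph{multiplicative}, hence harmless even when $\log(q/p)$ is tiny. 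Therefore $R=\tfrac kd(1+O_d(\eta))$, and for $\eta$ small (depending on $d$) and $\Delta<\tfrac1{2d}$ this forces $k=d$: every weight-$1$ input maps to $1$ while $f(0,\dots,0)=0$, i.e.\ $f$ is OR-like. The corner $p,q\ge1-\eta$ follows by applying this to $\tilde f(x):=1-f(1-x_1,\dots,1-x_d)$ at $(1-p,1-q)$, since $R(\tilde f;1-p,1-q)=R(f;p,q)$ and being OR-like for $\tilde f$ means being AND-like for $f$. One finally fixes $\eta=\eta(d,C)$ small and $\Delta=\Delta(d,C)$ below both $\tfrac1{2d}$ and $1-\eta(2,\gamma_0,d)$.

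I expect the corner computation to be the main obstacle: the crude assertion ``$R\approx k/d$'' fails exactly in the regime where $\log(q/p)$ is comparable to the perturbative error terms, and the resolution --- routing the estimate through the integral representation so that the errors enter multiplicatively and therefore cancel between numerator and denominator --- is the one genuinely delicate point. A secondary observation worth isolating, which makes the reduction to the corners clean, is that the mere finiteness bound $\SKL(P,Q)<C$ already keeps \emph{both} parameters away from $\{0,1\}$ unless they lie near the \emph{same} endpoint, which is precisely what lets the restricted SDPI dispose of every non-corner configuration.
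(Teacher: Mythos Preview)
Your proof is correct and follows the same three-part architecture as the paper's: the restricted SDPI (Theorem~\ref{thm:restricted-sdpi-multibit}) handles the interior, the bound $\SKL(P,Q)<C$ forces $p,q$ to the same corner, and a perturbative computation at the corner extracts the factor $k/d$.

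The one genuine difference is how you handle the corner. The paper simply expands $\SKL(\Ber(p),\Ber(q))=(p-q)(\log p-\log q)+O((p-q)^2)$, asserts the limit $k/d$ as $p,q\to 0$, and then appeals to compactness of the simplex; it does not explicitly address the uniformity in the regime where $\log(q/p)$ is small (though it can be checked that the error ratio is $O((p-q)/(\log p-\log q))\le O(q)\le O(\eta)$ there). Your integral representation
\[
R=\frac{\bigl(\int_p^q r'\bigr)\bigl(\int_p^q r'/r(1-r)\bigr)}{d\bigl(\int_p^q 1\bigr)\bigl(\int_p^q 1/t(1-t)\bigr)}
\]
is a cleaner device: it turns the additive error terms into multiplicative ones that factor out of each integral, so the estimate $R=\tfrac{k}{d}(1+O_d(\eta))$ is manifestly uniform over $0<p<q\le\eta$, and it also dispatches the $k=0$ case in one line. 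Similarly, your explicit threshold $\gamma_0=\eta e^{-(C+1)/\eta}$ replaces the paper's implicit compactness step for the reduction to the interior. Both arguments are valid; yours is more self-contained, while the paper's is terser.
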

\begin{proof}
First, by Theorem~\ref{thm:restricted-sdpi-multibit}, we know that
\[ \sup_{p,q \in [\delta,1 - \delta]}\frac{\KL(f(P^{\otimes d}), f(Q^{\otimes d}))}{d \KL(P,Q)} < 1 \]
for all $\delta > 0$,
and by symmetrizing we get the same statement for SKL. Therefore, it remains to analyze what happens when $p$ and $q$ approach $0$ or $1$ together (note that it cannot be the case that $p$ approaches $0$ and $q$ approaches $1$, or vice versa, by our assumption that SKL is bounded by $C$). 

By symmetry, it suffices to consider the case $p \to 0, q \to 0$. After possibly replacing $f$ by $1-f$, we may further assume that $f(0,\ldots,0) = 0$. Let $k$ be the number of inputs $x$ of Hamming weight 1 such that $f(x) = 1$; if $f$ is not OR-like, then $k < d$. Let 
$$p' = \Pr_{P^{\otimes d}}(f = 1)$$ and 
$$q' = \Pr_{Q^{\otimes d}}(f = 1).$$ Then,
$$p' - q' = k(p - q) + o(p - q),$$ 
and using 
$$|\log(1 - p) - \log(1 - q)| \le 2 |p - q|$$ for 
$p,q$ sufficiently small by the mean value theorem, we see that
\[ \SKL(\Ber(p),\Ber(q)) = (p - q)(\log(p) - \log(q)) + O((p - q)^2) \]
and similarly for $\SKL(\Ber(p'),\Ber(q'))$.
Therefore,
\[ \lim_{p,q\to 0} \frac{\SKL(\Ber(p'),\Ber(q'))}{d\SKL(\Ber(p),\Ber(q)} = \frac{k}{d}, \]
which is less than $1$ if $f$ is not OR-like.
Hence, by compactness of the probability simplex, there exists a $\Delta$ such that 
$$\frac{\SKL(f(P^{\otimes d}), f(Q^{\otimes d}))}{d\SKL(P,Q)} < 1 - \Delta,$$
which completes the proof.
\end{proof}

We note that the assumption $\SKL(P,Q) < C$ in the above theorem is trivially satisfied for our applications, due to the following simple observation. Here, $P_{n}^{\pm}$ refer to the distributions defined earlier. 
\begin{lemma}\label{lem:skl-bounded}
For any $\nu \le 1/4$ and $n \ge 1$,
\[ \SKL(P_n^{+}, P_n^{-}) \le d \SKL(\Ber(3/4),\Ber(1/4)) \]
\end{lemma}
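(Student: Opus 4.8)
The plan is to unfold the recursion a single step and reduce the problem to bounding the symmetrized KL divergence of one symmetric mixture. Recall that $P_n^{\pm} = (f_n)_*\bigl(((\tfrac12+\nu)P_{n-1}^{\pm}+(\tfrac12-\nu)P_{n-1}^{\mp})^{\otimes d}\bigr)$ (and the same with an intermediate noisy channel inserted, if one is present --- the argument below is unaffected). Since pushforwards only decrease $\SKL$ (data processing for $\KL$, hence for $\SKL$), and since $\KL$ tensorizes, $\KL(A^{\otimes d},B^{\otimes d}) = d\,\KL(A,B)$, I would first deduce
\[
\SKL(P_n^+, P_n^-) \le d\, \SKL\!\left((\tfrac12+\nu)P_{n-1}^{+}+(\tfrac12-\nu)P_{n-1}^{-},\ (\tfrac12-\nu)P_{n-1}^{+}+(\tfrac12+\nu)P_{n-1}^{-}\right).
\]
So it suffices to show that for \emph{any} two probability distributions $A,B$ on a common space and any $\delta\in[0,1/2]$,
\[
\SKL\!\left((\tfrac12+\delta)A+(\tfrac12-\delta)B,\ (\tfrac12-\delta)A+(\tfrac12+\delta)B\right) \le \SKL(\Ber(\tfrac12+\delta),\Ber(\tfrac12-\delta)),
\]
and then apply this with $\delta=\nu$, $A=P_{n-1}^+$, $B=P_{n-1}^-$.

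To prove this auxiliary claim I would introduce a joint law $\tilde P$ on $\{0,1\}\times\mathcal X$: draw $b\sim\Ber(\tfrac12+\delta)$, and then sample from $A$ if $b=1$, from $B$ if $b=0$; define $\tilde Q$ in exactly the same way but with $b\sim\Ber(\tfrac12-\delta)$. Projecting onto the second coordinate sends $\tilde P$ and $\tilde Q$ to the two mixtures in question, so the data processing inequality gives that the $\SKL$ of the mixtures is at most $\SKL(\tilde P,\tilde Q)$. On the other hand, $\tilde P$ and $\tilde Q$ have \emph{identical} conditional laws given $b$, so the chain rule for $\KL$ collapses $\KL(\tilde P,\tilde Q)$ to $\KL(\Ber(\tfrac12+\delta),\Ber(\tfrac12-\delta))$, and symmetrically for the reversed divergence; hence $\SKL(\tilde P,\tilde Q)=\SKL(\Ber(\tfrac12+\delta),\Ber(\tfrac12-\delta))$, which proves the claim.

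Finally I would close the argument by a one-line monotonicity computation: $\SKL(\Ber(\tfrac12+\delta),\Ber(\tfrac12-\delta)) = 4\delta\log\tfrac{\,1/2+\delta\,}{\,1/2-\delta\,}$ is increasing in $\delta$ on $[0,1/2)$, so for $\nu\le 1/4$ it is bounded by its value at $\delta=1/4$, namely $\SKL(\Ber(3/4),\Ber(1/4))$; combined with the first display this yields the lemma. I do not anticipate a genuine obstacle here; the only point needing a little care is that $A$ and $B$ are arbitrary (their mutual $\SKL$ may be infinite or undefined), which is why one cannot simply invoke the quadratic mixing bound of Lemma~\ref{lem:mixing-inequality} --- the auxiliary joint-distribution/chain-rule argument sidesteps this cleanly.
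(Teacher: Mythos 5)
Your proof is correct and rests on the same idea as the paper's: the paper applies data processing to the Markov chain $X_\rho \to X_{N(\rho)} \to Y$ in one step, obtaining $\SKL(P_n^+,P_n^-) \le d\,\SKL(\Ber(\tfrac12+\nu),\Ber(\tfrac12-\nu))$ directly from tensorization over the $d$ children, then uses $\nu \le 1/4$. Your detour through the one-step recursion plus the auxiliary mixture bound (whose joint-law/chain-rule argument is exactly data processing through the latent child label) arrives at the same intermediate quantity, so the two arguments are essentially identical in substance.
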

\begin{proof}
Fix $n$ and consider the broadcasting and reconstruction process on the $d$-regular tree of depth $n$. Let $X_{\rho}$ be the label of the root, $X_{N(\rho)}$ be the labels of the direct children of the root, and $Y$ the output of the reconstruction process (so $P_n^{\pm}$ is the law of $Y$ given $X_{\rho} = \pm$). Since $Y$ is conditionally independent of $X_{\rho}$ given $X_{N(\rho)}$, we have by the  data processing inequality that
\begin{align*}
\SKL(P_n^{+},P_n^{-}) 
&\le \SKL\left(\L(X_{N(\rho)} | X_{\rho} = +), \L(X_{N(\rho)} | X_{\rho} = -)\right)\\
&= d \SKL(\Ber(1/2 + \nu), \Ber(1/2 - \nu)).
\end{align*}
Since $\nu \le 1/4$ by assumption, the desired bound follows.
\end{proof}

\textbf{Failure of 1-bit message-passing algorithms with globally fixed reconstruction function: }The previous considerations, together with a simple case analysis, allow us to show that 1-bit message-passing algorithms which use the \emph{same} reconstruction function at every node fail to solve the reconstruction problem near criticality, thereby extending the main result in \cite{mossel1998recursive}.
\begin{theorem}\label{thm:no-reconstruction-fixed-function}
There exists $\nu_1$ such that $4d \nu_1^2 > 1$ (i.e. reconstruction is information-theoretically possible) but no 1-bit reconstruction algorithm with fixed reconstruction function $f$ solves the reconstruction problem for $\nu \le \nu_1$. 
\end{theorem}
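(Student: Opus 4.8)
The plan is to track, for each $n\ge 1$, the pair $(p_n^+,p_n^-)\in[0,1]^2$ with $p_n^{\pm}:=\Pr(Y_\rho=1\mid X_\rho=\pm)$, where $Y_\rho\in\{0,1\}$ is the (single-bit) reconstruction output at the root of the depth-$n$ $d$-ary tree. Writing $\psi_f(m):=\Pr_{\Ber(m)^{\otimes d}}[f=1]$ and, with the reparametrization $\varepsilon=\tfrac12-\nu$, $m_n^{\pm}:=(\tfrac12+\nu)p_{n-1}^{\pm}+(\tfrac12-\nu)p_{n-1}^{\mp}$, the broadcast-then-reconstruct step is the autonomous planar map $p_n^{\pm}=\psi_f(m_n^{\pm})$, because conditioned on $X_\rho=\pm$ the messages at the $d$ children of $\rho$ are i.i.d.\ $\Ber(m_n^{\pm})$. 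Since $\TV(\Ber(p_n^+),\Ber(p_n^-))\le\sqrt{\SKL(\Ber(p_n^+),\Ber(p_n^-))}$, it suffices to prove $\SKL_n:=\SKL(\Ber(p_n^+),\Ber(p_n^-))\to 0$; note $\SKL_n\le C$ for all $n\ge 1$ and some fixed $C=C(\nu_1)$ by (the proof of) Lemma~\ref{lem:skl-bounded}. I would split into three cases according to whether $f$ is OR-like, AND-like (Definition~\ref{defn:OR-like}), or neither.

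\textbf{The generic case, via the inverse theorem.} Combining the mixing inequality (Lemma~\ref{lem:mixing-inequality}) with data processing gives $\SKL_n\le r_n\cdot 4d\nu^2\cdot\SKL_{n-1}$, where $r_n:=\SKL(f_*((P_{n-1}^+(\nu))^{\otimes d}),f_*((P_{n-1}^-(\nu))^{\otimes d}))\big/\bigl(d\,\SKL(P_{n-1}^+(\nu),P_{n-1}^-(\nu))\bigr)\le 1$ and $P_{n-1}^\pm(\nu)=\Ber(m_n^\pm)$ in the notation of Lemma~\ref{lem:mixing-inequality}. By Theorem~\ref{thm:inverse-theorem}, for a fixed $f$ one has $r_n\le 1-\Delta$ unless $f$ is OR-like and $m_n^{\pm}\le\eta$, or $f$ is AND-like and $m_n^{\pm}\ge 1-\eta$. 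Hence if $f$ is neither OR-like nor AND-like, then $\SKL_n\le(1-\Delta)\,4d\nu^2\,\SKL_{n-1}$ at \emph{every} step, and choosing $\nu_1$ slightly above $1/(2\sqrt d)$ so that $1<4d\nu_1^2<(1-\Delta)^{-1}$ yields $\SKL_n\to 0$, proving the theorem in this case.

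\textbf{The OR-like case: balancing non-contraction against repulsion.} Now suppose $f$ is OR-like (the AND-like case is symmetric under $x\mapsto\mathbf 1-x$, which interchanges the corners $\mathbf 0,\mathbf 1$; a function of both types is handled by tracking both corners). Normalize $f(\mathbf 0)=0$, so $f\equiv 1$ on Hamming weight-one inputs. A quantitative form of Theorem~\ref{thm:inverse-theorem}, immediate from its proof by compactness, gives: for any $\eta^*>0$ there is $\Delta^*=\Delta^*(d,\eta^*,C)>0$ with $r_n\le 1-\Delta^*$ whenever $\max(p_{n-1}^+,p_{n-1}^-)\ge\eta^*$; so the only obstruction to contraction of $\SKL$ is while $(p_{n-1}^+,p_{n-1}^-)$ lies in the corner neighborhood $B=\{\max(p^+,p^-)<\eta^*\}$. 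But there $\psi_f(m)\ge dm(1-m)^{d-1}$ exactly (the event ``exactly one input is $1$''), so $s_n:=p_n^++p_n^-$ satisfies $s_n\ge\lambda s_{n-1}$ with $\lambda:=d(1-\eta^*)^{d-1}$, and we \emph{first} fix $\eta^*$ small (depending only on $d$) so that $\lambda>1$: the corner is a \emph{repeller}. Now run a Lyapunov argument with a barrier penalizing approach to the corner, e.g.
\[ \Phi_n:=\frac{\SKL_n}{s_n^{\theta}},\qquad \theta:=\log_\lambda(4d\nu^2), \]
with $\theta\in(0,1)$ once $\nu_1$ is close enough to $1/(2\sqrt d)$ that $1<4d\nu_1^2<\lambda$. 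In $B$, $\SKL_n$ grows by at most $4d\nu^2=\lambda^{\theta}$ while $s_n^{\theta}$ grows by at least $\lambda^{\theta}$, so $\Phi_n\le\Phi_{n-1}$; in the good region $\SKL_n$ contracts by $(1-\Delta^*)4d\nu^2$, which is $<1$ once also $4d\nu_1^2<(1-\Delta^*)^{-1}$, while $s_n^{\theta}$ changes by a bounded factor tending to $1$ as $\theta\to 0$. Since $s_n\le 2$, $\Phi_n\to 0$ forces $\SKL_n\le 2^{\theta}\Phi_n\to 0$.

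\textbf{Main obstacle.} The technical heart is controlling $\Phi$ at the \emph{transitions} between the good region and the corner neighborhood $B$ --- in particular bounding the increase of $\Phi_n$ at a step entering $B$, where $s_n$ can shrink by a large factor and thus inflate $s_n^{-\theta}$. Three facts should make this go through: (i) after a step out of the good region, $s_n$ is bounded below by a fixed positive constant, since on the compact good region the input parameters $m_n^{\pm}$ are bounded away from $0$ and $1$ and $\psi_f$ is then bounded away from $0$ --- one cannot land arbitrarily deep in $B$ in one step; (ii) $\SKL_n$ (equivalently the $\chi^2$-information $I_2=u_n^2/(s_n(2-s_n))$ with $u_n:=p_n^+-p_n^-$) is automatically small near the corner because $|u_n|\le s_n$; and (iii) a careful accounting over the orbit's excursions into and out of $B$, exploiting that the worst-case inflation factor of $\Phi$ over an excursion times the per-step good-region contraction factor is strictly below $1$ when $\nu$ is close enough to $1/(2\sqrt d)$. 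I expect the delicate part to be exactly this bookkeeping: fixing $\eta^*$, then $\theta$, then $\nu_1$ in the right order so that all the relevant inequalities hold simultaneously.
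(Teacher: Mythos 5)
Your first two steps match the paper's proof: the reduction to the planar dynamics of $(p_n^+,p_n^-)$, the per-step bound $\SKL_n\le r_n\cdot 4d\nu^2\,\SKL_{n-1}$ via Lemma~\ref{lem:mixing-inequality}, and the use of Theorem~\ref{thm:inverse-theorem} to dispose of any $f$ that is neither OR-like nor AND-like. Where you diverge is the corner case, and there your argument has a genuine gap. Your supporting fact (i) is false: the ``good region'' $\{\max(p^+,p^-)\ge\eta^*\}$ is not compactly contained in $(0,1)^2$ --- it contains points arbitrarily close to $(1,1)$ --- so the inputs $m_n^{\pm}$ are \emph{not} bounded away from $1$ there. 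Take $f=\mathbf{1}[\mathrm{weight}=1]$ (OR-like, with $f(\vec 1)=0$), so $\psi_f(m)=dm(1-m)^{d-1}$: starting from $(p_{n-1}^+,p_{n-1}^-)$ near $(1,1)$ the orbit lands at $s_n$ arbitrarily close to $0$ in a single step. Consequently your claim that across a good-region step ``$s_n^{\theta}$ changes by a bounded factor'' also fails ($\theta$ is fixed, not tending to $0$), and the per-excursion inflation of $\Phi$ in your fact (iii) is unbounded, so the bookkeeping you correctly identify as the delicate part does not close. (Your fact (ii) can be sharpened to a partial rescue --- after any single step the likelihood ratio $p_n^+/p_n^-$ is bounded by $3^d$ because $m_n^+/m_n^-$ and $(1-m_n^+)/(1-m_n^-)$ lie in $[1/3,3]$, whence $\SKL_n=O_d(s_n)$ and $\Phi_n=O_d(s_n^{1-\theta})$ is uniformly bounded on entry to $B$ --- but boundedness of $\Phi$ is not the decay you need, and the proposal does not supply the missing monotonicity across excursions.)

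The paper avoids all of this with a simpler observation that is only available because $f$ is \emph{fixed}: the very structure that threatens non-contraction also makes the offending corner a repeller for the whole orbit, so no Lyapunov function is needed. Concretely, writing $g(\rho)=\E_{\Ber(\rho)^{\otimes d}}[f]$, an OR-like $f$ with $f(\vec 0)=0$ satisfies $g(\rho)\ge d\rho(1-\rho)^{d-1}>\rho$ near $0$, and a short invariant-region argument (using $m_n^{\pm}\in[(1/2-\nu)s_{n-1},(1/2+\nu)s_{n-1}]$, together with $g(m)\ge m^d$ when $f(\vec 1)=1$ or $g(m)\le 1-2^{-d}$ when $f(\vec 0)=f(\vec 1)=0$) shows $\inf_n s_n>0$: the iterates never enter a fixed neighborhood of $(0,0)$ after the first step, and symmetrically for the AND-like corner. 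Theorem~\ref{thm:inverse-theorem} then gives $r_n\le 1-\Delta$ at \emph{every} step, and one concludes exactly as in your generic case. The barrier/Lyapunov machinery you propose is essentially what the paper deploys for Theorem~\ref{thm:no-1bit-reconstruction} (level-varying $f$), where the repeller argument is unavailable; even there, the transition into the corner is handled inside Theorem~\ref{thm:regularized-sdpi} by a local expansion rather than by excursion bookkeeping. To repair your proof, replace facts (i) and (iii) by the repeller argument --- at which point $\Phi$ becomes unnecessary.
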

\begin{proof}
For the analysis, define $g(\rho) = \E_{\Ber(\rho)^{\otimes n}}[f]$. We know reconstruction is impossible if $f$ is constant, so w.l.o.g. we may assume that $f$ is not constant. Recall that $\SKL(p_t,q_t)$ is upper bounded by a constant due to Lemma~\ref{lem:skl-bounded}.

The proof proceeds by case analysis on the boundary behavior of $f$. We give the analysis of the first case in explicit detail and then describe the modifications to this analysis for each of the other cases. Here $\vec 0 = (0,\ldots,0)$ and likewise for $\vec 1$. 
\begin{enumerate}
    \item $f(\vec 0) = 0$, $f(\vec 1) = 1$. If $f$ is OR-like then $g(\rho) > \rho$ for all $\rho < \rho_0$, so there is some neighborhood of $0$ ($\rho < \rho_0^n$) which the dynamics will not enter\footnote{More precisely, no iterate $p_n$ or $q_n$ for $n \ge 1$ will lie in this neighborhood.}. If $f$ is AND-like, then the same holds in a neighborhood around $1$. Applying Theorem~\ref{thm:inverse-theorem}, we see the SKL contracts by at least some absolute constant in each step in all the regions the dynamics can enter. 
    \item $f(\vec 0) = f(\vec 1)$. By symmetry assume $f(\vec 0) = 0$. Then, there is a neighborhood of $1$ which the dynamics will not enter. Additionally, if $f$ is OR-like, the same is true of $0$. In any case, we see from Theorem~\ref{thm:inverse-theorem} that the SKL contracts by at least some absolute constant in each step in all the regions the dynamics can enter. 
    \item $f(\vec 0) = 1$, $f(\vec 1) = 0$. If $f$ is not OR-like or AND-like, then we are done by applying Theorem~\ref{thm:inverse-theorem}. If $f$ is OR-like around $0$, then there is a neighborhood of $1$ which will not be entered, hence there is also a neighborhood of $0$ which will also not be entered; once again, apply Theorem~\ref{thm:inverse-theorem} to see that the SKL contracts by at least some absolute constant in each step in all the regions the dynamics can enter on the complement. The case when $f$ is AND-like around $1$ is handled similarly.
\end{enumerate}
In every case, we showed that the dynamics stay within a region where we gain some absolute constant factor in the data processing inequality (by Theorem~\ref{thm:inverse-theorem}). Hence, by the same argument as in Theorem~\ref{thm:KS-noisy-general} we see that reconstruction is impossible sufficiently close to the KS threshold. 
\end{proof}

\textbf{Failure of general 1-bit message-passing algorithms: }As discussed in Section~\ref{subsec:1-bit-sketch}, the analysis of Theorem~\ref{thm:no-reconstruction-fixed-function} relies strongly on the fact that the function $f$ is fixed throughout the tree, and does not extend to exclude natural 1-bit message-passing algorithms where reconstruction functions vary across levels. The next two theorems make the sketch from Section~\ref{subsec:1-bit-sketch} precise. 


\begin{theorem}
\label{thm:regularized-sdpi}
Fix $d>1$ and $C>0$. Let $P = \Ber(p), Q = \Ber(q)$ such that $\SKL(P,Q) \le C$ and $p,q \in (0,1)$. 
There exists $\lambda = \lambda(d,C) > 0$ independent of $P$ and $Q$ such that, defining
\[ \phi(P,Q) := \log \SKL(P,Q) - \lambda \left(\log \frac{p + q}{2} + \log \left(1 - \frac{p + q}{2}\right)\right), \]
then there exists $c = c(d,C) > 0 $ such that
\[ \phi(f_*(P^{\otimes d}), f_*(Q^{\otimes d})) - \log d \le L(P,Q) - c \]
for all non-constant functions $f:\{0,1\}^{d}\to \{0,1\}$. 
\end{theorem}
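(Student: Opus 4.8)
I first normalize. Since $\SKL(P,Q)=0$ when $p=q$ (making the left side $-\infty$), assume $p\ne q$; replacing $f$ by $1-f$ changes neither side (as $\SKL$ is invariant under complementing both arguments and the log-barrier under $m\mapsto 1-m$), so assume $f(\vec 0)=0$. Then $g(p):=\Pr_{\Ber(p)^{\otimes d}}(f=1)$ is a polynomial with $g(0)=0$ and $g(p)\in(0,1)$ for $p\in(0,1)$, and $f_*(P^{\otimes d})=\Ber(g(p))$, $f_*(Q^{\otimes d})=\Ber(g(q))$. Writing $m:=\tfrac{p+q}2$, $m':=\tfrac{g(p)+g(q)}2$, the claim becomes $\Psi_g(p,q)\le -c$, where
\[
\Psi_g(p,q):=\underbrace{\log\frac{\SKL(\Ber(g(p)),\Ber(g(q)))}{d\,\SKL(\Ber(p),\Ber(q))}}_{=:\,U}\;+\;\lambda\,\underbrace{\log\frac{m(1-m)}{m'(1-m')}}_{=:\,V},
\]
and it suffices to treat the finitely many $f$ one at a time. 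The term $U$ is always $<0$: off the diagonal by the strict data-processing inequality for $\SKL$ (a single bit is not sufficient for $\{\Ber(p)^{\otimes d},\Ber(q)^{\otimes d}\}$ when $d\ge 2$, the likelihood ratio taking $d+1>2$ values), and on the diagonal by the strict Fisher-information inequality $g'(p)^2p(1-p)<d\,g(p)(1-g(p))$, which is Cauchy--Schwarz applied to $g'(p)=\tfrac1{p(1-p)}\Cov(f(X),\sum_i X_i)$ (equality would force $f$ constant). The difficulty is that $V$ can be \emph{positive} --- e.g.\ $\mathrm{AND}$ drives $m'$ toward $0$ --- so $U$ and $V$ must be balanced against one another, not bounded separately.

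The plan is a compactness argument on $R_C:=\overline{\{(p,q)\in(0,1)^2:\SKL(\Ber(p),\Ber(q))\le C\}}$, which is compact and, since $\SKL(\Ber(p),\Ber(q))=+\infty$ whenever exactly one of $p,q$ lies in $\{0,1\}$, meets the boundary of the square only at $(0,0)$ and $(1,1)$. I would first verify that $\Psi_g$ extends to $R_C$ as a function continuous into $[-\infty,\infty)$, hence upper semicontinuous and attaining its maximum on $R_C$: across the open diagonal this is routine (both divergences vanish quadratically, so $U$ extends by the Fisher ratio), while continuity up to a corner --- say $(0,0)$, the other being symmetric --- is the substantive step. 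Writing $s:=\max(p,q)$, $t:=\min(p,q)/s$, and taking $j$ to be the least Hamming weight of an input with $f=1$ and $k$ the number of such weight-$j$ inputs, the expansion $g(p)=k\,p^j(1+O_d(p))$ together with the small-argument asymptotics $\SKL(\Ber(a),\Ber(b))=(a-b)\log\tfrac ab\,(1+O(\max(a,b)))$ (which degenerates correctly to $\tfrac{(a-b)^2}{a(1-a)}$ as $b/a\to1$) gives, uniformly in $t$,
\[
\Psi_g(p,q)=\bigl[-\log d+\log(jk)-\lambda\log k\bigr]+(1-\lambda)(j-1)\log s+\bigl[\log\tfrac{1-t^j}{1-t}-\lambda\log\tfrac{1+t^j}{1+t}\bigr]+o(1).
\]
Hence the corner limit is $\log\tfrac kd-\lambda\log k<0$ when $j=1$ and $-\infty$ when $j\ge2$ (using $0<\lambda<1$), approach-independent either way; this is the computation behind the ratio $k/d$ in Theorem~\ref{thm:inverse-theorem}.

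It then remains to choose $\lambda$ so that $\Psi_g<0$ at \emph{every} point of $R_C$: by the maximum-attaining property this yields $\sup_{R_C}\Psi_g<0$, and one takes $c:=-\min_f\sup_{R_C}\Psi_g>0$, with $\lambda$ and $c$ depending only on $d,C$. Where $V\le0$ one has $\Psi_g\le U<0$ for free, so the matter reduces to the set $\{V>0\}$, on which $\Psi_g<0$ precisely when $\lambda<|U|/V$. The key lemma is thus
\[
\lambda^*_g:=\inf\bigl\{\,|U(p,q)|/V(p,q):(p,q)\in R_C,\ V(p,q)>0\,\bigr\}>0,
\]
which I would prove by compactness: a minimizing sequence subconverges to some $(p_*,q_*)\in R_C$, and a limiting ratio of $0$ is excluded case by case --- at an interior or diagonal limit point $U,V$ are continuous with $U<0$ (or $-\infty$), forcing a positive limit or else $V\to0^+$ with $|U|$ bounded below; at a corner with $j=1$ one gets $V\to-\log k$, contradicting $V>0$ if $k\ge2$ and tending to $0$ against $U\to-\log d$ if $k=1$; at a corner with $j\ge2$ the displayed expansion gives $|U|\sim(j-1)\log(1/s)\sim V$, so the ratio tends to $1$. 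Then $\lambda:=\tfrac12\min_f\lambda^*_g$ works. I expect the main obstacle to be this key lemma, and inside it the corner analysis: it turns on pinning down the exact leading term $g(p)=k\,p^j+O(p^{j+1})$ of the pushforward near the boundary of the simplex and on carefully handling the degenerate direction $p\to q$, where both logarithmic divergences feeding $U$ must be shown to cancel and the Fisher-information form used in their stead.
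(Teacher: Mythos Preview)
Your proposal is correct and follows the same overall skeleton as the paper's proof: an asymptotic analysis at the two corners $(0,0)$ and $(1,1)$ together with a compactness argument to handle the interior. The corner computation you give (via the leading term $g(p)=kp^j(1+O_d(p))$ and the parameterization $s=\max(p,q)$, $t=\min(p,q)/s$) is exactly the computation the paper carries out, and your formula
\[
\Psi_g=\bigl[-\log d+\log(jk)-\lambda\log k\bigr]+(1-\lambda)(j-1)\log s+\bigl[\log\tfrac{1-t^j}{1-t}-\lambda\log\tfrac{1+t^j}{1+t}\bigr]+o(1)
\]
is in fact slightly more careful than the paper's version (which drops the $-\log d$ and $\lambda\log m$ contributions in the $j\ge 2$ case, although this does not affect the conclusion).

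Where you differ is in the treatment of the interior. The paper splits the domain by hand into corner neighborhoods and an interior region, and on the interior it invokes the restricted SDPI (Theorem~\ref{thm:restricted-sdpi-multibit}) as a black box to get a uniform $\eta<1$, then bounds the barrier term crudely via $g(p)(1-g(p))\ge\min(p,1-p)^{2d}$ and chooses $\lambda$ small enough that the barrier is dominated. You instead fold everything into a single compactness argument on $R_C$: you observe that $U<0$ pointwise (strict DPI off the diagonal, the Cauchy--Schwarz Fisher-information inequality on it), reformulate the choice of $\lambda$ as requiring $\lambda<\lambda^*_g:=\inf_{\{V>0\}}|U|/V$, and prove $\lambda^*_g>0$ by ruling out each possible limit of a minimizing sequence. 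This buys you a more self-contained argument that does not need the restricted SDPI as a separate lemma (indeed, your diagonal Fisher calculation is precisely the local computation underlying that lemma), and it makes the role of $\lambda$ more transparent. The paper's version is more modular, reusing a result it already proved; yours is more unified but redoes that work inline.
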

\begin{proof}
As before, by the assumed upper bound of the SKL between the distributions under consideration, we only need to consider the cases when $p,q\to 0$, $p,q\to 1$, and when both $p,q$ are bounded away from $0$ and $1$. The analysis of the first two cases is identical, so we will only consider the case when $p,q\to 0$. As in the proof of Theorem~\ref{thm:inverse-theorem}, let $k$ be the number of inputs $x$ of Hamming weight $1$ for which $f(x) \ne f(\vec 0)$. W.l.o.g. we may assume $f(\vec 0) = 0$. Assuming that $k\geq 1$, by the same calculation as in the proof of Theorem~\ref{thm:inverse-theorem}, we have
\begin{align*} 
\phi(f_*(P^{\otimes d}), f_*(Q^{\otimes d}))
&= \log \left(k \SKL(P,Q)
- \lambda \left(\log \frac{k(p + q)}{2} + \log \left(1 - \frac{k(p + q)}{2}\right)\right)\right) + o(p+q) \\
&= (1 - \lambda) \log{k} + \log \SKL(P,Q)
- \lambda \left(\log \frac{p + q}{2} + \log \left(1 - \frac{p + q}{2}\right)\right) + o(p+q),
\end{align*}
so that after subtracting $\log d$, we find a decrease of $(1 - \lambda) \log{k} - \log{d}$, which translates to a decrease by an absolute constant depending on $\lambda$ and $d$. 

If $k = 0$, let $r$ be the lowest Hamming weight at which $f$ disagrees with $f(\vec 0)$, and redefine $k$ to be the number of inputs of Hamming weight $r$ which do not agree with $f(\vec 0)$. Then, essentially the same calculation shows that
\begin{align*}
\phi(f_*(P^{\otimes d}), f_*(Q^{\otimes d})) - \phi(P,Q) - \log{d}
= \log\left(rk(p^{r - 1} + \cdots + q^{r - 1})\right) - \lambda \log \frac{k p^r + k q^r}{2} + o(p+q).
\end{align*}
Since $k$ and $r$ are bounded in terms of the fixed quantity $d$, we see that the above expression is at most
$$C_1 \log(p+q) - \lambda \left(C_2 \log(p+q)\right) + O(1) $$
for some constants $C_1,C_2 > 0$ depending on $d$. 
Therefore as long as $\lambda,p,q$ are sufficiently small this quantity is bounded above by some $-c < 0$ independent of $p$ and $q$. 

Finally, if $p,q$ are bounded away from $0$ and $1$, then we know by Theorem~\ref{thm:restricted-sdpi-multibit} that 
$$\SKL(f_*(P^{\otimes d}, Q^{\otimes d})) \le \eta d \SKL(P,Q)$$ 
for some $\eta$ < 1, depending only on how close our region gets to the boundaries of the interval $[0,1]$. Furthermore, because $f$ is non-constant, we know that $\Pr_{P^{\otimes d}}(f = 1) \ge \min\{(1-p)^{d}, p^{d}\}$. Therefore if $\lambda$ is chosen sufficiently small (depending on $d$) with respect to $\eta$, we get
\[ \phi(f_*(P^{\otimes d}), f_*(Q^{\otimes d})) - \log {d} \le \phi(P,Q) - \eta/2. \]
Combining these bounds gives the result.
\end{proof}

\begin{theorem}\label{thm:no-1bit-reconstruction}
There exists $\nu_1$ such that $4d \nu_1^2 > 1$ (i.e. reconstruction is information-theoretically possible) but no 1-bit message passing scheme with $f$ constant on each level solves the reconstruction problem for any $\nu$ with $\nu \le \nu_1$.
\end{theorem}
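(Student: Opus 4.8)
The plan is to run the Lyapunov functional $\phi$ of Theorem~\ref{thm:regularized-sdpi} along the distributional dynamics of the scheme and show it is driven to $-\infty$; since the log-barrier term of $\phi$ is bounded below, this forces $\SKL(P_n^+,P_n^-)\to 0$ and hence $\TV(P_n^+,P_n^-)\to 0$ by Pinsker. So fix a $1$-bit scheme with level reconstruction functions $f_1,f_2,\dots$. First I would dispose of degenerate cases: if some $f_\ell$ is constant then the message entering level $\ell+1$ is deterministic, so $P_n^+=P_n^-$ for all $n\ge\ell$ and reconstruction fails; similarly if the dynamics ever reach a state with $P_t^+=P_t^-$. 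Thus I may assume every $f_\ell$ is non-constant, so that $P_t^{\pm}=\Ber(p_t),\Ber(q_t)$ with $p_t,q_t\in(0,1)$ and $p_t\ne q_t$ for all $t\ge1$ (a short induction, using $\nu<1/2$), where the update is $(P_{t+1}^+,P_{t+1}^-)=\big((f_{t+1})_*((P_t^+(\nu))^{\otimes d}),\,(f_{t+1})_*((P_t^-(\nu))^{\otimes d})\big)$ with $P_t^{\pm}(\nu):=(\tfrac12+\nu)P_t^{\pm}+(\tfrac12-\nu)P_t^{\mp}$ the mixture from Lemma~\ref{lem:mixing-inequality}. Then $\phi_t:=\phi(P_t^+,P_t^-)$ is finite for $t\ge1$.

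The heart of the argument is the per-level drift of $\phi$, which I would break into the mixing step and the function step. The mixing map $(P,Q)\mapsto(P(\nu),Q(\nu))$ leaves $\tfrac{p+q}{2}$ invariant, so it does not move the log-barrier term of $\phi$, while Lemma~\ref{lem:mixing-inequality} gives $\SKL(P(\nu),Q(\nu))\le 4\nu^2\,\SKL(P,Q)$; hence $\phi(P_t^+(\nu),P_t^-(\nu))\le\phi_t+\log(4\nu^2)$. The function step is precisely Theorem~\ref{thm:regularized-sdpi} applied to $f_{t+1}$: $\phi_{t+1}\le\phi(P_t^+(\nu),P_t^-(\nu))+\log d-c$. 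Composing the two bounds gives $\phi_{t+1}\le\phi_t+\log(4d\nu^2)-c$. To make Theorem~\ref{thm:regularized-sdpi} legitimately applicable at every step I need the relevant symmetrized divergences to stay below a fixed bound: Lemma~\ref{lem:skl-bounded} supplies $\SKL(P_t^+,P_t^-)\le C$ for any fixed $\nu<1/2$, and then $\SKL(P_t^+(\nu),P_t^-(\nu))\le 4\nu^2C\le C$ as well.

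It remains to choose $\nu_1$ and to untangle the apparent circularity that the constant $c$ of Theorem~\ref{thm:regularized-sdpi} depends on the divergence bound $C$, which in turn depends on $\nu$. I would fix a small $\epsilon_0>0$ with $\nu_c+\epsilon_0<1/2$ (where $\nu_c=1/(2\sqrt d)$ is the KS value), let $C_0$ be the Lemma~\ref{lem:skl-bounded} bound evaluated at $\nu=\nu_c+\epsilon_0$, and let $c=c(d,C_0)>0$ and $\lambda=\lambda(d,C_0)>0$ be the constants of Theorem~\ref{thm:regularized-sdpi} for that bound. Since $4d\nu_c^2=1<e^{c}$ and $\nu\mapsto 4d\nu^2$ is continuous, I can pick $\nu_1\in(\nu_c,\nu_c+\epsilon_0)$ with $4d\nu_1^2<e^{c}$; for this $\nu_1$ all distributions appearing above have symmetrized divergence at most $C_0$, so the drift bound $\phi_{t+1}\le\phi_t-\kappa$ holds with $\kappa:=c-\log(4d\nu_1^2)>0$, and hence $\phi_t\to-\infty$. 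Finally, $\log x+\log(1-x)\le-2\log 2$ for $x\in(0,1)$ makes the barrier term $-\lambda\big(\log\tfrac{p_t+q_t}{2}+\log(1-\tfrac{p_t+q_t}{2})\big)$ nonnegative, so $\log\SKL(P_t^+,P_t^-)\le\phi_t\to-\infty$; thus $\SKL(P_n^+,P_n^-)\to0$ and $\TV(P_n^+,P_n^-)=2s(T,V_n,\varepsilon)\to0$, i.e.\ the scheme fails to solve reconstruction at $\nu_1$, while $4d\nu_1^2>1$ means reconstruction is information-theoretically possible.

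I expect essentially all the genuine difficulty to be already encapsulated in Theorem~\ref{thm:regularized-sdpi} — designing $\phi$ (and tuning $\lambda$) so that it strictly decreases at every level even when the scheme alternates OR-like and AND-like functions that individually resist SKL contraction. Conditional on that, the present argument is bookkeeping, the only mild subtlety being the interlocked choice of $\nu_1$, $C_0$, and $c$ sketched above, which is needed precisely because the drift constant and the divergence bound cannot be chosen independently.
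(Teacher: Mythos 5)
Your proposal is correct and follows essentially the same route as the paper: after discarding constant reconstruction functions, it runs the Lyapunov functional of Theorem~\ref{thm:regularized-sdpi} along the dynamics, splits each level into the mixing step (barrier invariant, $\SKL$ contracts by $4\nu^2$ via Lemma~\ref{lem:mixing-inequality}) and the function step (drift $\log d - c$), and concludes $\phi_t \to -\infty$ hence $\SKL \to 0$. Your extra care about the interlocking of $\nu_1$, $C_0$, and $c$ is a welcome clarification (the paper glosses over it, though Lemma~\ref{lem:skl-bounded} in fact gives a $\nu$-uniform bound for $\nu \le 1/4$, so the circularity is harmless), but it does not change the argument.
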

\begin{proof}
If any of the reconstruction functions $f_t$ is a constant, then clearly reconstruction will fail, so we assume henceforth that all of the reconstruction functions $f_t$ are non-constant. Then, it follows from 
Lemma~\ref{lem:skl-bounded} and Theorem~\ref{thm:regularized-sdpi} that for one step of the dynamics
\begin{align*}
\phi(P^{+}_{n},P^{-}_{n}) 
&\leq \phi(P_{n-1}^{+}(\nu), P_{n-1}^{-}(\nu))+\log{d} - c \\
&= \log\SKL(P_{n-1}^{+}(\nu), P_{n-1}^{-}(\nu)) - \lambda \left(\log \frac{p_{n-1}^{+} + p_{n-1}^{-}}{2} + \log \left(1 - \frac{p_{n-1}^{+} + p_{n-1}^{-}}{2}\right)\right) + \log{d} - c \\
&\leq \log\SKL(P^{+}_{n-1},P^{-}_{n-1}) - \lambda \left(\log \frac{p_{n-1}^{+} + p_{n-1}^{-}}{2} + \log \left(1 - \frac{p_{n-1}^{+} + p_{n-1}^{-}}{2}\right)\right) + \log{4\nu^{2}} + \log{d} - c\\
&\leq L(P^{+}_{n-1},P^{-}_{n-1}) +  \log{4d\nu^{2}} - c, 
\end{align*}
so that
$$\phi(P^{+}_{n},P^{-}_{n}) - \phi(P^{+}_{n-1},P^{-}_{n-1}) \leq \log{4d\nu^{2}} - c < 0 $$
for all sufficiently small $\nu$ with $4 d \nu^{2} > 1$. Therefore, for such choices of $\nu$, $\phi(P_n^{+},P_n^{-})$ tends to $-\infty$ as $n \to \infty$. Since $$\lambda \left(\log \frac{p_n + q_n}{2} + \log\left(1 - \frac{p_n + q_n}{2}\right)\right)$$ is bounded above by an absolute constant, this implies that $\SKL(P_n,Q_n) \to 0$, which gives the desired conclusion. 
\end{proof}

\subsection{Complicated dynamics in the multibit setting}\label{apx:complicated}
The following example illustrates some of the difficult phenomena that appear when the alphabet has size at least $3$:
\begin{example}\label{example:cycling-dynamics}
Consider reconstruction with a 3-state alphabet $\Sigma = \{0,1,2\}$. We define a fixed reconstruction function $f$ inspired by ``intransitive preferences'': (1) when $x$ is supported on $\{0,1\}$, $f$ restricts to OR, i.e. $f(x) = 1$ unless $x = \vec{0}$; (2) when $x$ is supported on $\{1,2\}$, $f(x) = 2$ unless $x = \vec{1}$, and (3) when $x$ is supported on $\{2,0\}$, $f(x) = 0$ unless $x = \vec{2}$. Finally, when $x$ has full support $f$ equals the plurality with ties broken arbitrarily. Then, initially with $P_0,Q_0$ lying near the middle of the simplex, the discrete time dynamical system $(P_t,Q_t)_{t = 0}^{\infty}$ spirals (at a rapidly slowing rate) around the simplex, getting arbitrarily close to the boundary/corners of the simplex without ever hitting them. 
\end{example}

\section{Impossibility of multibit reconstruction near criticality}
\label{sec:appendix-multibit}
\subsection{Proof of Lemma~\ref{lem_approx}}
\label{sec:appendix-proof-linearization}
As mentioned before, Lemma~\ref{lem_approx} is a special
case of the following more general result:
\begin{lem}
\label{lem_approx_detailed}
 Suppose $a_u$, $b_v$ are functions on arbitrary sets $\mathcal{U}$, $\mathcal{V}$ and taking values in $[-1,1]$.
Suppose that $P_U$, $P_V$ are arbitrary distributions, under which $a_u$, $b_v$ have zero mean and 
\begin{align*}
 \mu=\E|a_U|^p
 =\E|b_V|^p
 \le \frac{1}{2^{1+p}}.
\end{align*}
Consider two random variables,
\begin{align*}
\bar{S}&=a_{\bar{U}}+b_{\bar{V}},\, P_{\bar{U}\bar{V}}(u,v)\sim P_U(u)P_V(v);
\\
 \hat{S}&=\frac{a_{\hat{U}}+b_{\hat{V}}}{1+a_{\hat{U}}b_{\hat{V}}},\, P_{\hat{U}\hat{V}}(u,v)\sim P_U(u)P_V(v)(1+a_ub_v).
\end{align*}
Then for any $p\ge 1$,
\begin{align*}
 W_p^p(\hat{S},\bar{S})\le \mu\alpha_p(\mu)
\end{align*}
where $\alpha_p(\cdot)$ is a function satisfying $\lim_{\mu\to0}\alpha_p(\mu)=0$.
Explicitly, 
\begin{align}
  \alpha_p(\mu):=
  \mu^{\frac{1}{1+p}}
\left(4+\frac{8}{(1-\mu^{\frac{1}{1+p}})^{\frac{1}{p}}}\right)^p.
\label{e_alpha}
\end{align}
\end{lem}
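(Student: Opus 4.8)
The plan is to produce an explicit coupling of $\hat S$ and $\bar S$ and bound $\E|\hat S-\bar S|^p$ from above; since $W_p^p(\hat S,\bar S)$ is the infimum of $\E|\hat S-\bar S|^p$ over couplings, this suffices. After a routine discretization of $P_U,P_V$ (or by invoking the existence of maximal couplings in this generality), the first step is to note that the $\mathcal U$-marginal of $P_{\hat U\hat V}(u,v)=P_U(u)P_V(v)(1+a_ub_v)$ is exactly $P_U$ --- this uses $\E b_V=0$ --- so the conditional law of $\hat V$ given $\hat U=u$ is $\hat P_{V\mid u}(v):=P_V(v)(1+a_ub_v)$. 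I would then couple $\bar U=\hat U=U\sim P_U$, and, conditionally on $U=u$, couple $\bar V\sim P_V$ with $\hat V\sim\hat P_{V\mid u}$ maximally. A short computation gives $\mathrm{TV}(P_V,\hat P_{V\mid u})=\tfrac12|a_u|\,\E|b_V|=:\mathrm{tv}_u$, and shows that the two residual laws each put mass proportional to $P_V(v)|a_u||b_v|$ on $v$.

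Next I would split $\E|\hat S-\bar S|^p$ over the events $\{\hat V=\bar V\}$ (the ``equal'' branch) and $\{\hat V\neq\bar V\}$ (the ``differ'' branch). On the equal branch, with $\hat V=\bar V=v$, one has $\bar S=a_u+b_v$ and $\hat S=\frac{a_u+b_v}{1+a_ub_v}$; the elementary identity
\[(1+ab)^2-(a+b)^2=(1-a^2)(1-b^2)\ge 0\qquad(|a|,|b|\le 1)\]
gives $\frac{|a+b|}{1+ab}\le 1$, hence $|\bar S-\hat S|=|a_u+b_v|\cdot\frac{|a_ub_v|}{1+a_ub_v}\le|a_ub_v|$. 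Since a maximal coupling puts at most $P_V(v)$ on $\{\bar V=\hat V=v\}$, the equal branch contributes at most $\E_U[|a_U|^p]\,\E_V[|b_V|^p]=\mu^2$.

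The main obstacle is the differ branch, where the crude estimate $|\hat S-\bar S|\le|\hat S|+|\bar S|\le 3$ is too wasteful: on average $\mathrm{tv}_u$ is only of order $\E|a_U|\,\E|b_V|\le\mu^{2/p}$, so a ``constant-size'' estimate yields a term of order $\mu^{2/p}$, which for $p\ge 2$ already exceeds the target $\mu\,\alpha_p(\mu)$. The fix has two ingredients. First, writing $\bar S=a_u+b_{v_1}$ and $\hat S=\frac{a_u+b_{v_2}}{1+a_ub_{v_2}}$ with $v_1,v_2$ drawn from the two residual laws, the identity above gives $\big|\hat S-(a_u+b_{v_2})\big|\le|a_ub_{v_2}|\le|b_{v_2}|$, so $|\hat S-\bar S|\le|b_{v_1}|+2|b_{v_2}|$ --- crucially with \emph{no additive constant}. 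Second, the residual law of $b_{v_1}$ (resp.\ of $b_{v_2}$) has $p$-th moment at most $\frac{|a_u|}{\mathrm{tv}_u}\,\E|b_V|^{p+1}\le\frac{2\mu}{\E|b_V|}$ (using $|b_V|\le 1$), and the weight $\mathrm{tv}_u=\tfrac12|a_u|\,\E|b_V|$ with which the differ branch occurs exactly cancels the offending $1/\E|b_V|$; together with $\E|a_U|,\E|b_V|\le\mu^{1/p}$ (power mean) this makes the differ branch contribute $O_p\big(\mu\,\E|a_U|\big)=O_p\big(\mu^{(p+1)/p}\big)$.

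Putting the branches together gives $W_p^p(\hat S,\bar S)\le\mu^2+C_p\,\mu^{(p+1)/p}$ for an explicit $C_p$, which one checks is at most $4^p$ for $p\ge1$; since $\mu<1$ implies $\mu^2\le\mu^{(p+1)/p}$ and $\mu^{1/p}\le\mu^{1/(1+p)}$, this is at most $4^p\mu^{(p+1)/p}=4^p\mu\cdot\mu^{1/p}\le\mu\cdot\mu^{1/(1+p)}\big(4+8(1-\mu^{1/(1+p)})^{-1/p}\big)^p=\mu\,\alpha_p(\mu)$, which is the claim (in fact with room to spare --- the natural bound is $\mu^{(p+1)/p}$ rather than $\mu\,\alpha_p(\mu)\asymp\mu^{(p+2)/(p+1)}$). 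I expect the bookkeeping in the differ branch --- in particular arranging that no term of order $\mu^{2/p}$ survives, via the cancellation between $\mathrm{tv}_u$ and the residual moments --- to be the delicate part; everything else is routine.
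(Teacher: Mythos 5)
Your proof is correct, but it takes a genuinely different route from the paper's. The paper never constructs an explicit coupling by hand: it truncates to the set $\mathcal{T}=\{(u,v):|a_ub_v|\le\delta\}$, controls the truncation error by Markov's inequality, compares the truncated variables via a generic lemma (Proposition~\ref{prop_w}) bounding $W_p$ in terms of $\bigl\|\tfrac{d\mu}{d\nu}-1\bigr\|_\infty$ (which on $\mathcal{T}$ is at most $\delta/(1-\delta)$), strings the pieces together with the triangle inequality for $W_p$, and finally optimizes $\delta=\mu^{1/(1+p)}$ --- that optimization is exactly where the exponent $\tfrac{1}{1+p}$ in $\alpha_p$ comes from. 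You instead exploit that $\E b_V=0$ makes the $\mathcal{U}$-marginals agree, build a conditional maximal coupling of $P_V$ and $P_V(\cdot)(1+a_ub_\cdot)$, and split into the diagonal and off-diagonal branches. Your key observations --- that on the diagonal $|\bar S-\hat S|\le|a_ub_v|$, that off the diagonal the $a_u$'s cancel so $|\hat S-\bar S|\le|b_{v_1}|+2|b_{v_2}|$ with no additive constant, and that the branch weight $\tfrac12|a_u|\,\E|b_V|$ cancels the $1/\E|b_V|$ in the residual moments --- have no counterpart in the paper, and they buy you something: a bound of order $\mu^{(p+1)/p}$ with no truncation parameter to tune, which is strictly sharper than the paper's $\mu\,\alpha_p(\mu)\asymp\mu^{(p+2)/(p+1)}$ and, as you verify, implies the stated inequality with room to spare. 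The paper's approach is more modular (the truncation-plus-density-ratio template and Proposition~\ref{prop_w} are reused elsewhere and require no structure beyond boundedness of the tilting factor), while yours is more hands-on but quantitatively tighter; the only caveats are bookkeeping ones you already flag --- conditional independence of the two residual draws in the standard maximal coupling, and the degenerate cases $a_u=0$ or $\E|b_V|=0$, both of which are harmless.
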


\begin{proof}
Consider any $\delta\in (0,0.5]$, and define
\begin{align*}
 \mathcal{T}:=\{(u,v)\colon |a_ub_v|\le \delta\}.
\end{align*}
Then by the Markov inequality,
\begin{align*}
 \mathbb{P}[(\bar{U},\bar{V})\in\mathcal{T}^c]
 &\le \delta^{-p}\E|a_{\bar{U}}b_{\bar{V}}|^p
 \\
 &\le \delta^{-p}\mu^2
\end{align*}
and 
\begin{align*}
  \mathbb{P}[(\hat{U},\hat{V})\in\mathcal{T}^c]
  \le
  2\mathbb{P}[(\bar{U},\bar{V})\in\mathcal{T}^c]
  \le 2 \delta^{-p}\mu^2.
\end{align*}
Define random variables
\begin{align*}
\bar{S}'&=\bar{S}1\{(\bar{U},\bar{V})\in\mathcal{T}\},
\\
 \hat{S}'&=\hat{S}1\{(\hat{U},\hat{V})\in\mathcal{T}\}.
\end{align*}
Then since $|\bar{S}|\le 2$ and $|\hat{S}|\le 1$ with probability one, we see that
\begin{align*}
W_p^p(\bar{S}',\bar{S})&\le 2^p\delta^{-p}\mu^2;
\\
 W_p^p(\hat{S}',\hat{S})&\le 2\delta^{-p}\mu^2,
\end{align*}
so the problem is reduced to bounding 
$W_p^p(\hat{S}',\bar{S}')$.
Define $S''=(a_{\hat{U}}+b_{\hat{V}})1\{(\hat{U},\hat{V})\in\mathcal{T}\}$.
Then using Proposition~\ref{prop_w} with $\tau=\frac{\delta}{1-\delta}$ we have
\begin{align*}
 W_p(S'',\bar{S}')
 &\le \left(\frac{\delta}{1-\delta}\right)^{1/p}((\E[|S''|^p])^{1/p}
 +(\E[|\bar{S}'|^p])^{1/p})
\\
&\le 6\left(\frac{\delta}{1-\delta}\right)^{1/p}\mu^{1/p}
 \end{align*}
 where the last step used the facts that
\begin{align*}
\E[|\bar{S}'|^p]
&\le \E[|\bar{S}|^p]
\\
&=\E[|a_{\bar{U}}+b_{\bar{V}}|^p]
\\
&\le 2^p\mu;
\\
\E[|S''|^p]
&\le \E[|a_{\hat{U}}+b_{\hat{V}}|^p] 
\\
&\le 2\E[|a_{\bar{U}}+b_{\bar{V}}|^p] 
\\
&\le 2^{p+1}\mu
\\
&\le 4^p\mu.
\end{align*}
 Moreover,
 \begin{align*}
 W_p^p(S'',\hat{S}')
 &\le \E\left[\left|\frac{1}{1+a_{\hat{U}}b_{\hat{V}}}-1\right|^p|a_{\hat{U}}+b_{\hat{V}}|^p1\{(\hat{U},\hat{V})\in\mathcal{T}\}\right]
 \\
  &\le
 \left(\frac{\delta}{1-\delta}\right)^p\E\left[|a_{\hat{U}}+b_{\hat{V}}|^p1\{(\hat{U},\hat{V})\in\mathcal{T}\}\right]
 \\
 &\le
  \left(\frac{\delta}{1-\delta}\right)^p\E\left[|a_{\hat{U}}+b_{\hat{V}}|^p\right]
  \\
  &\le 
  \left(\frac{2\delta}{1-\delta}\right)^p\mu.
\end{align*}
Finally applying the triangle inequality to the above bounds, we obtain
\begin{align*}
 W_p(\hat{S},\bar{S})/\mu^{1/p}
 &\le
  \mu^{1/p}\delta^{-1}(2+2^{1/p})
  +6\left(\frac{\delta}{1-\delta}\right)^{1/p}
  +\frac{2\delta}{1-\delta}
  \\
  &\le \frac{4\mu^{1/p}}{\delta}+8\left(\frac{\delta}{1-\delta}\right)^{1/p}.
 \end{align*}
 Choosing $\delta=\mu^{\frac{1}{1+p}}$ (which satisfies $\delta\le1/2$ since we assumed $\mu\le \frac{1}{2^{1+p}}$) gives
 \begin{align}
 W_p(\hat{S},\bar{S})/\mu^{1/p}\le 
  \mu^{\frac{1}{p(1+p)}}
  \left(4+\frac{8}{(1-\mu^{\frac{1}{1+p}})^{\frac{1}{p}}}\right).
  \label{e59}
\end{align}
\end{proof}

\begin{prop}\label{prop_w}
 Let $X\sim P$, $Y\sim Q$ be real-valued random variables.
 Let $\mu$ and $\nu$ respectively be the restrictions of $P$ and $Q$ on $\mathbb{R}\setminus\{0\}$ (that is, $\mu(\mathcal{A})=P(\mathcal{A}\setminus \{0\})$ for any $\mathcal{A}\subseteq\mathbb{R}$, and similarly for $\nu$).
Suppose that $\mu$ and $\nu$ are mutually absolutely continuous.
Let
\begin{align*}
\tau:=\max\left\{\left\|\frac{{\rm d}\mu}{{\rm d}\nu}-1\right\|_{\infty},\,
\left\|\frac{{\rm d}\nu}{{\rm d}\mu}-1\right\|_{\infty} \right\}.
\end{align*}
Then for any $p\ge 1$,
\begin{align*}
 W_p(X,Y)\le \tau^{1/p}((\E[|X|^p])^{1/p}+(\E[|Y|^p])^{1/p}).
\end{align*}
\end{prop}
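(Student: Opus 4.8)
The plan is to exhibit an explicit admissible coupling $\pi$ of $X$ and $Y$ and bound its transport cost directly; the factor $\tau^{1/p}$ will come from the fact that the only place where $\pi$ lets $X\neq Y$ is a ``residual'' piece whose $p$-th moment is at most $\tau$ times that of $P$ (resp.\ $Q$). First I would fix a common dominating measure $\lambda$ of $\mu$ and $\nu$ on $\R\setminus\{0\}$ (say $\lambda=\mu+\nu$) with densities $f=d\mu/d\lambda$, $g=d\nu/d\lambda$; the two bounds defining $\tau$ translate into the pointwise estimates $(f-g)_+\le\tau\min(f,g)$ and $(g-f)_+\le\tau\min(f,g)$. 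Let $\rho$ be the ``overlap'' measure with $\lambda$-density $\min(f,g)$, so $\rho\le\mu$ and $\rho\le\nu$, and define the residuals $R_X:=(\mu-\rho)+P(\{0\})\delta_0$ and $R_Y:=(\nu-\rho)+Q(\{0\})\delta_0$. Because $P$ and $Q$ are both probability measures sharing the common part $\rho$, the residuals $R_X$ and $R_Y$ automatically have equal total mass $m:=1-\rho(\R\setminus\{0\})$.

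Next I would take $\pi:=(\mathrm{id},\mathrm{id})_*\rho + \gamma$, where the first term places $\rho$ on the diagonal of $\R^2$ and $\gamma$ is \emph{any} coupling of $R_X$ and $R_Y$ (e.g.\ the product $m^{-1}R_X\otimes R_Y$; the case $m=0$ is trivial since then $X=Y$ a.s.). A one-line check shows the marginals of $\pi$ are $\rho+R_X=P$ and $\rho+R_Y=Q$. Since $X=Y$ on the diagonal part, $W_p^p(X,Y)\le\int|x-y|^p\,d\pi=\int|x-y|^p\,d\gamma$, and Minkowski's inequality in $L^p(\gamma)$ (valid for the finite measure $\gamma$, whatever its total mass) bounds this by $\big((\int|x|^p\,dR_X)^{1/p}+(\int|y|^p\,dR_Y)^{1/p}\big)^p$. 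Finally, the atom of $R_X$ at $0$ contributes nothing, and on $\R\setminus\{0\}$ the $\lambda$-density of $R_X$ is $(f-g)_+\le\tau f$, so $\int|x|^p\,dR_X\le\tau\int|x|^p f\,d\lambda\le\tau\,\E|X|^p$; symmetrically $\int|y|^p\,dR_Y\le\tau\,\E|Y|^p$. Chaining these three facts gives exactly $W_p(X,Y)\le\tau^{1/p}\big((\E|X|^p)^{1/p}+(\E|Y|^p)^{1/p}\big)$.

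There is no serious obstacle here: the bound is really just ``triangle inequality on the mismatch part, whose mass is $O(\tau)$''. The only thing that needs a little care — and the reason the proposition is phrased in terms of $\mu,\nu$ restricted to $\R\setminus\{0\}$ rather than $P,Q$ — is the atom at the origin: since $P(\{0\})\ne Q(\{0\})$ in general, these atoms cannot be matched along the diagonal and are instead folded into the residuals $R_X,R_Y$, which is costless precisely because $|0|^p=0$, so they never enter the moment bounds. (This is exactly the situation in the intended use within Lemma~\ref{lem_approx_detailed}, where $S''$ and $\bar S'$ acquire atoms of differing mass at $0$ from the truncation $\mathbf 1\{\,\cdot\in\mathcal T\}$, and $\tau=\delta/(1-\delta)$ arises from the bound $|a_ub_v|\le\delta$ on $\mathcal T$.)
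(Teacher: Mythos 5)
Your proof is correct and is essentially the paper's argument: both identify the overlap measure $\mu\wedge\nu$, match it along the diagonal, absorb the mismatched mass (including the atoms at $0$) into a residual whose $p$-th moment is at most $\tau$ times that of $X$ (resp.\ $Y$), and note that the atom at the origin is costless. The only cosmetic difference is that the paper couples $X$ and $Y$ separately to an intermediate variable $Z\sim(\mu\wedge\nu)+\text{(mass at }0)$ and invokes the triangle inequality for $W_p$, whereas you build a single explicit coupling and apply Minkowski's inequality inside it.
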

\begin{proof}
 Let $Z\sim R$ where $R$ is the measure which equals $\mu\wedge \nu$ on $\mathbb{R}\setminus\{0\}$ and has a point mass at $\{0\}$. 
Here $\mu\wedge\nu$ is the measure having the property that for any $\mathcal{A}$,
\begin{align*}
  R(\mathcal{A})=\mu\left(\mathcal{A}\cap\left\{\frac{{\rm d}\mu}{{\rm d}\nu}\le1\right\}\right)
  +\nu\left(\mathcal{A}\cap\left\{\frac{{\rm d}\mu}{{\rm d}\nu}>1\right\}\right).
\end{align*}
A natural coupling between $X$ and $Z$ is the following:
let $B$ be independent of $X$ and uniformly distributed on $[0,1]$.
Set
\begin{align*}
Z=X1\left\{B\le\frac{{\rm d}(\mu\wedge\nu)}{{\rm d}\mu}(X)\right\}.
\end{align*}
Then 
\begin{align*}
W_p^p(X,Z)
&\le \E[|X-Z|^p]
\\
&\le
\E\left[|X|^p1\left\{B>\frac{{\rm d}(\mu\wedge\nu)}{{\rm d}\mu}(X)\right\}\right]
\\
&\le \E\left[|X|^p1\left\{B>1-\tau\right\}\right]
\\
&\le \tau\E[|X|^p].
\end{align*}
Note that $Z$ is equal in distribution to $Y1\left\{B\le\frac{{\rm d}(\mu\wedge\nu)}{{\rm d}\nu}(Y)\right\}$,
therefore similarly we have $W_p^p(Y,Z)\le \tau\E[|Y|^p]$, and the claim follows from the triangle inequality of the Wasserstein distance.
\end{proof}

\subsection{Proof of Proposition~\ref{prop_approx1}}\label{subsec:approx1}
\begin{proof}[Proof of Proposition~\ref{prop_approx1}]
It suffices to prove the result in the case where $L = \infty$. 
In this case the supremum in \eqref{e_xi} is attained by Belief Propagation
which induces a symmetric distribution on $S_n$ (for all $n$).
Therefore, from the upper bound in 
Lemma~\ref{lem_approx1} we know that as long as $\xi \ne 0$,
\begin{equation*} \frac{1}{2(1 - 2\varepsilon)^{2}} \le 1 - \xi/4, \end{equation*}
and rearranging gives the result.
\end{proof}

\begin{lem}\label{lem_approx1}
 Suppose $a_u$, $b_v$ are functions on arbitrary sets $\mathcal{U}$, $\mathcal{V}$ and taking values in $[-1,1]$.
Suppose that $P_U$, $P_V$ are arbitrary distributions, under which $a_u$, $b_v$ have zero mean and 
equal variance, i.e. $\E[a_U]^2 =\E[b_V]^2$.
Consider two random variables,
\begin{align*}
\bar{S}&=a_{\bar{U}}+b_{\bar{V}},\, P_{\bar{U}\bar{V}}(u,v)\sim P_U(u)P_V(v);
\\
 \hat{S}&=\frac{a_{\hat{U}}+b_{\hat{V}}}{1+a_{\hat{U}}b_{\hat{V}}},\, P_{\hat{U}\hat{V}}(u,v)\sim P_U(u)P_V(v)(1+a_ub_v).
\end{align*}
Finally, suppose that $a_U$ and $b_V$ are symmetric random variables, i.e., 
$a_U$ equals $-a_U$ in distribution where $U\sim P_U$, and similarly for $b_V$.
Then 
\begin{align}
1-\frac{1}{2}\E[\bar{S}^2]
\le \frac{\E[\hat{S}^2]}{\E[\bar{S}^2]}\le 1-\frac{1}{4}\E[\bar{S}^2]. 
\label{e_symmetric}
\end{align}
\end{lem}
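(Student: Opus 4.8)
The plan is to reduce both inequalities in~\eqref{e_symmetric} to an elementary two-sided estimate for a single scalar expectation. Write $\sigma^{2}:=\E[a_{U}^{2}]=\E[b_{V}^{2}]$ for the common second moment (this is what the zero-mean, equal-variance hypothesis provides), and assume $\sigma^{2}>0$ so that the ratio in~\eqref{e_symmetric} is defined. First I would rewrite both moments as expectations under the \emph{product} law $P_{U}\otimes P_{V}$. Since $\bar U$ and $\bar V$ are independent and $\E[a_{\bar U}]=\E[b_{\bar V}]=0$, the cross term vanishes and $\E[\bar S^{2}]=\E[a_{\bar U}^{2}]+\E[b_{\bar V}^{2}]=2\sigma^{2}$. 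For $\hat S$, undoing the change of measure gives
\[
\E[\hat S^{2}]=\sum_{u,v}P_{U}(u)P_{V}(v)(1+a_{u}b_{v})\,\frac{(a_{u}+b_{v})^{2}}{(1+a_{u}b_{v})^{2}}=\E_{\bar U,\bar V}\!\left[\frac{(a_{\bar U}+b_{\bar V})^{2}}{1+a_{\bar U}b_{\bar V}}\right].
\]
On the event $\{1+a_{\bar U}b_{\bar V}=0\}$ one has $|a_{\bar U}|=|b_{\bar V}|=1$, so $(1-a_{\bar U}^{2})(1-b_{\bar V}^{2})=0$ and $(a_{\bar U}+b_{\bar V})^{2}=0$; reading every $0/0$ below as $0$, this set is irrelevant throughout.

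The crux is the elementary identity $(a+b)^{2}=(1+ab)^{2}-(1-a^{2})(1-b^{2})$, valid for all reals, which gives $\frac{(a+b)^{2}}{1+ab}=(1+ab)-\frac{(1-a^{2})(1-b^{2})}{1+ab}$; combining this with $\E[1+a_{\bar U}b_{\bar V}]=1$ yields
\[
\E[\hat S^{2}]=1-\E_{\bar U,\bar V}\!\left[\frac{(1-a_{\bar U}^{2})(1-b_{\bar V}^{2})}{1+a_{\bar U}b_{\bar V}}\right].
\]
Now I would invoke the symmetry of $a_{U}$: conditioning on $b_{\bar V}$ and pairing the value $a_{\bar U}$ with $-a_{\bar U}$ turns $\frac{1}{1+ab}$ into $\frac12\bigl(\frac{1}{1+ab}+\frac{1}{1-ab}\bigr)=\frac{1}{1-a^{2}b^{2}}$, so that $\E[\hat S^{2}]=1-R$ where
\[
R:=\E_{\bar U,\bar V}\!\left[\frac{(1-a_{\bar U}^{2})(1-b_{\bar V}^{2})}{1-a_{\bar U}^{2}b_{\bar V}^{2}}\right].
\]

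It then remains to sandwich $R$. For the lower bound, $1-a^{2}b^{2}\le 1$ pointwise, hence $R\ge\E[(1-a_{\bar U}^{2})(1-b_{\bar V}^{2})]=(1-\sigma^{2})^{2}$ by independence. For the upper bound, write $\frac{1}{1-a^{2}b^{2}}=1+\frac{a^{2}b^{2}}{1-a^{2}b^{2}}$ and note $(1-a^{2})(1-b^{2})\le 1-a^{2}b^{2}$, because $a^{2}+b^{2}\ge 2|ab|\ge 2a^{2}b^{2}$ (using $|ab|\le 1$); therefore $\frac{(1-a^{2})(1-b^{2})}{1-a^{2}b^{2}}\le 1$ and $R\le(1-\sigma^{2})^{2}+\E[a_{\bar U}^{2}b_{\bar V}^{2}]=(1-\sigma^{2})^{2}+\sigma^{4}=1-2\sigma^{2}+2\sigma^{4}$. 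Substituting $\E[\hat S^{2}]=1-R$, $\E[\bar S^{2}]=2\sigma^{2}$, and the bounds $(1-\sigma^{2})^{2}\le R\le 1-2\sigma^{2}+2\sigma^{4}$ gives
\[
1-\sigma^{2}\;\le\;\frac{\E[\hat S^{2}]}{\E[\bar S^{2}]}=\frac{1-R}{2\sigma^{2}}\;\le\;1-\frac{\sigma^{2}}{2},
\]
which is exactly~\eqref{e_symmetric} since $\sigma^{2}=\tfrac12\E[\bar S^{2}]$. I do not expect a genuine obstacle here: the whole argument consists of the two algebraic moves above --- the $(a+b)^{2}$ identity and the symmetrization of $\frac{1}{1\pm ab}$ --- together with two one-line inequalities for $R$; the only mild care needed is the measure-theoretic bookkeeping on the boundary $\{|a_{\bar U}b_{\bar V}|=1\}$, which as noted contributes nothing.
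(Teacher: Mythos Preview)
Your proof is correct and essentially follows the paper's approach: both rewrite $\E[\hat S^{2}]$ as an expectation under the product law, apply the symmetry $a_{\bar U}\mapsto -a_{\bar U}$ to replace $1/(1\pm a b)$ by $1/(1-a^{2}b^{2})$, and then finish with the two elementary inequalities $(1-a^{2})(1-b^{2})\ge 0$ and $a^{2}+b^{2}\ge 2a^{2}b^{2}$. The only cosmetic difference is that the paper works directly with the difference $\E[\bar S^{2}]-\E[\hat S^{2}]=\E\bigl[\tfrac{a^{2}b^{2}(2-a^{2}-b^{2})}{1-a^{2}b^{2}}\bigr]$, whereas you use the identity $(a+b)^{2}=(1+ab)^{2}-(1-a^{2})(1-b^{2})$ to write $\E[\hat S^{2}]=1-R$; the resulting bounds are identical.
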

\begin{proof}
Since 
\begin{align*}
\E[\hat{S}^2]
&=\E\left[\left(\frac{a_{\hat{U}}+b_{\hat{V}}}{1+a_{\hat{U}}b_{\hat{V}}}\right)^2\right]
\\
&=\E\left[\frac{\left(a_{\bar{U}}+b_{\bar{V}}\right)^2}{1+a_{\bar{U}}b_{\bar{V}}}\right],
\end{align*}
we can compute that 
\begin{align}
 \E[\bar{S}^2]-\E[\hat{S}^2]
=\E\left[\frac{(a_{\bar{U}}+b_{\bar{V}})^2
a_{\bar{U}}b_{\bar{V}}}
{1+a_{\bar{U}}b_{\bar{V}}}\right].
\label{e20}
\end{align}
Under the symmetric distribution assumption, 
we can replace $a_{\bar{U}}$ with $-a_{\bar{U}}$ in the above without changing the value of the expectation, resulting in
\begin{align}
 \E[\bar{S}^2]-\E[\hat{S}^2]
=\E\left[\frac{-(a_{\bar{U}}-b_{\bar{V}})^2
a_{\bar{U}}b_{\bar{V}}}
{1-a_{\bar{U}}b_{\bar{V}}}\right].
\label{e21}
\end{align}
Adding \eqref{e20} and \eqref{e21} and dividing by $2$,
we obtain
\begin{align}
 \E[\bar{S}^2]-\E[\hat{S}^2]
&=\E\left[\frac{
a_{\bar{U}}^2b_{\bar{V}}^2
(2-a_{\bar{U}}^2-b_{\bar{V}}^2)}
{1-a_{\bar{U}}^2b_{\bar{V}}^2}\right]
\label{e_ss}
\\
&\ge \E\left[
a_{\bar{U}}^2b_{\bar{V}}^2
\right]
\label{e22}
\\
&=\frac{\E^2[\bar{S}^2]}{4} \label{e23}
\end{align}
where \eqref{e22} follows from 
\begin{align*}
2-a_{\bar{U}}^2-b_{\bar{V}}^2
-(1-a_{\bar{U}}^2b_{\bar{V}}^2)
=(1-a_{\bar{U}}^2)(1-b_{\bar{V}}^2)\ge 0
\end{align*}
and \eqref{e23} from the assumption of equal variances. 
This proves the second inequality in \eqref{e_symmetric}.
To prove the first inequality in \eqref{e_symmetric},
note that
\begin{align*}
\frac{1-a_{\bar{U}}^2b_{\bar{V}}^2}{2-a_{\bar{U}}^2-b_{\bar{V}}^2}
&=1-\frac{(1-a_{U}^2)(1-b_{\bar{V}}^2)}{2-a_{\bar{U}}^2-b_{\bar{V}}^2}
\\
&\ge  1-\frac{(1-a_{U}^2)(1-b_{\bar{V}}^2)}{(1-a_{\bar{U}}^2)^2+(1-b_{\bar{V}}^2)^2}
\\
&\ge \frac{1}{2}
\end{align*}
and apply \eqref{e_ss}.
\end{proof}
\subsection{Proof of Lemma~\ref{lem_fourth}}
\label{sec:appendix-fourth-moment-proof}
\begin{proof}
Suppose $\varepsilon_1$ is sufficiently close to $\varepsilon_c$ so that
\begin{align}
\frac{2\lambda^4(\varepsilon_1)}{3}
+10000\alpha_4(\omega(\varepsilon_1))
\le \frac{3}{4}.
\label{e27}
\end{align}
where $\omega$ is defined in \eqref{e_omega}, $\alpha_{4}$ is defined in \eqref{e_alpha}, 
and $\lambda(\varepsilon_1):=\sqrt{2}(1-2\varepsilon_1)$.
We choose $h_1=h_1(\varepsilon,L)$ to be such that
 \begin{align}
 \xi_{h_1-1}(\varepsilon,L)\le 1.1\xi(\varepsilon,L),
 \label{e_h1}
 \end{align}
 and then put $h_2:=h_1+\log_{4/3}\frac{1}{\xi^2}$.
 Our proof strategy is to derive recursive relations for $\mu_n$.
 First, using the triangle inequality for the Wasserstein distance, we have
 \begin{align*}
 \hat{\mu}_n^{1/4}\le \bar{\mu}_n^{1/4}+W_4(\hat{S}_n,\bar{S}_n). 
 \end{align*}
Recall that $\bar{S}_n$ equals in distribution to $(1-2\varepsilon)(S_{n-1}+S_{n-1}')=\lambda\cdot \frac{S_{n-1}+S_{n-1}'}{\sqrt{2}}$ where $S_{n-1}'$ is an independent copy of $S_{n-1}$.
Thus from Lemma~\ref{lem_approx_detailed} we have
\begin{align}
 \hat{\mu}_n
 &\le
 \left(\bar{\mu}_n^{1/4}
 +\frac{\lambda\mu_{n-1}^{1/4}}{\sqrt{2}}
 \alpha_4^{1/4}\left(\frac{\lambda^4\mu_{n-1}}{4}\right)\right)^4
 \\
 &\le \frac{4}{3}\bar{\mu}_n+5000\lambda^4\mu_{n-1}
 \alpha_4\left(\frac{\lambda^4\mu_{n-1}}{4}\right)
 \label{e29}
 \\
 &\le \frac{3}{4}\mu_{n-1}
 +3\xi^2
 \label{e30}
\end{align}
where 
\begin{itemize}
\item \eqref{e29} used the elementary inequality $(x+y)^4\le \frac{4}{3}x^4+20000y^4$ for $(x,y)\in [0,\infty)^2$.
\item To see \eqref{e30}, 
we first use Proposition~\ref{prop_approx1}, 
and the assumption of $n\ge h_1$ to bound
\begin{align}
 \mu_{n-1}&= \E[S_{n-1}^4]
 \\
 &\le \E[S_{n-1}^2]
 \\
 &\le \xi_{n-1}
 \\
 &\le 2\omega(\varepsilon).
 \label{e31}
\end{align}
Then \eqref{e30} follows by expanding $\bar{\mu}_n=\frac{\lambda^4}{2}\mu_{n-1}+\frac{3\lambda^4}{2}\sigma_{n-1}^4$ and performing some basic calculations using the assumptions \eqref{e27} and \eqref{e_h1}.
\end{itemize}
Next, using $\E[\hat{S}|S_n]=S_n$ and Jensen's inequality we have
\begin{align}
 \mu_n
 &\le \hat{\mu}_n. 
 \label{e38}
\end{align}
Combining \eqref{e30} and \eqref{e38},
we obtain the following recursion which holds for any $n\ge h_1$:
\begin{align}
 \mu_n-12\xi^2\le \frac{3}{4}(\mu_{n-1}-12\xi^2).
 \label{e57}
\end{align}
Since $\mu_{h_1}\le 1$, by the definition of $h_2$ we have
\begin{align}
 \mu_n\le 13\xi^2,\quad\forall n\ge h_3.
 \label{e59}
\end{align}
\end{proof}

\subsection{Proof of Lemma~\ref{lem_clt}}
\label{sec:appendix-proof-clt}
\begin{proof}
First, let us specify the choice of $\varepsilon_2$.
Define 
\begin{align}
\eta=\eta(\varepsilon):=\lambda  \left(1+
 \alpha_2^{1/2}\left(\lambda^2\omega(\varepsilon)\right)\right),
\end{align}
(which is a function of $\varepsilon$ since $\lambda$ is a function of $\varepsilon$), where $\omega(\cdot)$ and $\alpha_2(\cdot)$ were defined in \eqref{e_omega} and Lemma~\ref{lem_approx}. 
Then put
\begin{align*}
\varepsilon_2:=\max\{\varepsilon_1,\,\eta^{-1}(2^{1/6}),\,
\omega^{-1}(1/5)\},
\end{align*}
where $\varepsilon_1$ was defined in Lemma~\ref{lem_fourth}. 

Consider any $t\in\{1,2,\dots\}$ (to be optimized later).
Let $h_3=h_3(\varepsilon,L,\delta)>h_2$ (where $h_2$ is from Lemma~\ref{lem_fourth}) be such that $\xi_h\le (1+\delta)\xi$ for any $h\ge h_3$.
Now consider any tree of height $h\ge h_3+t$,
and a reconstruction algorithm such that $\sigma_h^2\ge (1-\delta)\xi$.
To bound the non-Gaussianness of $S$,
\begin{align}
&\quad W_2(S_h,\sigma_{h-t}G)
\nonumber\\
&\le
\sum_{n=h-t+1}^h
W_2\left(\frac{S_n^{(1)}+\dots
+S_n^{(2^{h-n})}}{2^{\frac{h-n}{2}}}
,
\frac{S_{n-1}^{(1)}+\dots
+S_{n-1}^{(2^{h-n+1})}}{2^{\frac{h-n+1}{2}}}\right)
+W_2\left(\frac{S_{h-t}^{(1)}+\dots
+S_{h-t}^{(2^t)}}{2^{\frac{t}{2}}}
,\sigma_{h-t}G\right)
\\
&\le
\sum_{n=h-t+1}^h
W_2\left(S_n,
\frac{S_{n-1}+S_{n-1}'}{\sqrt{2}}\right)
+W_2\left(\frac{S_{h-t}^{(1)}+\dots
+S_{h-t}^{(2^t)}}{2^{\frac{t}{2}}}
,\sigma_{h-t}G\right)
\label{e40}
\end{align}
where we used the triangle inequality and the subadditivity of Wasserstein distance \cite[Proposition~7.17]{villani2003topics}, and $S_n^{(1)},S_n^{(2)},\dots$ denote i.i.d.\ copies of $S_n$.
But note that for each $n\in\{h-t+1,\dots,h\}$,
\begin{align}
W_2\left(S_n,\frac{S_{n-1}+S_{n-1}'}{\sqrt{2}}\right)
&\le 
W_2(S_n,\hat{S}_n)
+W_2(\hat{S}_n,\bar{S}_n)
+W_2\left(\bar{S}_n,\frac{S_{n-1}+S_{n-1}'}{\sqrt{2}}\right)
\\
&\le \sqrt{\hat{\sigma}_n^2-\sigma_n^2}
+\sqrt{\frac{\lambda^2\sigma_{n-1}^2}{2}\cdot\alpha_2\left(\frac{\lambda^2\sigma_{n-1}^2}{2}\right)}
+(\lambda-1)\sigma_{n-1}
\label{e42}
\end{align}
where we used $W_2\left(\bar{S}_n,\frac{S_{n-1}+S_{n-1}'}{\sqrt{2}}\right)\le W_2(\lambda S_{n-1},S_{n-1})\le (\lambda-1)\sigma_{n-1}$.
We next bound the sum of \eqref{e42} over $n$ by showing that $\sigma_n$ cannot increase too fast in $n$.
For any $n\in \{h-t+1,h\}$,
observe that
\begin{align}
 \hat{\sigma}_n
 &\le \bar{\sigma}_n
 +\frac{\bar{\sigma}_n}{\sqrt{2}}
 \alpha_2^{1/2}\left(\frac{\lambda^2\sigma_{n-1}^2}{2}\right)
 \\
 &\le \bar{\sigma}_n
 \left(1+
 \alpha_2^{1/2}\left(\lambda^2\omega(\varepsilon)\right)\right)
 \label{e51}
\end{align}
where \eqref{e51} follows from 
\begin{align}
\sigma_k^2\le \xi_k\le (1+\delta)\xi\le 2\xi\le 2 \omega(\varepsilon),
\quad\forall k=h-t,h-t+1,\dots,h.
\end{align}
using the upper bound of $\omega(\varepsilon)$ from Proposition~\ref{prop_approx1}.
We then have
\begin{align}
 \frac{\sigma_n}{\sigma_{n-1}}
 &\le
\frac{\hat{\sigma}_n}{\sigma_{n-1}}
\label{e61}
 \\
 &= 
 \frac{\hat{\sigma}_n}{\bar{\sigma}_n}
 \cdot  \frac{\bar{\sigma}_n}{\sigma_{n-1}}
 \\
 &= \frac{\lambda\hat{\sigma}_n}{\bar{\sigma}_n}
 \\
 &\le \eta.
 \label{e63}
 \end{align}
Now,
\begin{align}
\sum_{n=h-t+1}^h\sqrt{\hat{\sigma}_n^2-\sigma_n^2}
 &\le
t\sqrt{\frac{1}{t}
 \sum_{n=h-t+1}^h[\hat{\sigma}_n^2-\sigma_n^2]}
 \\
 &\le t\sqrt{\frac{1}{t}
 \sum_{n=h-t+1}^h[\hat{\sigma}_n^2-\sigma_{n-1}^2]+\frac{\sigma_{h-t}^2-\sigma_h^2}{t}}
 \\
 &\le
 t\sqrt{2(\eta^2-1)\xi+\frac{2\delta\xi}{t}}
 \\
 &\le 
 t\sqrt{2(\eta^2-1)\xi}+t\sqrt{2\delta\xi/t}
 \label{e80}
\end{align}
where \eqref{e80} follows since \eqref{e61}-\eqref{e63} shows $\hat{\sigma_n}\le \eta\sigma_{n-1}$.
Moreover,
\begin{align}
\sum_{n=h-t+1}^h 
\sqrt{\frac{\lambda^2\sigma_{n-1}^2}{2}\cdot\alpha_2\left(\frac{\lambda^2\sigma_{n-1}^2}{2}\right)}
&\le
t\sqrt{\lambda^2\xi\cdot\alpha_2\left(\lambda^2\omega(\varepsilon)\right)}
;
\end{align}
\begin{align}
\sum_{n=h-t+1}^h (\lambda-1)\sigma_{n-1}
\le t(\lambda-1)\sqrt{2\xi}.
\end{align}
Also, 
by Rio's CLT (see \cite{rio1998} \cite{rio2009} and also \cite[Theorem~1.1]{bobkov2018berry}) we have
\begin{align}
W_2\left(\frac{S_{h-t}^{(1)}+\dots
+S_{h-t}^{(2^t)}}{2^{\frac{t}{2}}}
,\sigma_{h-t}G\right)
&\le
\frac{c_2\sigma_{h-t}}{\sqrt{2^t}}\sqrt{\E\left[
\left(\frac{S_{h-t}}{\sigma_{h-t}}\right)^4\right]}
\\
&\le c_22^{-t/2}\cdot\frac{\sqrt{13\xi^2}}{\sigma_{h-t}}
\\
&\le c_2\eta^t2^{-t/2}\cdot\sqrt{\frac{13\xi}{1-\delta}}
\label{e66}
\\
&\le c_22^{-t/3}\sqrt{26\xi}
\label{e67}
\end{align}
where $c_2$ is some absolute constant.
\eqref{e66} used the fact that $\sigma_{h-t}\ge \eta^{-t}\sqrt{(1-\delta)\xi}$
which in turn follows from \eqref{e63}.
\eqref{e67} follows since the definition of $\varepsilon_2$ ensures that $\eta\le 2^{1/6}$.
Plugging these into \eqref{e42} and then \eqref{e40},
we obtain
\begin{align}
W_2(S_h,\sqrt{\xi}G)
&\le t\sqrt{2(\eta^2-1)\xi}+t\sqrt{2\delta\xi/t}
+
t\sqrt{\lambda^2\xi\cdot\alpha_2\left(\lambda^2\omega(\varepsilon)\right)}
+t(\lambda-1)\sqrt{2\xi}
+
c_22^{-t/3}\sqrt{26\xi}
\label{e_59}
\\
&\le \sqrt{\xi}\left(c_3(\varepsilon_c-\varepsilon)^{1/12}t
+\sqrt{2\delta t}
+c_22^{-t/3}\right)
\label{e70}
\end{align}
where $c_3$ denotes an absolute constant.
To see \eqref{e70}, note that $\alpha_2(\mu)=O(\mu^{1/3})$ for $\mu<1/2$.
The assumption that $\varepsilon>\varepsilon_2$ implies that 
$\alpha_2(\lambda^2\omega(\varepsilon))=O((\varepsilon_c-\varepsilon)^{1/3})$,
and hence $\eta=1+O((\varepsilon_c-\varepsilon)^{1/6})$.
Finally, choosing $t=\log\frac{1}{\varepsilon_c-\varepsilon}$ yields the desired result.
\end{proof}
\section{Proof of achievability}\label{sec:achievability}
In this section we prove the upper-bound on $L$ in Theorem~\ref{thm:main}.
We adopt the following algorithm, which recursively quantizes $\hat{S}_n$ in the natural way:
\begin{itemize}
\item At the initial reconstruction level $n=0$ (i.e. the leaves), we apply a binary symmetric channel to generate the message $Y_v$ from $X_v$ at each leaf $v$, so that $\sigma_0^2=\frac{2(\lambda-1)}{\lambda^3}$.
\item At each reconstruction level $n\ge1$, (see Figure~\ref{fig_1})
define the reconstruction function $f$ by quantizing so that for $Y_a \in \{1,\ldots,L\}$,
$\E[X_a|Y_{b,c}]\in\left[\frac{2(Y_a-1)}{L}-1,\frac{2Y_a}{L}-1\right]$ with probability $1$ and $\E[X_a|Y_a]$ is symmetric (in the sense that $\E[X_a|Y_a]$ and $-\E[X_a|Y_a]$ have the same distribution).
Note that $Y_a$ is essentially the index of the quantization interval for $\E[X_a|Y_{b,c}]$,
with the boundary case assigned in a way to ensure symmetry (for example, adopt the rule that the interval nearer to $0$ is selected when $\E[X_a|Y_{b,c}]$ is on the boundary between two quantization intervals).
\end{itemize}
Note that this recursive rule ensures that $S_n$ is symmetric.
Conditioned on any $S_n$,
we see that $\hat{S}_n$ is distributed on an interval of length $2/L$.
Thus ${\rm Var}(\hat{S}_n|S_n)\le 1/L^2$,
and 
\begin{align}
\sigma_n^2=\hat{\sigma}_n^2-{\rm Var}(\hat{S}_n|S_n) \ge \hat{\sigma}_n^2-1/L^2.
\label{e11}
\end{align}
Now Lemma~\ref{lem_achiev} given below provides conditions of $\varepsilon$ and $L$ under which $\xi(\varepsilon,L)\ge 2(\lambda-1)/\lambda^3$,
which completes the proof of the upper bound in Theorem~\ref{thm:main}.

\begin{lem}\label{lem_achiev}
 There exists $\varepsilon_3\in (0,\varepsilon_c)$ such that for any $\varepsilon\in (\varepsilon_3,\varepsilon_c)$, and $L\ge\frac{\lambda^3}{2(\lambda-1)^2}$,
the algorithm has the following property:
if $\sigma_{n-1}^2\in[A,B]$ for some $n\in\{1,2,\dots\}$, then the level-$n$ algorithm ensures that $\sigma_n^2\in[A,B]$,
where $A:=\frac{2(\lambda-1)}{\lambda^3}$ and $B:=\frac{4(\lambda^2-1)}{\lambda^4}$.
In particular, the initialization of the algorithm at level-$0$ ensures that $\sigma_n^2\in[A,B]$ for all $n\in\{0,1,\dots\}$.
\end{lem}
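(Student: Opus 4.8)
The plan is to convert the proposed quantized algorithm into a two-sided one-step recursion for the variance $\sigma_n^2=\Var(S_n)$, and then to verify that $[A,B]$ is invariant because $B$ and $A$ are, respectively, the relevant fixed points of the upper and lower ends of that recursion. The key point enabling the first step is that the algorithm is designed so that $S_n$ is symmetric at every level: consequently the pair $(\hat S_n,\bar S_n)$ is \emph{exactly} of the form handled by Lemma~\ref{lem_approx1}, with $a,b$ the two rescaled child scores $(1-2\varepsilon)S_{n-1}$ (symmetric, equal variance), $\hat S_n=(a+b)/(1+ab)$ under the Bayes-tilted joint law, and $\bar\sigma_n^2=\lambda^2\sigma_{n-1}^2$ where $\lambda=\sqrt2(1-2\varepsilon)$. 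Thus Lemma~\ref{lem_approx1} (this is \eqref{e_symmetric} with $\E[\bar S^2]=\bar\sigma_n^2=\lambda^2\sigma_{n-1}^2$) gives $\lambda^2\sigma_{n-1}^2\bigl(1-\tfrac{\lambda^2\sigma_{n-1}^2}{2}\bigr)\le\hat\sigma_n^2\le\lambda^2\sigma_{n-1}^2\bigl(1-\tfrac{\lambda^2\sigma_{n-1}^2}{4}\bigr)$, and combining with $\hat\sigma_n^2-1/L^2\le\sigma_n^2\le\hat\sigma_n^2$ from \eqref{e11} yields, writing $x:=\sigma_{n-1}^2$,
\[
\ell(x)-\tfrac{1}{L^2}\;\le\;\sigma_n^2\;\le\;h(x),\qquad \ell(x):=\lambda^2 x\bigl(1-\tfrac{\lambda^2 x}{2}\bigr),\quad h(x):=\lambda^2 x\bigl(1-\tfrac{\lambda^2 x}{4}\bigr).
\]

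For the upper end I would note that $h$ is a concave quadratic with vertex at $x=2/\lambda^2$, and that $B=4(\lambda^2-1)/\lambda^4\le 2/\lambda^2$ whenever $\lambda^2\le 2$; hence $h$ is increasing on $[A,B]$, and a one-line computation shows $h(B)=B$ precisely because $B$ solves $\lambda^2(1-\lambda^2 B/4)=1$. So $x\in[A,B]$ forces $\sigma_n^2\le h(x)\le h(B)=B$. For the lower end, $\ell$ is a concave quadratic with vertex at $x=1/\lambda^2$, and $B\le 1/\lambda^2$ whenever $\lambda^2\le 4/3$, so $\ell$ is increasing on $[A,B]$; using $A=2(\lambda-1)/\lambda^3$ one obtains the identity $\ell(A)-A=A(\lambda-1)=2(\lambda-1)^2/\lambda^3$. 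Hence $x\in[A,B]$ forces $\sigma_n^2\ge\ell(A)-1/L^2\ge A$ exactly when $1/L^2\le 2(\lambda-1)^2/\lambda^3$, i.e. $L^2\ge\lambda^3/(2(\lambda-1)^2)$; and the hypothesis $L\ge\lambda^3/(2(\lambda-1)^2)$ gives this as soon as $\lambda^3/(2(\lambda-1)^2)\ge 1$. Since $\lambda\to 1^+$ as $\varepsilon\to\varepsilon_c^-$, I would pick $\varepsilon_3<\varepsilon_c$ so that $\lambda^2\le 4/3$, $\lambda^3/(2(\lambda-1)^2)\ge 1$, and $A\le 1$ (the last so that the level-$0$ binary symmetric channel with $\sigma_0^2=A$ exists) all hold on $(\varepsilon_3,\varepsilon_c)$; this establishes the one-step claim, and the final assertion follows by induction starting from $\sigma_0^2=A\in[A,B]$.

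The computations are all elementary, so the part I would think through most carefully is the reduction in the first paragraph: checking that for the quantized algorithm $(\hat S_n,\bar S_n)$ genuinely meet the hypotheses of Lemma~\ref{lem_approx1} — symmetry of each child score from the recursive design, equality of the two children's laws from the symmetry of the tree, and the Bayes formula for combining two conditionally independent $\pm1$-estimates together with the correctly tilted joint distribution — and confirming that the endpoints $A,B$ are exactly the fixed points forced by the recursion. There is no real obstacle beyond this: $A$ and $B$ have been reverse-engineered precisely so that the two-sided recursion above is self-improving on $[A,B]$ under the stated condition on $L$.
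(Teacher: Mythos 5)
Your proof is correct and follows essentially the same route as the paper's: both apply Lemma~\ref{lem_approx1} to get the two-sided variance recursion and exploit that $A$ and $B$ were chosen as fixed points of the lower and upper maps, with monotonicity on $[A,B]$ giving invariance. The only cosmetic difference is that you absorb the quantization loss additively via $\sigma_n^2\ge\hat\sigma_n^2-1/L^2$ and the identity $\ell(A)-A=2(\lambda-1)^2/\lambda^3$, whereas the paper writes it multiplicatively as $\sigma_n^2\ge\hat\sigma_n^2\bigl(1-\tfrac{1}{\hat\sigma_n^2L^2}\bigr)\ge\lambda^{-1}\hat\sigma_n^2$; both reduce to the same requirement $L^2\ge\lambda^3/(2(\lambda-1)^2)$.
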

\begin{proof}
The proof follows from Lemma~\ref{lem_approx1} (see Appendix~\ref{subsec:approx1}).
Consider the following functions of $t\in[0,2]$,
which come from the lower and upper bounds on $\E[\hat{S}^2]$ appearing in Lemma~\ref{lem_approx1}:
\begin{align*}
\phi(t)&:=t(1-t/2),
\\
\psi(t)&:=t(1-t/4).
\end{align*} 
Note that we have chosen $A$ and $B$ to be fixed points of $t\mapsto \lambda^{-1}\phi(\lambda^2t)$ and $t\mapsto \psi(\lambda^2 t)$ respectively,
and $A<B$ holds when $\lambda-1$ is sufficiently small.
Also, note that both $\phi$ and $\psi$ are increasing on $[0,1]$. 
We can make $\lambda-1$ sufficiently small so that $\lambda^2 B\le 1$.
Then
\begin{align*}
\lambda^{-1}\phi(\lambda^2t)&\ge A,
\\
\psi(\lambda^2t)&\le B,
\end{align*}
for any $t\in [A,B]$.
Now if $\sigma_{n-1}^2\in [A,B]$, 
By \eqref{e11} we have
\begin{align}
\sigma_n^2
&=
\hat{\sigma}_n^2\cdot\frac{\sigma_n^2}{\hat{\sigma}_n^2}
\\
&\ge \hat{\sigma}_n^2\left(1-\frac{1}{\hat{\sigma}_n^2L^2}\right)
\label{e17}
\end{align}
Note that the assumption on $L$ guarantees that $1-\frac{1}{\hat{\sigma}_n^2L^2}\ge 1-\frac{1}{\phi(\lambda^2A)L^2}\ge \frac{1}{\lambda}$.
We can therefore continue \eqref{e17} as 
\begin{align*}
\sigma_n^2\ge \phi(\lambda^2\sigma_{n-1}^2)\cdot \lambda^{-1}
\ge A.
\end{align*}
Moreover,
\begin{align*}
\sigma_n^2\le \hat{\sigma}_n^2\le \psi (\lambda^2\sigma_{n-1}^2)\le B.
\end{align*}
\end{proof}

\begin{remark}[Adaptivity to unknown $\epsilon$]
The algorithm we have described (or even the standard BP without any memory constraint), requires as input the noise parameter $\varepsilon$. However, this is not an essential feature: for any $\varepsilon$ bounded away from the threshold, if we just want an algorithm using a finite number of bits, a multilevel version of recursive majority with sufficiently many bits will achieve a nontrivial reconstruction guarantee \cite{mossel1998recursive}. The downside to this method is that the number of bits needed by the multilevel recursive majority algorithm has a suboptimal quantitative dependence on $\varepsilon$; we expect that a variant of the above quantized BP algorithm which also estimates $\varepsilon$ by looking at the noise in the bottom layers of the tree should be able to achieve the best of both worlds. 
\end{remark}

\section{Memory limit implies non-Gaussianness}
\label{sec:appendix-proof-non-gaussian-bound}
Recall that we defined non-Gaussianness in \eqref{e_ng}.
We give a short information-theoretic proof to the following basic result, 
which shows that bounded cardinality (or more generally, bounded entropy) implies that the non-Gaussianness is bounded below.
\begin{lem}\label{lem_quant}
Suppose that $Z$ is a random variable with unit variance.
Then
\begin{align*}
\mathcal{E}(Z)\ge \frac{1}{2\exp(H(Z))}.
\end{align*}
\end{lem}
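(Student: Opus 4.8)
The plan is to exploit a two-sided squeeze on the minimizing $\sigma$: the distance $W_2(Z,\,\E[Z]+\sigma G)$ simultaneously bounds from above how much the Gaussian's ``entropy power'' $\sigma^2$ can exceed a quantity measured by $\exp(2H(Z))$, and forces $\sigma$ to be close to $1$ because $Z$ has unit variance. Combining the two gives the claim.

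First I would fix $\sigma>0$, let $(Z,Y)$ be an optimal quadratic coupling with $Y\sim \E[Z]+\sigma G$, and set $\epsilon^2:=\E[(Z-Y)^2]=W_2^2(Z,\,\E[Z]+\sigma G)$. Let $h(\cdot)$ denote differential entropy. On one hand $h(Y)=\frac12\log(2\pi e\,\sigma^2)$. On the other hand, writing the mutual information as $I(Z;Y)=h(Y)-h(Y\mid Z)=H(Z)-H(Z\mid Y)\le H(Z)$ (this step is where the discreteness, i.e.\ bounded entropy, of $Z$ is used; if $Z$ fails to be discrete the asserted bound is vacuous), we get $h(Y)\le H(Z)+h(Y\mid Z)$. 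Now conditionally on $Z=z$ the random variable $Y$ has variance at most $\E[(Y-z)^2\mid Z=z]$, so the Gaussian maximum-entropy inequality gives $h(Y\mid Z=z)\le\frac12\log\!\big(2\pi e\,\E[(Y-z)^2\mid Z=z]\big)$; averaging over $z$ and applying Jensen's inequality (concavity of $\log$) yields $h(Y\mid Z)\le\frac12\log\!\big(2\pi e\,\E[(Y-Z)^2]\big)=\frac12\log(2\pi e\,\epsilon^2)$. Chaining the three estimates and cancelling $\frac12\log(2\pi e)$ gives $\log(\sigma/\epsilon)\le H(Z)$, i.e.\ $\epsilon\ge \sigma\,e^{-H(Z)}$.

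It remains to lower bound $\sigma$. Since $W_2$ dominates the difference of standard deviations --- for any coupling, $\E[(Z-Y)^2]\ge(\sqrt{\Var(Z)}-\sqrt{\Var(Y)})^2$ by Cauchy--Schwarz after centering --- and here $\Var(Z)=1$, $\Var(Y)=\sigma^2$, we obtain $|1-\sigma|\le\epsilon$, hence $\sigma\ge 1-\epsilon$. Substituting into the previous display, $\epsilon\ge(1-\epsilon)e^{-H(Z)}$, which rearranges to $\epsilon\ge 1/(e^{H(Z)}+1)$ (and this trivially also holds when $\epsilon\ge 1$). Taking the infimum over $\sigma$ and using $e^{H(Z)}\ge 1$ for discrete $Z$ gives $\mathcal{E}(Z)\ge 1/(e^{H(Z)}+1)\ge 1/(2\exp(H(Z)))$, which is even slightly stronger than the stated bound. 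The only mildly delicate point --- and the one I expect to need the most care in writing up --- is the mixed discrete/continuous entropy bookkeeping in the inequality $I(Z;Y)\le H(Z)$, which is precisely where the finite alphabet enters; everything else is a short computation.
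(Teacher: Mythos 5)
Your proposal is correct and follows essentially the same route as the paper's proof: both rest on the sandwich $h(Y)-h(Y\mid Z)=I(Z;Y)\le H(Z)$, the Gaussian maximum-entropy bound $h(Y\mid Z)\le\frac12\log(2\pi e\,\E[(Y-Z)^2])$, and the fact that $W_2$ controls $|1-\sigma|$ via the variance constraint. The only differences are cosmetic — you argue directly rather than by contradiction and thereby extract the marginally sharper constant $1/(e^{H(Z)}+1)$ — and your flagged concern about the discrete/continuous bookkeeping is handled in the paper exactly as you propose.
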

\begin{proof}
We can assume without loss of generality that $\E[Z]=0$.
Suppose that the claim is not true, 
then $W_2(Z,\sigma G)< \frac{1}{2\exp(H(Z))}$ for some $\sigma$.
By the triangle inequality of the Wasserstein distance we have $\sigma\ge 1-\frac{1}{2\exp(H(Z))}$,
and thus
\begin{align*}
h(\sigma G)\ge \frac{1}{2}\log 2\pi e +\log\left(1-\frac{1}{2\exp(H(Z))}\right)
\end{align*}
Moreover, we can find a coupling such that $\E[(Z-\sigma G)^2]\le \frac{1}{4\exp(2H(Z))}$ (that is, the infimum in the definition of the Wasserstein distance should be achievable).
Then
\begin{align}
h(\sigma G|Z)
&=h(\sigma G-Z|Z)
\\
&\le \E\left[\frac{1}{2}\log\left( 2\pi e\E[(\sigma G-Z)^2|Z]\right)\right]
\label{e_maxent}
\\
&\le \frac{1}{2}\log\left( 2\pi e\E[(\sigma G-Z)^2]\right)
\\
&\le \frac{1}{2}\log (2\pi e)-\log 2 -H(Z)
\end{align}
where $h(\cdot|\cdot)$ denotes conditional differential entropy,
and \eqref{e_maxent} uses the fact that under a second moment constraint, the Gaussian distribution maximizes the differential entropy (see e.g. \cite{cover2012elements}).
Then
\begin{align}
H(Z) \ge I(\sigma G;Z)=h(\sigma G)-h(\sigma G|Z)
\ge \log\left(2-\frac{1}{\exp(H(Z))}\right)+H(Z).
\label{e_contra}
\end{align}
If $H(Z)=0$, then $Z$ is a constant and the claim is obviously true; otherwise, \eqref{e_contra} results in a contradiction.
Thus the claim is established.
\end{proof}


\section{BP distributional fixpoint analysis}\label{sec:bp-fixpoint}
\begin{lemma}\label{lem:bp-fixpoint-exists}
Fix broadcasting parameter $\varepsilon > 0$. Let $\rho$ denote the root of a $d$-ary tree of depth $n$, $V_n$ denote the set of leaves, and let $\mathcal{P}_n$ denote the distribution of the random variable $Y_n := \E[X_{\rho} | X_{V_n}]$ under the broadcast process. Then there exists a unique limiting distribution $\mathcal{P}$ such that $\mathcal{P}_n \to \mathcal{P}$ in distribution.
\end{lemma}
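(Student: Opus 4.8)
The plan is to exhibit $\mathcal{P}$ as the distributional limit of a monotone (in an information-theoretic sense) sequence and then upgrade this to honest weak convergence using a coupling/Wasserstein argument. First I would recall the standard recursive description: if $\rho$ has children $u_1,\dots,u_d$, then writing $Y_n^{(i)} := \E[X_{u_i} \mid X_{V_n^{(i)}}]$ for the BP message from the depth-$(n-1)$ subtree rooted at $u_i$, Bayes' rule gives
\[
Y_n = f^{(BP)}_\rho\bigl((1-2\varepsilon)Y_{n-1}^{(1)},\dots,(1-2\varepsilon)Y_{n-1}^{(d)}\bigr),
\]
where the $Y_{n-1}^{(i)}$ are i.i.d.\ copies of $Y_{n-1}$ (conditioned on $X_\rho$, and then one averages over $X_\rho \in \{\pm 1\}$, using the symmetry $X_\rho \mapsto -X_\rho$, $Y \mapsto -Y$). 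Thus the law $\mathcal{P}_n$ is obtained from $\mathcal{P}_{n-1}$ by applying a fixed measurable map $\Phi$ to a $d$-fold product; $\mathcal{P}$ must be a fixed point of $\Phi$.

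The key monotonicity input is that $n \mapsto \E[Y_n^2] = I_2(X_\rho; X_{V_n})$ is non-increasing (data-processing inequality, exactly as used for $\xi_h$ in the body) and bounded below by $0$, hence converges; moreover for $\varepsilon < \varepsilon_c$ it converges to a strictly positive limit. To get convergence of the full law, not just the second moment, I would set up a natural coupling between $Y_n$ and $Y_{n+1}$: realize both on the same depth-$(n+1)$ broadcast tree, with $Y_{n+1} = \E[X_\rho \mid X_{V_{n+1}}]$ and $Y_n = \E[X_\rho \mid X_{V_n}]$ (the latter computed from the labels one level up, which are themselves noisy functions of $X_{V_{n+1}}$). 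Since $Y_n = \E[Y_{n+1} \mid \mathcal{F}_n]$ for the appropriate $\sigma$-algebra $\mathcal{F}_n \subseteq \mathcal{F}_{n+1}$, the sequence $(Y_n)$ is a reverse martingale-type object; more precisely, $(\E[X_\rho \mid X_{V_n}])_n$ along this nested family is a martingale in $n$ read backwards, so by the martingale convergence theorem (using boundedness in $[-1,1]$) it converges almost surely and in $L^2$ to a limit random variable $Y_\infty = \E[X_\rho \mid \mathcal{F}_\infty]$. This gives $W_2(\mathcal{P}_n, \mathcal{P}_{n+1}) \to 0$ and in fact $\mathcal{P}_n \to \mathcal{P} := \mathrm{Law}(Y_\infty)$ in $W_2$, hence in distribution. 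Continuity of $\Phi$ in $W_2$ (the BP update is Lipschitz on $[-1,1]^d$ since the denominator is bounded below away from $0$) then shows $\mathcal{P}$ is a fixed point of $\Phi$.

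For uniqueness: any two limiting distributions are both fixed points of $\Phi$ arising as weak limits of $\mathcal{P}_n$; since we have just shown $\mathcal{P}_n$ itself converges, the limit is unique by definition. (If one instead wants uniqueness among \emph{all} fixed points of $\Phi$, that is false in general below $\varepsilon_c$ — the point mass structure differs — so the statement should be read as uniqueness of the limit of $\mathcal{P}_n$, which is immediate once convergence is established.) The main obstacle is making the coupling of $Y_n$ and $Y_{n+1}$ on a common probability space precise enough that the nested-$\sigma$-algebra/martingale structure is literally correct: one must be careful that the ``labels one level above the leaves'' are a genuine sub-$\sigma$-algebra of the leaf labels one level deeper, which requires building the depth-$(n+1)$ tree and then observing $X_{V_n}$ is conditionally a function of $X_{V_{n+1}}$ only through the broadcast channels on the bottom edges — this is exactly the reverse-martingale filtration and is the one place where a clean measure-theoretic setup is needed rather than a routine estimate.
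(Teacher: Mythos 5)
Your overall strategy---couple all depths on the infinite $d$-ary tree and exploit a martingale-type structure of $Y_n=\E[X_\rho\mid X_{V_n}]$---is in the same spirit as the paper's proof, but the key step as you state it is wrong. You claim $\sigma(X_{V_n})\subseteq\sigma(X_{V_{n+1}})$ and that ``$X_{V_n}$ is conditionally a function of $X_{V_{n+1}}$ through the broadcast channels on the bottom edges.'' Neither holds: the broadcast generates level $n+1$ \emph{from} level $n$ by passing each label through an independent BSC, so $X_{V_{n+1}}$ is a noisy function of $X_{V_n}$, and neither $\sigma$-algebra contains the other. Consequently the identity $Y_n=\E[Y_{n+1}\mid X_{V_n}]$ you need for a forward/nested martingale is false; the true identity goes the other way. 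By the tower property and the Markov property of the broadcast (given $X_{V_m}$, the root is independent of all deeper levels), one has for $n\ge m$
\[
Y_n=\E[X_\rho\mid X_{V_n}]=\E\bigl[\E[X_\rho\mid X_{V_m},X_{V_n}]\,\big|\,X_{V_n}\bigr]=\E[Y_m\mid X_{V_n}],
\]
i.e.\ the \emph{deeper} message is a conditional expectation of the shallower one. This is exactly the relation the paper uses: combined with Jensen's inequality it gives that the moment generating functions $\phi_n(s)=\E[e^{sY_n}]$ are pointwise non-increasing in $n$ (and bounded below by $1$ since $\E[Y_n]=0$), hence converge, which for uniformly bounded random variables yields convergence in distribution.

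Your approach can be repaired, but not as written. The correct martingale formulation uses the \emph{decreasing} tail filtration $\mathcal{G}_n:=\sigma(X_{V_m}:m\ge n)$, for which the Markov property gives $\E[X_\rho\mid\mathcal{G}_n]=\E[X_\rho\mid X_{V_n}]=Y_n$; then $(Y_n,\mathcal{G}_n)$ is a reverse martingale and converges a.s.\ and in $L^2$ by the reverse martingale convergence theorem. Alternatively, and closer to your $W_2$ language, the identity $Y_n=\E[Y_m\mid X_{V_n}]$ gives $\E[(Y_n-Y_m)^2]=\E[Y_m^2]-\E[Y_n^2]$ for $n\ge m$, and since $\E[Y_n^2]$ is non-increasing and bounded below it converges, so $(Y_n)$ is Cauchy in $L^2$ under this coupling. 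Either repair yields strictly more than the paper proves (a.s./$L^2$ convergence rather than only convergence in distribution), which is a genuine bonus---but you must first fix the direction of the conditioning, since the nested-filtration setup you describe does not exist. The surrounding material (the fixed-point discussion of the BP map and the uniqueness remark) is fine but not needed for the lemma as stated.
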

\begin{proof}
We take a natural coupling of all random variables by considering them to live on the infinite $d$-ary tree with root node $\rho$, where $V_n$ is just the set of nodes at depth $n$ on this tree. Observe that 
\[ Y_n = \E[X_{\rho} | X_{V_n}] = \E[\E[X_{\rho} | X_{V_n \cup L_{n - 1}}] | X_{V_n}] = \E[Y_{n - 1} | X_{V_n}] \]
under this coupling, because by the Markov property $X_{V_n}$ is conditionally independent of $X_{\rho}$ given $X_{L_{n - 1}}$. Also observe that $|Y_n| \le 1$ a.s. so the mgf exists.

Observe by Jensen's inequality for conditional expectation we have the following
inequality of mgfs:
\[ \phi_n(s) := \E[e^{s Y_n}] = \E[e^{s \E[Y_{n - 1} | X_{V_n}]}] \le \E[e^{s Y_{n - 1}}] \le \phi_{n - 1}(s). \]
Therefore by monotonicity, there is a limiting function $\phi$ such that $\phi_n \to \phi$ pointwise. By classical results in probability theory \citep{billingsley2013convergence} this implies the corresponding convergence in distribution result.
\end{proof}
\begin{remark}
If reconstruction is possible, then $\mathcal{P}$ is non-Gaussian (it is a symmetric mixture distribution of the laws given the root is $+$ and given the root is $-$, and these conditional laws must have non-zero TV).
\end{remark}
\begin{remark}
The above result shows that far up from the leaves, the marginal distribution of BP messages converges.
This convergence result does not hold for general message passing schemes. For example, if the message passing scheme treats even and odd depths differently (e.g. alternating between AND and OR) then it's easy for the analogous convergence result to fail.
\end{remark}
\begin{proof}[Proof of Corollary~\ref{corr:bp-fixpoint}]
This follows from Lemma~\ref{lem:bp-fixpoint-exists} and Lemma~\ref{lem_clt}, noting that when $L = \infty$ (no memory constraint) BP always achieves the supremum in \eqref{e_xi}. 
\end{proof}

\end{document}